\documentclass[11pt,a4paper,reqno]{amsart}

\makeatletter
\newcommand{\sectionnotoc}[1]{%
  \begingroup
  \let\addcontentsline\@gobblethree
  \section*{#1}%
  \endgroup
}
\makeatother

\usepackage[T1]{fontenc}
\usepackage[utf8]{inputenc}
\usepackage{lmodern}
\usepackage[english]{babel}
\usepackage{microtype}
\usepackage{amsmath,amssymb,amsthm,mathtools}
\usepackage[a4paper,margin=1in]{geometry}
\usepackage{bbm}
\usepackage{color}
\usepackage{xcolor}
\usepackage{mathrsfs}
\usepackage[most]{tcolorbox}

\usepackage[colorlinks=true]{hyperref} 
\usepackage{aliascnt}
\usepackage[capitalize,noabbrev]{cleveref}

\let\mathbb\mathbbm
\DeclareMathOperator{\sech}{sech}

\numberwithin{equation}{section}

\theoremstyle{plain}
\newtheorem{teo}{Theorem}[section]

\crefname{teo}{Theorem}{Theorems}
\Crefname{teo}{Theorem}{Theorems}
\newtheorem*{teo*}{Theorem}

\newaliascnt{cor}{teo}
\newtheorem{cor}[cor]{Corollary}
\aliascntresetthe{cor}

\crefname{cor}{Corollary}{Corollaries}
\Crefname{cor}{Corollary}{Corollaries}
\newtheorem*{cor*}{Corollary}

\newaliascnt{lemma}{teo}
\newtheorem{lemma}[lemma]{Lemma}
\aliascntresetthe{lemma}

\crefname{lemma}{Lemma}{Lemmas}
\Crefname{lemma}{Lemma}{Lemmas}

\newaliascnt{prop}{teo}
\newtheorem{prop}[prop]{Proposition}
\aliascntresetthe{prop}

\crefname{prop}{Proposition}{Propositions}
\Crefname{prop}{Proposition}{Propositions}
\newtheorem*{prop*}{Proposition}

\theoremstyle{definition}
\newaliascnt{definition}{teo}
\newtheorem{definition}[definition]{Definition}
\aliascntresetthe{definition}

\crefname{definition}{Definition}{Definitions}
\Crefname{definition}{Definition}{Definitions}

\newaliascnt{ex}{teo}
\newtheorem{ex}[ex]{Example}
\aliascntresetthe{ex}

\crefname{ex}{Example}{Examples}
\Crefname{ex}{Example}{Examples}

\theoremstyle{remark}
\newaliascnt{remark}{teo}
\newtheorem{remark}[remark]{Remark}
\aliascntresetthe{remark}

\crefname{remark}{Remark}{Remarks}
\Crefname{remark}{Remark}{Remarks}

\newcommand{\cI}{\mathcal I}
\newcommand{\cL}{\mathcal L}

\newcommand{\cC}{\mathcal C}
\newcommand{\cP}{\mathcal P}
\newcommand{\cH}{\mathcal H}
\newcommand{\ii}{\mathrm i}
\newcommand{\tr}{\operatorname{tr}}
\newcommand{\supp}{\operatorname{supp}}
\newcommand{\Skew}{\operatorname{Skew}}
\newcommand{\opnorm}[1]{\left\| #1\right\|}
\newcommand{\trnorm}[1]{\left\|#1\right\|_{\mathrm{tr}}}
\newcommand{\wnorm}[1]{\left\| #1\right\|_{W_1}}
\newcommand{\ind}{\mathbbm 1}
\newcommand{\Diag}{\operatorname{Diag}}
\newcommand{\cost}{\operatorname{cost}}
\newcommand{\proofstep}[1]{\par\medskip\noindent\emph{#1}}

\newtcolorbox{standbybox}{
  colback=gray!8,
  colframe=orange!60!black,
  title={In standby},
  fonttitle=\bfseries,
  coltitle=black,
  breakable
}

\title[Dobrushin]{Rapid mixing for Gibbs samplers for quantum spin systems with all-to-all interactions}
\author{Nicola Bertoni}
\address{Dipartimento di Matematica, Alma Mater Studiorum -- Universit\`a di Bologna, Piazza di Porta San Donato 5, 40126 Bologna, Italy}
\email{nicola.bertoni4@unibo.it}

\author{Giacomo De Palma}
\address{Dipartimento di Matematica, Alma Mater Studiorum -- Universit\`a di Bologna, Piazza di Porta San Donato 5, 40126 Bologna, Italy}
\email{giacomo.depalma@unibo.it}

\begin{document}

    \begin{abstract}
        Understanding when quantum many-body systems rapidly converge to thermal equilibrium is a central problem in quantum statistical mechanics. While spatial locality provides powerful tools for systems with an underlying geometry, fully connected models require estimates that accommodate interactions between every pair of spins. We study Gibbs-state preparation for two-local quantum Hamiltonians on arbitrary interaction graphs, including fully connected systems, via Lindblad dynamics that satisfies Kubo--Martin--Schwinger (KMS) detailed balance. Our approach is based on a quantum analogue of the classical Dobrushin condition, which controls the influence of local perturbations on the rest of the system. Using the quantum Wasserstein distance of order $1$, we quantify how these perturbations propagate under the dynamics and derive contraction estimates in a high-temperature regime controlled by the maximum total interaction strength per site. These estimates require no assumptions on spatial geometry or graph growth and yield mixing times logarithmic in the system size and the inverse error tolerance, uniformly over all initial states.
    \end{abstract}

    \maketitle

    \setcounter{tocdepth}{1}
    \tableofcontents

    \section{Introduction}

    Gibbs states describe the thermal equilibrium of quantum many-body systems.
    Preparing these states is a central task in quantum simulation: it gives
    access to finite-temperature observables and provides a way to investigate
    the equilibrium behaviour of interacting systems. It also raises a
    dynamical question. If a system evolves under a quantum Markov process
    designed to preserve its Gibbs state, how long does it take to approach
    equilibrium, and how does this time depend on the temperature, the number
    of spins, and their interactions?

    Stationarity alone does not answer this question. An evolution may
    preserve the desired state while converging to it too slowly to provide a
    useful preparation procedure. Quantitative mixing bounds therefore connect
    the study of relaxation in open quantum systems with the computational
    problem of Gibbs-state preparation. High temperature is a natural regime
    in which to seek such bounds: thermal fluctuations can suppress the
    influence that a perturbation at one site exerts on the rest of the system.
    Making this intuition precise is particularly challenging for
    noncommuting interactions, where even updates built from single-site
    operators can acquire support on many spins.

    In this paper, we study this problem for two-body quantum spin
    Hamiltonians on arbitrary interaction graphs, including fully connected
    systems. Our main result is a sufficient condition for a mixing time logarithmic in the number of spins; the condition is expressed
    in terms of the total interaction strength incident on a single spin, and requires neither a spatial embedding nor a restriction on
    the growth of neighbourhoods in the interaction graph. It therefore
    allows us to treat sparse and dense systems within the same framework,
    while retaining the strengths of the individual interactions.

    \subsection{Related work}

    In classical spin systems, a standard approach to Gibbs sampling is
    Glauber dynamics: a spin is updated according to its conditional
    distribution given the remaining configuration; see \cite{Martinelli99} for a systematic account. Dobrushin's uniqueness
    theory quantifies how changing one spin affects these conditional
    distributions through an influence matrix
    \cite{Dobrushin68,Dobrushin70}. When the combined influences are
    sufficiently small, local disturbances cannot sustain one another.
    This principle underlies criteria for uniqueness of infinite-volume
    Gibbs measures and, through coupling arguments, rapid convergence of
    finite-volume sampling dynamics. Path coupling provides a general way
    to extend contraction between neighbouring configurations to arbitrary
    pairs \cite{BubleyDyer97}. Further developments clarify the roles of
    influence bounds, graph structure, and matrix norms in uniqueness and
    mixing \cite{Weitz05,DGJ09}.

    The quantum setting requires a replacement for this description in
    terms of conditional probabilities and coupled spin configurations.
    Early work developed quantum stochastic dynamics and ergodicity
    criteria for lattice systems \cite{MZ95}, as well as quantum
    oscillation norms measuring the sensitivity of observables to local
    changes \cite{RW96}. Related oscillation-norm estimates were applied to high-temperature stabilizer dynamics by Temme and Kastoryano \cite{TK15}. Crawford, De Roeck, and Sch\"utz proved stability of exponential relaxation for weakly coupled quantum Markov dynamics on lattices \cite{CDS15}, providing a perturbative antecedent to high-temperature mixing arguments. A transportation metric particularly suited to
    the present problem is the quantum Wasserstein distance of order~$1$
    introduced by De Palma, Marvian, Trevisan, and Lloyd \cite{DTML21}.
    It resolves differences between states into single-site contributions
    and recovers the classical Wasserstein distance associated with Hamming
    distance for states diagonal in the computational basis. This makes it
    possible to track where a perturbation is located and how it propagates.
    De Palma and Rouz\'e developed quantum Dobrushin uniqueness conditions and proved high-temperature contraction in this Wasserstein distance for commuting Gibbs heat-bath dynamics on arbitrary hypergraphs \cite[Sections~3--4]{DPR22}. Gao and Rouz\'e developed a general coarse Ricci curvature framework for quantum transport contraction \cite{GR24}.

    For commuting Hamiltonians, substantial progress has been made by
    relating equilibrium correlations to the convergence of Gibbs samplers.
    Kastoryano and Brand\~ao established a connection between strong
    clustering and a uniform spectral gap \cite{KB16}. Lucia, P\'erez-Garc\'ia, and P\'erez-Hern\'andez developed a related approach through a canonical purified Hamiltonian whose gap is controlled by spatial mixing of the state \cite{LPP25}. A gap bound alone,
    however, does not generally imply a mixing time logarithmic in the
    system size. Quantum logarithmic Sobolev inequalities provide a stronger route to mixing estimates \cite{KT13}. For one-dimensional commuting heat-bath dynamics, Bardet et al. obtained a size-independent modified logarithmic Sobolev bound under a mixing condition and strong quasi-factorisation of relative entropy \cite{BCLPR21}. Capel, Rouz\'e, and Stilck Fran\c{c}a established size-independent modified logarithmic Sobolev inequalities for commuting nearest-neighbour Schmidt dynamics under spatial-mixing assumptions \cite{CRF21}. For finite-range, translation-invariant commuting spin chains, Bardet et al. proved polylogarithmic mixing of Davies dynamics at every fixed positive temperature \cite{BCGLPR24}. Uniform entropy-decay results were obtained by Kochanowski,
    Alhambra, Capel, and Rouz\'e for classes of commuting models, using
    functional inequalities and suitable clustering properties
    \cite{KACR25}. Capel, Gondolf, Kochanowski, and Rouz\'e also developed
    sampling bounds in normalized noncommutative transport metrics under
    correlation-decay assumptions \cite{CGKR25}.

    For general noncommuting Hamiltonians, Chen, Kastoryano, and Gily\'en
    constructed an efficiently implementable quantum Gibbs sampler
    satisfying exact detailed balance \cite{CKG23}. Related thermal-simulation algorithms are presented by Chen, Kastoryano, Brand\~ao, and Gily\'en \cite{CKBG25}. Further exact constructions include the finite-jump KMS samplers of Ding, Li, and Lin \cite{DLL25} and the quantum Glauber and Metropolis constructions of Gily\'en, Chen, Doriguello, and Kastoryano \cite{GCDK26}. Establishing logarithmic convergence of a specified construction requires a separate mixing argument. Rouz\'e,
    Stilck Fran\c{c}a, and Alhambra proved polynomial-time high-temperature thermalization
    \cite{RFA25} and logarithmic mixing for
    Hamiltonians satisfying suitable Lieb--Robinson locality estimates,
    including lattice models and classes of long-range interactions
    \cite{RFA26}. These arguments use control of operator propagation,
    whose general study goes back to Lieb and Robinson \cite{LR72}.
    Graph-based bounds offer another way to retain the structure and
    weights of the interactions in this analysis \cite{CL21}.

    The approach most directly related to ours is due to Bakshi, Liu, Moitra, and Tang \cite{BLMT25}. They formulate a quantum Dobrushin condition using the Wasserstein distance of \cite{DTML21} and show
    how bounds on the propagation of local perturbations imply rapid mixing. Their framework uses transport plans and nonnegative update
    matrices to keep track of the contributions associated with individual sites. They apply it to a detailed-balanced Gibbs sampler, obtaining
    high-temperature mixing conditions involving interaction locality, degree, and graph-growth parameters. More precisely, for interaction terms normalized to have operator norm at most $1$, their continuous-time rapid-mixing theorem assumes the inverse temperature $\beta$ to satisfy
    \[
        \beta < \frac{1}{10000\,\mathfrak K^3\mathfrak b^2\mathfrak d},
    \]
    where $\mathfrak K$ is the maximum support size of an interaction term, $\mathfrak d$ is the maximum degree of the dual interaction graph (and therefore counts other terms), and $\mathfrak b$ controls the exponential growth of neighbourhoods in the site interaction graph \cite[Definitions~3.22--3.23 and Theorem~2.1]{BLMT25}. Here and below, numbered references to this work refer to its full-length arXiv version. Their work also establishes
    decay of conditional mutual information under appropriate locality assumptions. We use their sampler and the Dobrushin-to-mixing framework, and derive a criterion adapted to arbitrary two-body
    interaction graphs.

    A closely related result beyond geometric locality is Bergamaschi's analysis of all-to-all quantum systems \cite{Bergamaschi26}. For general few-body qudit Hamiltonians with controlled interaction strengths, that work proves a system-size independent spectral gap for the sampler of \cite{CKG23} at sufficiently high temperature. The resulting trace-norm mixing bound scales as $\mathcal O(n+\log(1/\varepsilon))$. Our result concerns the different sampler of \cite{BLMT25} and two-body qubit interactions, and establishes $\mathcal O(\log(n/\varepsilon))$ mixing through direct transportation contraction. The distinction is therefore the logarithmic mixing guarantee for this class and sampler, while \cite{Bergamaschi26} covers a broader class of interactions.
    Our proof shares with \cite[Sections~3--4]{Bergamaschi26} the use of connected imaginary-time expansions, tree summation of interaction weights, marked-site estimates, and a comparison with infinite-temperature contraction. We specify these connections at their points of use below; the sampler, transport-plan framework, and Dobrushin-to-mixing criterion are taken from \cite{BLMT25}.

    Other results address complementary regimes. \v{S}m\'id, Meister, Berta, and Bondesan establish rapid mixing for noninteracting reference systems and stability results covering weakly interacting qudits and perturbed non-hopping fermions \cite{SMBB26}. Their perturbative criteria use spatial-decay and volume-growth control, together with oscillator-norm estimates for filtered KMS samplers. Bakshi, Moitra, and Tan establish logarithmic mixing for a field-resonant Gibbs sampler with arbitrary on-site fields \cite{BT26}. Their temperature condition depends on the locality and degree of the intersite interactions and is independent of the field strength. Our criterion instead retains the individual interaction weights; it does not provide an analogous guarantee uniform in arbitrary on-site fields.

    At every fixed finite temperature, Bergamaschi and Chen proved a system-size independent gap for the CKG sampler on one-dimensional finite-range systems, together with a separate preparation algorithm of polylogarithmic circuit depth \cite{BC26}. Kim established logarithmic mixing for one-dimensional finite-range systems using block Petz recovery updates and the quantum Wasserstein distance of order~$1$ \cite{Kim26}. Rouz\'e and Stilck Fran\c{c}a \cite{RSF26} developed quantum Dobrushin--Shlosman conditions for smoothed heat-bath block dynamics constructed by quantum belief propagation, obtaining logarithmic mixing for finite-range chains and stability of a strict contraction condition under small local perturbations on graphs of polynomial volume growth. Capel, P\'erez-Garc\'ia, and Scandi proved an inverse-logarithmic ordinary modified logarithmic Sobolev bound for Petz heat-bath and regularised heat-bath dynamics on finite-range, possibly inhomogeneous open chains \cite{CPS26}.

    For Gaussian CKG dynamics, Chen and Ding obtained a one-dimensional trace-norm mixing bound of $\mathcal O(\log^4 n+\log(1/\varepsilon))$ through spectral estimates beyond the gap \cite{CD26}. Size-independent functional inequalities yielding logarithmic mixing were established by Gao, Guo, and Wang for one-dimensional Gaussian and shifted-Metropolis CKG dynamics \cite{GGW26}, and by Chen, Ding, and Lu for Gaussian CKG dynamics on nearest-neighbour chains \cite{CDL26}. The latter work also covers finite-range systems at sufficiently high temperature in each fixed spatial dimension. These results complement our high-temperature criterion, with geometric hypotheses and sampler constructions different from the arbitrary-graph setting considered here.

    Ramkumar and Soleimanifar study mixing for random sparse matrix Hamiltonians \cite{RS24}; sparsity there concerns the matrix ensemble rather than the degree of a spin interaction graph. In a complementary all-to-all fermionic setting, Cai, Zhan, and Zlokapa proved that the disordered Sachdev--Ye--Kitaev model admits a Gibbs sampler with a system-size independent spectral gap at sufficiently high constant temperature, with high probability over the disorder, yielding polynomial-time preparation \cite{CZZ26}. Conversely, bottleneck arguments show that quantum Gibbs samplers can mix slowly for several families of low-temperature models \cite{GKZ24}, underscoring the need for assumptions on the temperature or interactions. High-temperature expansions also have
    an algorithmic role beyond mixing: the quantum cluster expansion of Neto\v{c}n\'y and Redig \cite{NR04} underlies, for example, classical
    approximation algorithms for quantum partition functions developed by Mann and Helmuth \cite{MannHelmuth21}. A different high-temperature preparation approach uses separability and efficient classical sampling of product-state decompositions \cite{BLMT24}; this does not constitute a mixing analysis of the Lindbladian studied here.

    \subsection{Our results}

    We consider the two-body model of \cref{Hamiltonian}: $n$ qubits with a Hamiltonian
    \begin{equation}
    \label{Hamiltonian introduction}
        H=\sum_{1\leq i<j\leq n}H_{i,j},
    \end{equation}
    where each $H_{i,j}$ is Hermitian and supported on $\{i,j\}$; zero
    terms are allowed. The interactions need not commute. The target
    state is the Gibbs state

    \[
    \sigma=\frac{e^{-\beta H}}{\tr(e^{-\beta H})}
    \]
     at inverse
    temperature $\beta>0$. The parameter controlling our result is
    \[
        \Lambda=\max_{i\in[n]}\sum_{j\neq i}\|H_{i,j}\|,
    \]
    where $H_{j,i}=H_{i,j}$ for $i<j$. Thus, $\Lambda$ is the largest
    total interaction strength incident on any qubit. Unlike the
    unweighted degree of the interaction graph, it distinguishes strong
    interactions from weak ones. In particular, a site may have many
    interaction partners while its total interaction strength remains
    controlled.

    This parameter is the fixed-decomposition analogue of the local norm introduced by De Palma and Pastorello \cite[Definition~3.1]{DP25}. Their definition minimizes over decompositions of the observable and includes a factor of two. In particular, our prescribed decomposition gives
    \[
        \|H\|_{\mathrm{loc}}\leq 2\Lambda.
    \]
    We keep the specified two-body decomposition throughout this paper.

    We consider the Lindbladian $\cL$ specified in \cref{Lindbladian}, introduced in \cite[Definition~4.1, Section~4]{BLMT25}. It satisfies the KMS detailed balance condition and has the Gibbs state $\sigma$ as a stationary state. The associated evolution is described by the quantum Markov semigroup $\{e^{t\cL}\}_{t\geq0}$.

    Our main result shows that, at sufficiently high temperature relative
    to the interaction strength $\Lambda$, this evolution converges
    rapidly to $\sigma$, uniformly over the initial state. More precisely,
    the mixing time grows only logarithmically with the number of qubits
    and the inverse target accuracy. The following statement summarizes
    \cref{satisfaction of Dobrushin condition}, where the complete result and proof are given.

    \begin{teo*}[Rapid mixing for high temperature Gibbs states]
        Let $H$ be a Hamiltonian as in \eqref{Hamiltonian introduction} and let $\sigma$ be its Gibbs state at inverse temperature $\beta$ satisfying
        \[
        \beta \Lambda < \frac{1}{100}.
        \]
        Then, for every $0< \varepsilon < 1$, and for every initial state $\rho_0$,
        \[
        \left\|e^{t\cL }(\rho_0)-\sigma\right\|_{\mathrm{tr}}
        \leq\varepsilon
        \qquad\mathrm{whenever}\qquad
        t\geq\frac13\log\frac{2n}{\varepsilon}.
        \]
    \end{teo*}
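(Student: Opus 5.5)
The plan is to go through the Dobrushin-to-mixing framework of \cite{BLMT25}. We would show that the sampler $\cL$ contracts the quantum Wasserstein distance of order $1$ \cite{DTML21} at a uniform rate, and then convert this into the trace-norm statement.

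\emph{Step 1: reduction.} The criterion of \cite{BLMT25} says the following. Suppose that for every site $i$ and every pair of states $\rho,\rho'$ agreeing outside $i$, the local part of the generator moves the discrepancy to other sites in a controlled way. This is encoded in a nonnegative influence matrix $A=(a_{ij})$ with $\max_i\sum_j a_{ij}\le 1-\gamma$. Then
\[
\wnorm{e^{t\cL}(\rho)-e^{t\cL}(\rho')}\le e^{-\gamma t}\wnorm{\rho-\rho'}.
\]
We take $\rho'=\sigma$ and use $\wnorm{\rho-\sigma}\le n$ together with $\trnorm{\cdot}\le 2\wnorm{\cdot}$. This gives $\trnorm{e^{t\cL}(\rho_0)-\sigma}\le 2n e^{-\gamma t}$. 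So it suffices to get $\gamma\ge 3$ in the paper's time normalization, which yields $t\ge\frac13\log\frac{2n}{\varepsilon}$. At $\beta=0$ the sampler is a sum of single-site depolarizing-type terms. Each discrepancy is then erased at a fixed rate with no propagation, so the diagonal contraction rate is explicit. The task is to show that for $\beta\Lambda<1/100$ the off-diagonal influences $a_{ij}$ sum to a small fraction of that rate.

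\emph{Step 2: expanding the jump operators.} The jumps of \cite[Definition~4.1]{BLMT25} are built from single-site Paulis conjugated in imaginary time, schematically $\sigma^{\pm 1/4}P_i\sigma^{\mp1/4}$, together with a coherent correction term. We expand $e^{-sH}P_ie^{sH}$ for $|s|\le\beta/4$ as a nested-commutator series $\sum_k \frac{s^k}{k!}\mathrm{ad}_H^k(P_i)$. In the $k$-th term only connected sequences of edges $(i,j_1),\dots$ contribute, each forming a tree growing from the marked site $i$. The weight of such a sequence is bounded by the product of the edge norms. Summing over trees with the per-site bound $\sum_j\|H_{i,j}\|\le\Lambda$ gives a total weight at most $(c\beta\Lambda)^k$ for some constant $c$, with no dependence on degree or graph growth. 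Importantly, we also keep track of \emph{which} sites a term touches. The contribution landing on site $j$ then carries at least one factor $\|H_{i,j}\|$ (or a path to $j$). This makes $\sum_j a_{ij}\lesssim \beta\Lambda/(1-c\beta\Lambda)$.

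\emph{Step 3: bounding the influence.} We write $\cL=\cL_0+(\cL-\cL_0)$, where $\cL_0$ is the $\beta=0$ generator. For states differing only at $i$, the difference $\cL-\cL_0$ acting on $\rho-\rho'$ is decomposed by the expansion above into operators supported on trees containing $i$. Each such operator is charged to its sites using the $W_1$ triangle inequality and the fact that a traceless operator supported on a set $S$ has $W_1$ norm at most $|S|\cdot\trnorm{\cdot}/2$. The $|S|$ factors are absorbed into the factorial and geometric sums, at the cost of the constant $c$. Choosing constants with $c\beta\Lambda<c/100$ small enough leaves a net contraction rate of at least $3$.

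\emph{Main obstacle.} The hard step is the marked-site tree summation. It must give a bound linear in $\Lambda$ that is independent of $n$ and of the degree, while also weighting each tree by the number of sites it touches, which the $W_1$ charging requires. The coherent part of the KMS generator involves a double integral in imaginary time and needs the same treatment. I would first sum over rooted trees via the generating-function bound $\sum_{\text{trees of size }m} \prod\|H_e\|\le (e\Lambda)^{m-1}$. Then I would bound $\sum_m m(e\beta\Lambda)^{m-1}$, which converges comfortably for $\beta\Lambda<1/100$, and use the remaining numerical slack to fix the constant $3$.
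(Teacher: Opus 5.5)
The gap is the coherent term, which you dismiss as needing ``the same treatment''. Your Step~1 reduction and your handling of the dissipative jumps are essentially what the paper does. It uses the Euler map $\cI+\frac{\delta}{n}\cL$, gets Dobrushin parameter $3\delta$ from the exact rate $4$ of the single-site depolarizing core minus error column sums below $1$, and lets $\delta\to0$; this is equivalent to your rate-$3$ contraction. It also expands $A^P=e^{-\beta H/4}Pe^{\beta H/4}$ into clusters controlled by a marked-site tree sum weighted by support size, as you outline.

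The coherent term is not a bounded imaginary-time conjugate of local operators, so that treatment fails. In the eigenbasis of $H$ one has $G^P=\tfrac12\sech\bigl(\tfrac{\beta}{4}\mathrm{ad}_H\bigr)(B^P)$, with $B^P=\Skew(Pe^{-\beta H/2}Pe^{\beta H/2})$. Since $\sech$ has poles at $\pm\ii\pi/2$, its Taylor coefficients in $\mathrm{ad}_H$ decay only geometrically, like $(\beta/2\pi)^k$. Your tree bound controls nested commutators of a local operator only through $\|\mathrm{ad}_H^k(P)\|\le k!\,(2\Lambda)^k$. The $1/k!$ of an exponential, as in $A^P$ and $B^P$, cancels that $k!$; geometric coefficients do not. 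The cluster expansion of $G^P$ is therefore bounded by something like $\sum_k k!\,(\beta\Lambda/\pi)^k$, which diverges for every $\beta\Lambda>0$. Convergence would need a global condition like $\beta\|H\|\lesssim 1$, which the theorem is designed to avoid: for the complete graph, $\beta\|H\|$ grows linearly in $n$.

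The missing idea is to work in real time. The paper writes $G^P=\int_{\mathbb R}f_\beta(\xi)\,\tau_\xi(B^P)\,d\xi$ and expands only $B^P$ in clusters. There the orders $0$ and $1$ cancel exactly, so the coherent error is $\mathcal O((\beta\Lambda)^2)$. The real-time spreading of each cluster term $K$ is controlled by a weighted Lieb--Robinson bound on local oscillations, $(\eta_i(\tau_\xi K))_i\le 2\|K\|e^{4|\xi|M}e_S$ with $\|M\|_{1\to1}=2\Lambda$. The resulting $e^{12\Lambda|\xi|}$ growth is then integrated against the $e^{-2\pi|\xi|/\beta}$ decay of the kernel, which needs $12\beta\Lambda<2\pi$.

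Even with this representation, two further obstacles remain:
\begin{itemize}
\item Your rule of charging an operator to its support $S$ at cost proportional to $|S|$ is useless here, because $\tau_\xi(B_\alpha)$ is in general supported on all $n$ sites.
\item Estimating $\cI+\delta\cC_{\tau_\xi(K)}=\tau_\xi\circ(\cI+\delta\cC_K)\circ\tau_{-\xi}$ factor by factor already raises the cost at $\delta=0$, which swamps the order-$\delta$ contraction.
\end{itemize}
The paper handles both with an exact reversible lift $\mathfrak U_\xi=e^{\xi\mathfrak H}$ of the real-time evolution to transport-plan tuples, used to conjugate a lift of the local kick. It then splits off the away-site generator $\cC_{\mathbb E_iK_\xi}$, which preserves each component's cost exactly, and charges only the remainder, in terms of the oscillations $\eta_i$.

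These real-time estimates carry constants of order $10^2$: the paper's coherent bound is $\frac{600}{2\pi-12\beta\Lambda}\sum_{k\ge2}(k+1)^2(\beta\Lambda)^k$. With constants of that size, a first-order coherent contribution would already be about $4$ at $\beta\Lambda=1/100$. Without the cancellation of the first two orders of $B^P$, the ``numerical slack'' you invoke to reach rate $3$ would not exist.
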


    All mixing times in this paper use the extensive generator
    \[
        \cL=\sum_{i=1}^n\cL_i,
    \]
    with unit coefficients for the three Pauli-labelled terms at each site, as in \cite[Theorem~2.1]{BLMT25}.

    Here, the relevant high-temperature scale is set by the local
    interaction strength $\Lambda$, rather than the norm of the full
    Hamiltonian. No bounded-degree or graph-growth hypothesis is needed.

    The proof analyses how the dynamics changes the quantum Wasserstein distance between two states. Using transport plans, we track the contributions associated with individual sites and estimate how they are redistributed by the interactions. We study the dissipative and coherent parts of the generator separately in \cref{dissipative evolution,coherent evolution}, obtaining bounds in terms of the total interaction strength $\Lambda$. Combining these estimates in \cref{lem:high-temperature-dobrushin}, we verify the quantum Dobrushin condition at sufficiently high temperature. This establishes contraction of the continuous-time evolution and hence rapid convergence to the Gibbs state.

    Obtaining these bounds without assumptions on the growth of the interaction graph is the main technical challenge. We control the imaginary-time expansions through weighted sums over connected sequences of interactions, as quantified in \cref{lem:weighted-clusters}. The coherent part requires an additional step, since its Fourier representation involves an average over real-time evolutions. To handle this average, we construct the reversible evolution of transport plans in \cref{lem:reversible-plan-lift} and bound its sitewise costs through the nonnegative comparison matrices of \cref{lem:two-mark-input}. Keeping track of the individual
    interaction strengths in both steps allows us to express the bounds in terms of $\Lambda$.

    Furthermore, we consider an application of the theorem to the case of an all-to-all interaction graph. Specifically, for $n\geq2$ and $J>0$, we consider
    \begin{equation} \label{Alltoall Hamiltonian}
        H=\frac{1}{\sqrt {n-1}}\sum_{i<j}h_{i,j},
        \qquad \|h_{i,j}\|\leq J.
    \end{equation}
    The following is the complete-graph specialization of \cref{cor:degree-normalized-mixing}.
    \begin{cor*}
        Let $H$ be a Hamiltonian as in \eqref{Alltoall Hamiltonian}, and let $\sigma$ be its Gibbs state at inverse temperature $\beta$. Then, if
        \[
        \beta < \frac{1}{100 J \sqrt{n-1}},
        \]
        for every initial state \(\rho_0\) and every \(0<\varepsilon<1\), the evolution $\{e^{t\cL}\}_{t \geq 0}$ satisfies
        \[ \| e^{t \cL}(\rho_0) -\sigma \|_{\mathrm{tr}} \leq \varepsilon,\]
        for a time
        \[
        t = \mathcal O(\log(n/\varepsilon)).
        \]

    \end{cor*}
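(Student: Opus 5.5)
This corollary is a direct specialization of the main theorem, so the plan is simply to compute $\Lambda$ for the Hamiltonian \eqref{Alltoall Hamiltonian} and then check the hypothesis $\beta\Lambda<1/100$.

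First, write $H=\sum_{i<j}H_{i,j}$ with $H_{i,j}=h_{i,j}/\sqrt{n-1}$. Each term is Hermitian and supported on $\{i,j\}$, so $H$ is of the form \eqref{Hamiltonian introduction}. Fix a site $i\in[n]$. It has exactly $n-1$ partners $j\neq i$, and each satisfies $\|H_{i,j}\|\le J/\sqrt{n-1}$. Hence
\[
\sum_{j\neq i}\|H_{i,j}\|\le (n-1)\frac{J}{\sqrt{n-1}}=J\sqrt{n-1},
\]
and taking the maximum over $i$ gives $\Lambda\le J\sqrt{n-1}$.

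Next, the assumption $\beta<1/(100J\sqrt{n-1})$ yields $\beta\Lambda\le \beta J\sqrt{n-1}<1/100$. The hypothesis of the main theorem (\cref{satisfaction of Dobrushin condition}) therefore holds. That theorem gives, for every initial state $\rho_0$ and every $0<\varepsilon<1$,
\[
\|e^{t\cL}(\rho_0)-\sigma\|_{\mathrm{tr}}\le\varepsilon \quad\text{for all } t\ge \tfrac13\log(2n/\varepsilon).
\]
Choosing $t=\tfrac13\log(2n/\varepsilon)=\mathcal O(\log(n/\varepsilon))$ completes the argument. The only condition used here is $n\ge2$, which guarantees $\sqrt{n-1}>0$.

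There is no genuine obstacle. The one point to check is that the theorem uses the fixed decomposition with $H_{j,i}=H_{i,j}$, so the sum defining $\Lambda$ counts each pair exactly once per endpoint. This matches the count above, so no extra factor of $2$ enters. Alternatively, one could obtain the result from the general degree-normalized statement \cref{cor:degree-normalized-mixing} with all degrees equal to $n-1$, but the direct computation above is self-contained.
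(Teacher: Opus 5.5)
Your proposal is correct and matches the paper's argument. The paper obtains this statement as the case $d=n-1$ of \cref{cor:degree-normalized-mixing}, whose proof is your bound $\Lambda\le J\sqrt{d}$ followed by an application of \cref{satisfaction of Dobrushin condition}; your direct computation simply inlines that specialization.
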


    To compare this condition with that of \cite{BLMT25}, suppose that $n\geq3$, $J>0$, and each $h_{i,j}$ has support exactly $\{i,j\}$. For this complete interaction graph, the dual-graph convention of \cite[Definition~3.22]{BLMT25} gives
    \[
        \mathfrak K=2,\qquad \mathfrak b=n,\qquad \mathfrak d=2n-4.
    \]
    Apply their theorem to the rescaled quantities
    \[
        \widehat H:=\frac1J\sum_{i<j}h_{i,j},\qquad
        \widehat\beta:=\frac{\beta J}{\sqrt{n-1}},\qquad
        \widehat\beta\widehat H=\beta H.
    \]
    This gives the sufficient condition
    \[
        \beta < \frac{\sqrt{n-1}}{80000\,J\,n^2(2n-4)}.
    \]
    Thus, for the same Hamiltonian normalization, the sufficient inverse-temperature threshold in \cite{BLMT25} scales as $n^{-5/2}/J$, whereas the threshold in the above corollary scales as $n^{-1/2}/J$. The improvement in the guaranteed  temperature regime comes from controlling weighted interaction strengths directly, without the additional graph-growth factor.

    This result controls the interaction strength per site and does not require $\beta\|H\|$ to be small. In particular, the allowed regime includes Hamiltonians for which this global quantity grows linearly with $n$.

    The conclusion is a bound on the time required by the specified continuous-time Gibbs sampler. Translating it into a quantum circuit complexity also requires controlling the cost of implementing that evolution.

    \section{Notations and preliminaries}

    \subsection*{Notations.}
    We use $\mathcal O(\cdot)$ for asymptotic upper bounds.

    We work in the Hilbert space corresponding to a system of $n$ qubits,
    \[
    \cH_n:=(\mathbb C^2)^{\otimes n},
    \]
    and denote its dimension by $N:=2^n$. We use $I$ for the identity matrix and, on the auxiliary tuple spaces introduced below, for the identity operator on that space. We reserve $\cI$ for the identity map on physical operators. The symbol $0$ denotes the zero scalar, vector, matrix, or map according to its domain.

    For a subset of indices $S$, we let $e_S$ be the indicator vector of the set: it is $1$ when the index $i\in S$ and $0$ otherwise. When $|S|=1$, we sometimes abuse notation and write $e_i$ instead of
    $e_{\{i\}}$. We let the matrix $E_S$ be defined entrywise by
    \[
        (E_S)_{i,j} =
        \begin{cases}
        1, \qquad \text{if } i\in S\text{ and }j\in S \\
        0, \qquad \text{otherwise}.
        \end{cases}
    \]

    We write $\mathbf{1}:=e_{[n]}$ for the all-ones vector and $\ind_{\mathcal A}$ for the scalar indicator of a condition $\mathcal A$.

    For a vector $v$, we use $\|v\|$ for its Euclidean norm, $\|v\|_1$ for its $\ell_1$ norm, and $\|v\|_\infty$ for its maximum absolute entry. For vectors with real entries, the relations $u\leq_v v$ and $u<_v v$ mean, respectively,
    \[
        u_i\leq v_i\quad\text{for every }i,
        \qquad
        u_i<v_i\quad\text{for every }i.
    \]
    Comparisons between numerical matrices are entrywise. For a vector $p$, the notation $\Diag(p)$ denotes the diagonal matrix with diagonal $p$.

    For an operator $X$ on $\cH_n$, we denote its conjugate transpose by $X^\dagger$, its operator norm by $\|X\|$ and its trace norm by $\|X\|_{\mathrm{tr}}$.
    For a numerical matrix $A\in\mathbb C^{n\times n}$, we write
    \[
        \|A\|_{1\to1} = \max_{i\in[n]}\|Ae_i\|_1
    \]
    for the norm induced by the vector $\ell_1$ norm. The corresponding maximum-row-sum norm is
    \[
        \|A\|_{\infty\to\infty}:=\max_{i\in[n]}\sum_{j=1}^n|A_{i,j}|.
    \]
    We also define
    \begin{equation*}
         \Skew(X):=\frac{X-X^\dagger}{2\ii}.
    \end{equation*}
    Moreover, for a linear map $\mathcal K$ on operators, we denote by $\mathcal K^\dagger$ its adjoint with respect to the Hilbert--Schmidt inner product, characterized by
    \[
        \tr \!\left(A^\dagger\mathcal K(B)\right) = \tr\!\left(
            (\mathcal K^\dagger(A))^\dagger B
        \right)
    \]
    for all operators $A,B$. We use $\|\mathcal K\|_{1\to1}$ to denote the norm induced by the trace norm:
    \[
        \|\mathcal K\|_{1\to1} := \sup_{X\neq0} \frac{\|\mathcal K(X)\|_{\mathrm{tr}}} {\|X\|_{\mathrm{tr}}}.
    \]

    We will also work with channels, i.e., trace-preserving, completely positive linear maps from density matrices to density matrices; see \cite{Wat18} for standard background. We will sometimes relax these conditions and work with more general classes of functions on operators. We will call such a function a \emph{linear map} if it is linear and takes Hermitian matrices to Hermitian matrices. We use $\cI $ to denote the identity map.

    A useful basis for us to work in is the basis of tensor products of
    Pauli matrices. In particular, for each $i\in[n]$, we denote by
    \[
        \cP_i :=\{\sigma_X^{(i)},\sigma_Y^{(i)},\sigma_Z^{(i)}\}
    \]
    the set of nontrivial single-site Pauli operators at site $i$, where
    \[
        \sigma_\alpha^{(i)} := I^{\otimes(i-1)}\otimes\sigma_\alpha \otimes I^{\otimes(n-i)}, \qquad \alpha\in\{X,Y,Z\}.
    \]
    \subsection{Lindbladian evolution}

    We introduce the general notion of Lindbladian evolution, which is the quantum analogue of continuous-time Markov chain evolution. The structure of its generator is due to Gorini, Kossakowski, and Sudarshan \cite{GKSL76} and Lindblad \cite{Lindblad76}.
    \begin{definition}[General Lindbladian]
    Let \(G \in \mathbb{C}^{2^n \times 2^n}\) be Hermitian, and let a family \(\{K_k\}_k \subset \mathbb{C}^{2^n \times 2^n}\). The Lindbladian \(\cL \) with coherent term \(G\) and jump operators \(K_k\) is the linear map acting on density matrices \(\rho\) by
    \[
    \cL(\rho) = -\ii [G,\rho] + \sum_k \left( K_k \rho K_k^\dagger - \frac{1}{2}\{K_k^\dagger K_k,\rho\} \right),
    \]
    where
    \[
    [A,B] := AB-BA \qquad\text{and}\qquad \{A,B\} := AB+BA
    \]
    denote the commutator and anticommutator, respectively.
    \end{definition}
    The Lindbladian \(\cL \) induces a one-parameter family of quantum channels \(\{e^{t\cL }\}_{t\geq 0}\), defined by
    \begin{equation} \label{semigroup}
    e^{t\cL } = \lim_{m\to\infty}
    \left(\cI +h_m\cL \right)^m,\qquad h_m:=\frac{t}{m},\quad t\geq0.
    \end{equation}
    Equivalently, \(e^{t\cL }\) is the solution map of the master equation
    \[
    \frac{d}{dt}\rho(t) = \cL(\rho(t)), \qquad \rho(0)=\rho_0.
    \]
    The family \(\{e^{t\cL }\}_{t\geq 0}\) forms a dynamical semigroup, meaning
    that for all \(s,t\geq 0\),
    \[
    e^{s\cL } e^{t\cL } =
    e^{(s+t)\cL }, \qquad e^{0\cL }=\cI .
    \]
    Conversely, under suitable boundedness conditions, every time-homogeneous quantum dynamical semigroup has a generator of the above Lindblad form. Hence, in a formal sense, all Markovian, time-independent open-system dynamics may be described by Lindbladian evolution. As in classical Markovian dynamics, quantum thermalization is not always exactly modeled by such a semigroup. A derivation from system--bath dynamics therefore requires additional physical assumptions, as in the weak-coupling limit studied by Davies \cite{Davies74}. The following lemma controls the possible increase in trace norm under a dissipative update. We will use this estimate to bound the costs of the corresponding transport plans.

    \begin{lemma}[Trace-norm bound for a dissipative update] \label{size of lind step}
        Let $A$ be an operator and, for $s \geq 0$, define
        \[
            \Phi_s(X)
            := X+s\left(AXA^\dagger
            -\frac12\{A^\dagger A,X\}\right).
        \]
        Then, for every Hermitian operator $X$,
        \[
            \|\Phi_s(X)\|_{\mathrm{tr}}
            \leq
            \left(1+\frac{s^2}{2}\|A^\dagger A\|^2\right)
            \|X\|_{\mathrm{tr}}.
        \]
    \end{lemma}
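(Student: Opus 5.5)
The plan is to split $\Phi_s$ into a completely positive part, whose trace-norm growth is controlled by the norm of its Kraus sum, plus a remainder of order $s^2$. Write $B:=A^\dagger A\geq 0$ and $M:=I-\frac{s}{2}B$, which is Hermitian. Expanding,
\[
    MXM = X-\frac s2\{B,X\}+\frac{s^2}{4}BXB .
\]
Hence
\[
    \Phi_s(X) = \Psi(X)-\frac{s^2}{4}BXB,
    \qquad
    \Psi(X):=MXM+s\,AXA^\dagger .
\]
The map $\Psi$ is completely positive, with Kraus operators $M$ and $\sqrt{s}\,A$.

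Next I bound $\Psi$ on Hermitian inputs by a Jordan decomposition. Write $X=X_+-X_-$ with $X_\pm\geq0$ and $\|X\|_{\mathrm{tr}}=\tr X_++\tr X_-$. Since $\Psi$ is positive,
\[
    \|\Psi(X)\|_{\mathrm{tr}}\leq \tr\Psi(X_+)+\tr\Psi(X_-) .
\]
Each term satisfies $\tr\Psi(X_\pm)=\tr\bigl(K X_\pm\bigr)$, where $K:=M^\dagger M+sA^\dagger A$. Because $X_\pm\geq0$, this is at most $\|K\|\tr X_\pm$, so $\|\Psi(X)\|_{\mathrm{tr}}\leq\|K\|\,\|X\|_{\mathrm{tr}}$.

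The key computation is that the first-order terms in $K$ cancel:
\[
    K = M^2+sB = I-sB+\frac{s^2}{4}B^2+sB = I+\frac{s^2}{4}B^2 .
\]
Since $B\geq0$, this gives $\|K\|=1+\frac{s^2}{4}\|B\|^2$. This cancellation is the only nontrivial point of the argument. It reflects the fact that $\Phi_s$ is trace preserving to first order, and it forces us to group the anticommutator term with $X$ through $M$ rather than bounding it crudely.

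Finally, the remainder is handled by Hölder's inequality: $\|BXB\|_{\mathrm{tr}}\leq\|B\|^2\|X\|_{\mathrm{tr}}$. Adding the two contributions gives
\[
    \|\Phi_s(X)\|_{\mathrm{tr}}\leq\Bigl(1+\frac{s^2}{4}\|B\|^2+\frac{s^2}{4}\|B\|^2\Bigr)\|X\|_{\mathrm{tr}},
\]
which is the claimed bound with $\|B\|=\|A^\dagger A\|$.
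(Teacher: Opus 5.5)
Your proposal is correct and follows essentially the same route as the paper: the same completely positive map $(I-\frac{s}{2}A^\dagger A)X(I-\frac{s}{2}A^\dagger A)+sAXA^\dagger$, the same positivity and Jordan-decomposition argument, and the same cancellation of first-order terms giving $\tr\bigl[(I+\frac{s^2}{4}B^2)|X|\bigr]$. The paper also bounds the remainder $\frac{s^2}{4}BXB$ by H\"older's inequality, exactly as you do.
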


    \begin{proof}
        Set $V:=A^\dagger A$ and introduce the completely positive map
        \[
            \mathcal T_s(X)
            :=
            \left(I-\frac{s}{2}V\right)
            X
            \left(I-\frac{s}{2}V\right)
            +sAXA^\dagger.
        \]
        Expanding the product gives
        \[
        \Phi_s(X)=\mathcal T_s(X)-\frac{s^2}{4}VXV.
        \]
         Write $X=X_+-X_-$, where $X_+,X_-\geq0$ and
        $X_++X_-=|X|$. Since $\mathcal T_s$ is positive,
        \begin{align*}
            \|\mathcal T_s(X)\|_{\mathrm{tr}} \leq \|\mathcal T_s(X_+)\|_{\mathrm{tr}} + \|\mathcal T_s(X_-)\|_{\mathrm{tr}} = \tr\mathcal T_s(|X|).
        \end{align*}
        By cyclicity of the trace,
        \begin{align*}
            \tr\mathcal T_s(|X|) = \tr\left[
                \left(I+\frac{s^2}{4}V^2\right)|X| \right] \leq \left(1+\frac{s^2}{4}\|V\|^2\right) \|X\|_{\mathrm{tr}}.
        \end{align*}
        Finally, we obtain
        \begin{align*}
            \|\Phi_s(X)\|_{\mathrm{tr}} \leq \|\mathcal T_s(X)\|_{\mathrm{tr}} +\frac{s^2}{4}\|VXV\|_{\mathrm{tr}} \leq \left(1+\frac{s^2}{2}\|V\|^2\right) \|X\|_{\mathrm{tr}}.
        \end{align*}
    \end{proof}

    \subsection{Quantum Wasserstein distance}
    We recall the definition of the quantum Wasserstein distance of order \(1\), introduced by De Palma, Marvian, Trevisan, and Lloyd \cite{DTML21}, as a distance metric on quantum states. We describe the main properties of this metric, and refer the reader to the original work for proofs.

    \begin{definition}[Quantum Wasserstein distance {\cite[Definition~6]{DTML21}}]
    Let \(X\) be a traceless Hermitian operator. The quantum Wasserstein norm of
    order \(1\) of \(X\) is defined by
    \[
    \|X\|_{W_1} := \min \left\{ \frac{1}{2}\sum_{i=1}^n \|X_i\|_{\mathrm{tr}} \;:\; \tr_i(X_i)=0,\; X=\sum_{i=1}^n X_i,\; X_i=X_i^\dagger \right\}.
    \]
    \end{definition}
    The above minimum is achieved by some collection $(X_i)_{i\in[n]}$. For \(X=\rho-\sigma\), where \(\rho\) and \(\sigma\) are quantum states, this quantity defines a transportation metric: it can be interpreted as the minimum transportation cost required to transform \(\rho\) into \(\sigma\) through single-site contributions satisfying \(\tr_i(X_i)=0\). To control these transportation costs, we introduce the notion of a transport plan and its site-specific cost vector.

    \begin{definition}[Transport plan and cost vector] \label{definition of cost vector}
    Let \(X\) be a traceless Hermitian operator. A transport plan for \(X\) is a collection of operators $(X_i)_{i\in[n]}$ satisfying the following conditions:
    \begin{enumerate}
        \item Each \(X_i\) is Hermitian and traceless;
        \item

        \[
        X=\sum_{i=1}^n X_i;
        \]

        \item For every \(i\in[n]\),
        $\tr_i(X_i)=0$.
    \end{enumerate}

    For a transport plan \((X_i)_{i\in[n]}\), its cost vector is the vector
    \[
    \cost((X_i)_i)
    :=
    \left(
    \frac{1}{2}\|X_i\|_{\mathrm{tr}}
    \right)_{i\in[n]} =
    \left(
    \|X_i\|_{W_1}
    \right)_{i\in[n]},
    \]
    where the last equality follows from \cref{Lemma W1 rel tr} below.
    \end{definition}

    With these definitions, the quantum Wasserstein norm can be written as the minimum \(\ell_1\)-cost over all transport plans for \(X\):
    \[
    \|X\|_{W_1}= \min_{(X_i)_i \ \text{transport plan for } X}
    \|{\cost((X_i)_i)}\|_1.
    \]

    Because transport plans are linear objects, they combine in the expected way.

    \begin{prop}[Linearity of transport plans] \label{prop on lin}
        Let $X$ and $Y$ be traceless Hermitian operators with transport plans $(X_i)_{i\in[n]}$ and $(Y_i)_{i\in[n]}$ and corresponding cost vectors $x$ and $y$. For every $\alpha,\beta\in\mathbb R$, the collection $(\alpha X_i+\beta Y_i)_{i\in[n]}$ is a transport plan for $\alpha X+\beta Y$, with cost vector $z$ satisfying \(  z\leq_v|\alpha|x+|\beta|y \).
        \end{prop}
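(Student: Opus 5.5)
The statement is elementary, so I would verify the three defining conditions of a transport plan directly and then bound the cost vector entrywise with the triangle inequality for the trace norm. No earlier result is needed beyond the definition of a transport plan and the cost vector, \cref{definition of cost vector}.

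First, each $\alpha X_i+\beta Y_i$ is Hermitian because $\alpha,\beta$ are real and $X_i,Y_i$ are Hermitian. It is traceless because the trace is linear and $\tr X_i=\tr Y_i=0$. Second, summing gives
\[
\sum_{i=1}^n(\alpha X_i+\beta Y_i)=\alpha\sum_i X_i+\beta\sum_i Y_i=\alpha X+\beta Y.
\]
Third, the partial trace $\tr_i$ is linear, so $\tr_i(\alpha X_i+\beta Y_i)=\alpha\tr_i(X_i)+\beta\tr_i(Y_i)=0$. Hence the collection is a transport plan for $\alpha X+\beta Y$; note that $\alpha X+\beta Y$ is itself traceless and Hermitian.

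For the cost bound, the $i$-th entry of $z$ is
\[
z_i=\tfrac12\|\alpha X_i+\beta Y_i\|_{\mathrm{tr}}\leq \tfrac12|\alpha|\,\|X_i\|_{\mathrm{tr}}+\tfrac12|\beta|\,\|Y_i\|_{\mathrm{tr}}=|\alpha|x_i+|\beta|y_i .
\]
This uses the triangle inequality and absolute homogeneity of the trace norm. Since this holds for every $i$, we get $z\leq_v|\alpha|x+|\beta|y$.

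There is no real obstacle here. The only points needing care are that real coefficients are required to keep Hermiticity, and that the cost is defined through the trace norm as $\frac12\|\cdot\|_{\mathrm{tr}}$. The identification with $\|\cdot\|_{W_1}$ via \cref{Lemma W1 rel tr} is not needed for this argument.
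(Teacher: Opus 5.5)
Your proposal is correct and matches the paper's reasoning: the paper gives no separate proof, only remarking that transport plans combine linearly and that the cost bound follows from the triangle inequality for the trace norm, which is exactly your argument. You also correctly observe that the identification of the cost with $\|\cdot\|_{W_1}$ is not needed here.
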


     Quantum Wasserstein distance can be related to trace distance, analogously to what happens in the classical case.

    \begin{lemma}[Relating Wasserstein to trace {\cite[Proposition 2]{DTML21}}]   \label{Lemma W1 rel tr}
        For a traceless Hermitian operator $X$,
        \[
        \frac{1}{2}\|X\|_{\mathrm{tr}} \leq \|X\|_{W_1} \leq \frac{n}{2}\|X\|_{\mathrm{tr}}.
        \]
        Moreover, if $\tr_i(X)=0$ for some $i\in[n]$, then there is a transport plan with cost vector $\frac{1}{2}\|X\|_{\mathrm{tr}}e_i$. In particular, this implies that
        \[
        \frac{1}{2}\|X\|_{\mathrm{tr}}=\|X\|_{W_1}.
        \]
    \end{lemma}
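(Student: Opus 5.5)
The plan has three steps: the lower bound, the upper bound, and the single-site case.

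\emph{Lower bound.} Take an optimal decomposition $X=\sum_i X_i$ in the definition of $\|X\|_{W_1}$. The triangle inequality then gives
\[
\|X\|_{\mathrm{tr}}\leq\sum_i\|X_i\|_{\mathrm{tr}}=2\|X\|_{W_1},
\]
which is the first inequality.

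\emph{Upper bound.} I would build an explicit transport plan by telescoping over partial traces. Let $\omega=I/2$ be the maximally mixed single-qubit state, and for $k=0,\dots,n$ define $Y_k:=\tr_{1\ldots k}(X)\otimes\omega^{\otimes k}$ (suitably ordered), so that $Y_0=X$ and $Y_n=\tr(X)\,\omega^{\otimes n}=0$. Set $X_k:=Y_{k-1}-Y_k$. Each $X_k$ is Hermitian, $\sum_k X_k=X$, and
\[
\tr_k(X_k)=\tr_k(Y_{k-1})-\tr_k(Y_k)=0,
\]
because $Y_k$ is $Y_{k-1}$ with site $k$ replaced by $\omega$ after tracing. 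Hence this is a transport plan. For the cost, the map $Z\mapsto\tr_{S}(Z)\otimes\omega^{\otimes S}$ is a channel, so it does not increase the trace norm. Therefore $\|X_k\|_{\mathrm{tr}}\leq 2\|X\|_{\mathrm{tr}}$, which gives only $\|X\|_{W_1}\leq n\|X\|_{\mathrm{tr}}$ and loses a factor of $2$.

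To recover the sharp constant $n/2$, I would sharpen the estimate $\|Z-\tr_k(Z)\otimes\omega\|_{\mathrm{tr}}\leq\|Z\|_{\mathrm{tr}}$ for $Z=Y_{k-1}$. The key fact is that the replacement channel $Z\mapsto\tr_k Z\otimes\omega$ equals the Pauli twirl $\frac14\sum_{P\in\{I,X,Y,Z\}}PZP$ at site $k$. Consequently
\[
Z-\tr_kZ\otimes\omega=\tfrac14\sum_{P\neq I}(Z-PZP),
\]
which bounds the norm by $\tfrac32\|Z\|_{\mathrm{tr}}$. That is still not $1$.

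This constant is where I expect the main obstacle. My first attempt would be to take the per-site contribution directly as a transport plan. Since $\|X_k\|_{W_1}$ equals $\tfrac12\|X_k\|_{\mathrm{tr}}$ by the last claim (proved below), it suffices to bound $\|X_k\|_{\mathrm{tr}}\leq\|X\|_{\mathrm{tr}}$. For $k=1$ this reduces, by writing $X=X_+-X_-$ with $\tr X_+=\tr X_-=\tfrac12\|X\|_{\mathrm{tr}}$, to a statement about differences of states. If this route fails, I would fall back on the original argument of \cite[Proposition 2]{DTML21}, which I would cite for the exact constant.

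\emph{Single-site case.} Suppose $\tr_i(X)=0$. Setting $X_i=X$ and $X_j=0$ for $j\neq i$ is a transport plan with cost vector $\tfrac12\|X\|_{\mathrm{tr}}e_i$. This gives $\|X\|_{W_1}\leq\tfrac12\|X\|_{\mathrm{tr}}$, and combined with the lower bound it yields equality.
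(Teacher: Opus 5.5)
The paper gives no proof of this lemma; it imports it from \cite[Proposition~2]{DTML21}. Your lower bound and your single-site case are correct and complete. The upper bound with the constant $n/2$, however, is not established. Your telescoping toward $\omega^{\otimes n}$ gives per-site cost at most $\|X\|_{\mathrm{tr}}$, hence only $\|X\|_{W_1}\le n\|X\|_{\mathrm{tr}}$. This is the weaker bound of \cref{Lemma W1 with fewsite deviations} with $S=[n]$, and the twirl refinement improves it only to $\tfrac{3n}{4}\|X\|_{\mathrm{tr}}$.

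The reduction you then propose, $\|X_k\|_{\mathrm{tr}}\le\|X\|_{\mathrm{tr}}$, is false for this plan. Take $n=2$ and $X=|\Phi^+\rangle\langle\Phi^+|-|00\rangle\langle 00|$ with $|\Phi^+\rangle=(|00\rangle+|11\rangle)/\sqrt2$. Then $\|X\|_{\mathrm{tr}}=\sqrt2$, while $\|X-\omega\otimes\tr_1X\|_{\mathrm{tr}}=(1+\sqrt5)/2$. The structural reason is that, writing $X=c(\rho-\sigma)$, each of your increments is a difference of two differences of states, so its natural bound is $4c=2\|X\|_{\mathrm{tr}}$. Deferring to the citation at this point leaves the main inequality unproved.

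The missing idea is to interpolate between the two states directly rather than toward the maximally mixed state. For $X\neq0$, the Jordan decomposition and $\tr X=0$ give $X=c(\rho-\sigma)$ with $c=\tfrac12\|X\|_{\mathrm{tr}}$ and $\rho,\sigma$ density matrices. Define the hybrid states $\theta_k:=\rho_{1\ldots k}\otimes\sigma_{k+1\ldots n}$ for $k=0,\ldots,n$, where subscripts denote marginals. Then $\theta_0=\sigma$ and $\theta_n=\rho$; set $X_k:=c(\theta_k-\theta_{k-1})$.

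Since $\tr_k\theta_k=\rho_{1\ldots k-1}\otimes\sigma_{k+1\ldots n}=\tr_k\theta_{k-1}$, the family $(X_k)_k$ is a transport plan for $X$. Each $\theta_k-\theta_{k-1}$ is a difference of two states, so $\tfrac12\|X_k\|_{\mathrm{tr}}\le c$. Summing over $k$ gives $\|X\|_{W_1}\le nc=\tfrac n2\|X\|_{\mathrm{tr}}$. Your remark about $X_+$ and $X_-$ was the right ingredient, but it has to be used to build the telescoping itself, not applied to the increments of the maximally mixed one.
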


    When \(X\) vanishes after tracing out a small subsystem \(S\), the upper bound can be sharpened. More precisely, if \(X=\rho-\sigma\) and $\tr_S(X)=0$, then \(\rho\) and \(\sigma\) have the same reduced state on the complement \(S^c\). As the following lemma shows, $X$ admits a transport plan whose nonzero components are indexed only by sites in $S$, giving a bound that depends on $|S|$ rather than $n$.

    \begin{lemma}[Relating Wasserstein to trace for few-site deviations {\cite[proof of Proposition~5]{DTML21}}] \label{Lemma W1 with fewsite deviations}
    Let \(X\) be a traceless Hermitian operator, and let \(S\subseteq[n]\), $\tr_S(X)=0$.
    Then there exists a transport plan \((X_i)_{i\in[n]}\) for \(X\) whose cost vector \(x=\cost((X_i)_i)\) satisfies
    \[
    x \leq_v \,\|X\|_{\mathrm{tr}} e_S.
    \]
    In particular,
    \[
    \|X\|_{W_1}\leq|S|\,\|X\|_{\mathrm{tr}}.
    \]

    \end{lemma}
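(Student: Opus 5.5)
We build the plan by peeling off the sites of $S$ one at a time, so that each step produces an operator that vanishes under a single partial trace and can be handled by \cref{Lemma W1 rel tr}. Write $S=\{s_1,\dots,s_k\}$ with $k=|S|$, and let $\omega_i:=I/2$ denote the maximally mixed qubit state at site $i$. For $m=0,\dots,k$ define
\[
    Y_m := \omega_{s_1}\otimes\cdots\otimes\omega_{s_m}\otimes\tr_{s_1\cdots s_m}(X),
\]
with the tensor factors placed on the correct sites. Thus $Y_0=X$, and $Y_k=\omega_S\otimes\tr_S(X)=0$ by hypothesis. The telescoping decomposition
\[
    X=\sum_{m=1}^{k}(Y_{m-1}-Y_m)
\]
is the candidate transport plan. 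We set $X_{s_m}:=Y_{m-1}-Y_m$ for $m=1,\dots,k$, and $X_i:=0$ for $i\notin S$.

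We then check the conditions of \cref{definition of cost vector}. Each $X_{s_m}$ is Hermitian, because partial traces and tensoring with $\omega$ preserve Hermiticity. We have $Y_m=\omega_{s_m}\otimes\tr_{s_m}(Y_{m-1})$, so
\[
    \tr_{s_m}(Y_{m-1}-Y_m)=\tr_{s_m}Y_{m-1}-\tr_{s_m}Y_{m-1}=0.
\]
This gives condition (3), and tracelessness follows from it. The components sum to $X$ by telescoping.

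For the cost, note that $Y_{m-1}-Y_m=(\cI-\mathcal R_{s_m})(Y_{m-1})$, where $\mathcal R_i(Z)=\omega_i\otimes\tr_i Z$ is the replacement channel. Since channels are contractive in trace norm, $\|Y_{m-1}\|_{\mathrm{tr}}\le\|X\|_{\mathrm{tr}}$, and therefore
\[
    \|X_{s_m}\|_{\mathrm{tr}}\le 2\|X\|_{\mathrm{tr}}.
\]
The cost at $s_m$ is half of this, namely at most $\|X\|_{\mathrm{tr}}$. This gives $x\le_v\|X\|_{\mathrm{tr}}e_S$. Summing the entries yields $\|X\|_{W_1}\le|S|\,\|X\|_{\mathrm{tr}}$, because $\|X\|_{W_1}$ is the minimum $\ell_1$ cost over transport plans.

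There is no substantial obstacle here. The only points needing care are the bookkeeping showing that $Y_k$ vanishes, which is exactly where the hypothesis $\tr_S(X)=0$ is used, and the constant: the factor $2$ from the triangle inequality cancels against the $\tfrac12$ in the cost, so the bound carries no extra constant.
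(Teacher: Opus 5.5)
Your proposal is correct and matches the argument the paper relies on: the paper gives no proof of its own, citing instead the proof of \cite[Proposition~5]{DTML21}, which is exactly this telescoping. Each component is $\Delta_m(X)=(F_{m-1}-F_m)(X)$ with $F_m=\mathbb E_{s_1}\cdots\mathbb E_{s_m}$, the same device the paper reuses in \cref{lem:local-kick-lift}, and your checks that $Y_k=0$ from $\tr_S(X)=0$, that $\tr_{s_m}$ annihilates each component, and that the factor $2$ from the triangle inequality cancels the $\tfrac12$ in the cost are all complete.
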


    We will eventually be trying to show that certain maps act as contractions in Wasserstein norm, namely
    \[
    \wnorm {\Phi (X)} \leq q\wnorm X,
    \]
     for $0\leq q < 1$ and for every nonzero traceless Hermitian \(X\). Thus, we need to track how to take a transport plan of $X$ and construct a transport plan for $\Phi(X)$. It is useful to introduce the notion of update matrix. We use a slight extension of the update-matrix definition in \cite[Definition~3.16]{BLMT25}, allowing Hermiticity-preserving linear maps that preserve tracelessness.

    \begin{definition}[Update matrix] \label{Update matrix}
    Let $\Phi$ be a Hermiticity-preserving linear map that maps traceless operators to traceless operators. A matrix $Q\in\mathbb R_{\geq0}^{n\times n}$ is an update matrix for $\Phi$ if, for any traceless Hermitian matrix $X$ and any transport plan $(X_i)_i$ of $X$, the matrix $\Phi(X)$ has a transport plan with cost vector
    \[
        y \leq_v Qx,
    \]
    where $x$ is the cost vector of $(X_i)_i$.
    \end{definition}

    The behavior of cost vectors under evolution need not be linear: we merely bound it in a linear way. A linear bound is natural because transport plans can be combined linearly, and their cost vectors satisfy the corresponding bound by the triangle inequality (\cref{prop on lin}).

    \begin{lemma}[Wasserstein norm of a few-site map {\cite[Lemma 4.16]{BLMT25}}] \label{Lemma few site}
        Given two matrices $K$ and $L$, consider a map $\Psi$ of the form
        \[
        \Psi(\rho) = K\rho L^\dagger + L\rho K^\dagger - \frac{1}{2}\{L^\dagger K + K^\dagger L,\rho\}.
        \]
        Further suppose that \(  S = \supp(K)\cup \supp(L) \). Then this map is trace-annihilating and, if $X$ is a traceless Hermitian matrix, then $\Psi(X)$ has a transport plan $\{Y_k\}_k$ whose cost vector is entry-wise bounded by
        \[
        4\,\|K\| \|L\| \|X\|_{\mathrm{tr}}\, e_S.
        \]
    \end{lemma}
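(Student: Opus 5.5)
The plan is to use \cref{Lemma W1 with fewsite deviations} on $\Psi(X)$. That means I need two facts: $\tr_S\Psi(X)=0$, and a bound on $\|\Psi(X)\|_{\mathrm{tr}}$. Once both are in hand, the lemma yields a transport plan whose cost vector is bounded entrywise by $\|\Psi(X)\|_{\mathrm{tr}}\,e_S$.

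\textbf{Step 1: Hermiticity and trace annihilation.} The map $\Psi$ is manifestly Hermiticity-preserving. Each term is paired with its adjoint, and the anticommutator coefficient $L^\dagger K+K^\dagger L$ is Hermitian.

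To show $\tr_S\Psi(\rho)=0$ for every $\rho$, I will use that $K$ and $L$ act as $K_S\otimes I_{S^c}$ and $L_S\otimes I_{S^c}$. Partial trace over $S$ is cyclic in operators supported on $S$. Hence
\[
\tr_S(K\rho L^\dagger)=\tr_S(L^\dagger K\rho),\qquad \tr_S(\rho L^\dagger K)=\tr_S(L^\dagger K\rho).
\]
The same identities hold with $K$ and $L$ swapped. Therefore
\[
\tr_S\bigl(K\rho L^\dagger+L\rho K^\dagger\bigr)=\tr_S\bigl((L^\dagger K+K^\dagger L)\rho\bigr),
\]
and this equals $\tfrac12\tr_S\{L^\dagger K+K^\dagger L,\rho\}$, so the two contributions cancel. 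Taking the full trace afterwards shows $\Psi$ is trace-annihilating, so $\Psi(X)$ is a traceless Hermitian operator with $\tr_S\Psi(X)=0$.

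\textbf{Step 2: trace-norm bound.} By Hölder's inequality $\|AXB\|_{\mathrm{tr}}\le\|A\|\,\|X\|_{\mathrm{tr}}\,\|B\|$, the two sandwich terms together contribute at most $2\|K\|\|L\|\|X\|_{\mathrm{tr}}$. The anticommutator term is
\[
\tfrac12\{L^\dagger K+K^\dagger L,X\},
\]
and since $\|L^\dagger K+K^\dagger L\|\le2\|K\|\|L\|$, it contributes at most $\tfrac12\cdot 2\cdot 2\|K\|\|L\|\|X\|_{\mathrm{tr}}=2\|K\|\|L\|\|X\|_{\mathrm{tr}}$. Summing gives
\[
\|\Psi(X)\|_{\mathrm{tr}}\le4\|K\|\|L\|\|X\|_{\mathrm{tr}}.
\]

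\textbf{Step 3: conclude.} Apply \cref{Lemma W1 with fewsite deviations} with this $S$. It gives a transport plan whose cost vector is bounded entrywise by $\|\Psi(X)\|_{\mathrm{tr}}e_S$, which by Step 2 is at most $4\|K\|\|L\|\|X\|_{\mathrm{tr}}e_S$. This is exactly the claimed bound.

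The only delicate point is the partial-trace cyclicity in Step 1. It must be applied only to operators supported inside $S$, which is guaranteed by the definition of $S$ as $\supp(K)\cup\supp(L)$. The constant $4$ is not tight, but it matches the statement, so no sharper accounting is needed.
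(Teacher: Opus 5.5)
Your proof is correct: partial-trace cyclicity gives $\tr_S\Psi(\rho)=0$, the H\"older estimate gives $\|\Psi(X)\|_{\mathrm{tr}}\le 4\|K\|\|L\|\|X\|_{\mathrm{tr}}$, and \cref{Lemma W1 with fewsite deviations} converts these into the stated cost bound with exactly the constant $4$. The paper does not prove this lemma itself but imports it from \cite[Lemma~4.16]{BLMT25}; your argument is the standard proof, using the few-site lemma stated just before it.
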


    \section{The model and the Gibbs sampler}
    \subsection{Properties of the Hamiltonian}
    We now describe the class of Hamiltonians considered in this paper. We work with $2$-local Hamiltonians on arbitrary interaction graphs, including fully connected systems as a special case. We impose no spatial locality assumptions. Our high-temperature condition depends on the maximum total interaction strength per site, quantified by $\Lambda$ below, without the additional graph-growth parameters appearing in \cite{BLMT25}.

    \begin{definition}[$2$-local Hamiltonian] \label{Hamiltonian}
        We define a Hamiltonian on $n$ qubits as an operator $H \in \mathbb{C}^{2^n \times 2^n}$ such that
        \[
        H = \sum_{1 \leq i < j \leq n} H_{i,j},
        \]
        where each $H_{i,j} \in \mathbb{C}^{2^n \times 2^n}$ is Hermitian and supported on $\{i,j\}$. Zero interaction terms are allowed. Moreover, for each pair of indices $i<j$, we define
        \[
        w_{i,j} := \| H_{i,j} \| , \qquad w_{j,i} := w_{i,j}, \qquad w_{i,i} := 0,\]
        and we set
        \begin{equation} \label{Lambda}
            \Lambda := \max_{i \in [n]} \sum_{j=1}^n w_{i,j}.
        \end{equation}

    \end{definition}

    The quantity in \eqref{Lambda} is the local interaction budget for this specified decomposition. It is closely related to the local norm of \cite[Definition~3.1]{DP25}: that norm is obtained by minimizing twice the maximum incident sum over all decompositions into supported Hermitian terms, rather than fixing the pair decomposition above.

    The same maximum incident interaction strength appears as the parameter $\alpha$ in \cite[p.~25, after Eq.~(114)]{Bergamaschi26}, where it is further bounded by degree times pairwise interaction strength. Here we retain the individual weights in $\Lambda$.

    Since we will be referring to supports of terms often, we consider the following definition.
    \begin{definition}[Notation for a cluster]
        For a sequence of terms of the Hamiltonian given by the vectors $\bar i, \bar j \in [n]^k,$ with $k \geq 0$ and $\bar i <_v \bar j$, we let
        \begin{equation}
            S_{\bar i, \bar j} = \supp(H_{i_1, j_1}) \cup \cdots \cup \supp(H_{i_k, j_k}) \subseteq\left\{i_1,\,j_1,\,\ldots,\,i_k,\,j_k\right\},
        \end{equation}
        be the support of the associated terms.
    \end{definition}
    The following nested-commutator expansion is also the starting point of \cite[Lemma~3.4 and Eq.~(41)]{Bergamaschi26}. We retain ordered interaction sequences, including repetitions, to track their weights and supports.
    \begin{lemma} \label{Expansion with nested}
        Given our Hamiltonian $H$, for all operators \(X\) and all \(\alpha \in \mathbb{C}\), the following equality holds:
        \[
        e^{\alpha H} X e^{-\alpha H} = \sum_{k=0}^\infty
        \sum_{\substack{\bar i , \bar j \in [n]^k\\\bar i <_v \bar j}} \frac{ \alpha^k}{k!}
        \,[H_{i_k, j_k},[\dots[H_{i_2,j_2},[H_{i_1,j_1}, X]]\dots]].
        \]
        Further, the size of the \(k\)th order term can be bounded as follows
        \[
        \left\|[H_{i_k, j_k},[\dots[H_{i_2,j_2},[H_{i_1,j_1}, X]]\dots]]\right\| \leq 2^k \left(\prod_{\ell=1}^k w_{i_\ell,j_\ell} \right) \|X\|.
        \]
        Moreover, this nested commutator is zero unless, for all \(\ell \in [k]\), $\{i_\ell,\,j_\ell\}$ intersects the support of \(X\) or \(\{i_1,\,j_1,\,\ldots,\,i_{\ell-1},\,j_{\ell-1}\}\). When $(\bar i, \bar j)$ satisfies this criterion, we call it a cluster from \(\supp(X)\).
    \end{lemma}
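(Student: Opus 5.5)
The plan is to obtain the expansion from the adjoint-action series and then use a support-tracking induction for the vanishing criterion. Define $\mathrm{ad}_H(Y):=[H,Y]$. Since $H$ is a finite-dimensional operator, the standard identity $e^{\alpha H}Xe^{-\alpha H}=e^{\alpha\,\mathrm{ad}_H}(X)=\sum_{k\ge0}\frac{\alpha^k}{k!}\mathrm{ad}_H^k(X)$ holds for all $\alpha\in\mathbb C$. To justify it, differentiate $F(\alpha)=e^{\alpha H}Xe^{-\alpha H}$ to get $F'=\mathrm{ad}_H F$, $F(0)=X$. The exponential series of the bounded superoperator $\mathrm{ad}_H$ converges absolutely.

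Next, expand $\mathrm{ad}_H=\sum_{i<j}\mathrm{ad}_{H_{i,j}}$ by linearity. The $k$-fold power then becomes a sum over ordered sequences of pairs $(i_1<j_1),\dots,(i_k<j_k)$, with the $\ell$-th commutator applied at step $\ell$, innermost first. These sequences are exactly the vectors $\bar i,\bar j\in[n]^k$ with $\bar i<_v\bar j$. This gives the displayed series. The rearrangement is harmless because each $k$ involves only a finite sum.

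For the norm bound, I will use $\|[A,B]\|\le 2\|A\|\|B\|$ and induct on $k$. Each commutator contributes a factor $2w_{i_\ell,j_\ell}$, so the nested commutator has norm at most $2^k\prod_\ell w_{i_\ell,j_\ell}\|X\|$.

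For the vanishing criterion, let $Y_\ell$ denote the $\ell$-fold nested commutator, with $Y_0=X$. I will show by induction that $\supp(Y_\ell)\subseteq\supp(X)\cup\{i_1,j_1,\dots,i_\ell,j_\ell\}$. The inductive step holds because a commutator is supported in the union of the supports of its two arguments. Now suppose that at some step $\ell$ the set $\{i_\ell,j_\ell\}$ is disjoint from $\supp(X)\cup\{i_1,\dots,j_{\ell-1}\}$. Then $H_{i_\ell,j_\ell}$ and $Y_{\ell-1}$ act on disjoint tensor factors, so they commute. Hence $Y_\ell=0$, and all further commutators vanish as well. Only clusters from $\supp(X)$ therefore contribute.

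The only mildly delicate point is this support bookkeeping. In particular, "supported on $S$" must mean of the form $A_S\otimes I_{S^c}$, so that operators with disjoint supports commute. Everything else is routine.
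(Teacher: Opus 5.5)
Your proposal is correct and follows essentially the same route as the paper. Both arguments use the ODE characterisation giving $e^{\alpha\,\mathrm{ad}_H}(X)$, expand $\mathrm{ad}_H^k$ by linearity into ordered edge sequences, apply $\|[A,B]\|\le 2\|A\|\|B\|$ recursively, and obtain the vanishing criterion from the commutation of operators with disjoint supports. Your explicit induction $\supp(Y_\ell)\subseteq\supp(X)\cup\{i_1,j_1,\dots,i_\ell,j_\ell\}$ spells out support bookkeeping that the paper leaves implicit.
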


    \begin{proof}
        We define
        \[
        F(\alpha):= e^{\alpha H} X e^{-\alpha H},
        \]
         and we observe that $F$ solves the following Cauchy problem
        \begin{equation*}
            \begin{cases}
                F'(\alpha) = \operatorname{ad}_H(F(\alpha)) \\
                F(0) = X,
            \end{cases}
        \end{equation*}
        where
        \[
        \operatorname{ad}_H(\cdot) := [H, \cdot].
        \]
         By the uniqueness of the solution we obtain
        \[F(\alpha) = e^{\alpha \operatorname{ad}_H}(X) = \sum_{k=0}^{\infty} \frac{\alpha^k}{k!}\operatorname{ad}_H^k(X).\]
        Thus,
        \begin{align*}
        e^{\alpha H} X e^{-\alpha H} = \sum_{k=0}^{\infty} \frac{\alpha^k}{k!} [H,[H,\dots[H,X]]] = \sum_{k=0}^\infty
        \sum_{\substack{\bar i , \bar j \in [n]^k\\\bar i <_v \bar j}} \frac{\alpha^k}{k!}
        \,[H_{i_k, j_k},[\dots[H_{i_2,j_2},[H_{i_1,j_1}, X]]\dots]].
        \end{align*}
        To bound the operator norm of each nested commutator, observe that for arbitrary
        operators \(A\) and \(B\), \( \|[A,B]\| \leq 2\|A\|\|B\| \). Applying this estimate recursively yields
        \[
        \left\|[H_{i_k, j_k},[\dots[H_{i_2,j_2},[H_{i_1,j_1}, X]]\dots]]\right\| \leq 2^k \left(\prod_{\ell=1}^k w_{i_\ell,j_\ell} \right)\|X\|.
        \]
        Finally, we note that the nested commutator $ \left[H_{i_k, j_k},[\dots[H_{i_1,j_1}, X]\dots]\right] $ vanishes whenever, for some \(1\leq \ell\leq k\), $H_{i_\ell,j_\ell}$ commutes with \( [H_{i_{\ell-1}, j_{\ell-1}},[\dots[H_{i_1,j_1}, X]\dots]] \).
        A sufficient condition for this commutation is that
        \[
        \supp(H_{i_\ell, j_\ell}) \cap \left(
        \supp(H_{i_{\ell-1}, j_{\ell -1}}) \cup
        \cdots \cup \supp(H_{i_1, j_1}) \cup
        \supp(X) \right) = \varnothing,
        \]
        since in that case the two operators act on disjoint subsystems and therefore
        commute.
    \end{proof}
    The next estimates adapt the rooted-tree summation and successive leaf elimination of \cite[Lemma~3.10]{Bergamaschi26}. The marked-site sum also parallels \cite[Corollary~3.8]{Bergamaschi26}; here ordered edge occurrences and transport-support costs give the factors needed for our update matrices.
    \begin{lemma}[Weighted cluster estimates] \label{lem:weighted-clusters}
        For every integer \(t\ge1\), the following two inequalities hold
        \begin{align} \label{eq:rooted-cluster}
            \sup_{i\in [n]}\frac1{t!}
            \sum_{\substack{ \bar i, \bar j \in [n]^t, \bar i <_v \bar j\\
                            (\bar i, \bar j) \, {\rm clust.\ from}\ i}}
            \prod_{\ell=1}^t w_{i_\ell,j_\ell}  &\le\Lambda^t, \\  \label{eq:marked-cluster}
            \sup_{j \in [n]}\frac1{t!}\sum_{i=1}^n
            \sum_{\substack{\bar i, \bar j \in [n]^t, \bar i <_v \bar j\\ (\bar i, \bar j) \, {\rm clust.\ from}\ i, \ j \in S_{\bar i, \bar j}}}
            |S_{\bar i, \bar j}|\prod_{\ell=1}^t w_{i_\ell,j_\ell}
            &\le(t+1)^2\Lambda^t.
        \end{align}
    \end{lemma}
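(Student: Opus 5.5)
The plan is to prove both inequalities by summing out the interaction terms one at a time, from the last to the first. The only structural input is that clusters grow connectedly. Fix a cluster \((\bar i,\bar j)\) from \(i\) and write \(V_\ell:=\{i\}\cup\{i_1,j_1,\dots,i_\ell,j_\ell\}\). By \cref{Expansion with nested}, the edge \(\{i_{\ell+1},j_{\ell+1}\}\) meets \(V_\ell\), so it adds at most one new site. Hence \(|V_\ell|\le \ell+1\), and in particular \(|S_{\bar i,\bar j}|\le |V_t|\le t+1\).

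\emph{Rooted bound \eqref{eq:rooted-cluster}.} Fix the first \(\ell-1\) edges and sum over the \(\ell\)-th edge subject to the cluster condition. Every admissible edge contains some \(v\in V_{\ell-1}\), so
\[
\sum_{\{i_\ell,j_\ell\}\cap V_{\ell-1}\neq\varnothing} w_{i_\ell,j_\ell}\le \sum_{v\in V_{\ell-1}}\sum_{u} w_{v,u}\le |V_{\ell-1}|\,\Lambda\le \ell\Lambda .
\]
Applying this for \(\ell=t,t-1,\dots,1\) gives \(\prod_{\ell=1}^t \ell\Lambda=t!\,\Lambda^t\), uniformly in \(i\).

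\emph{Marked bound \eqref{eq:marked-cluster}.} Since \(|S_{\bar i,\bar j}|\le t+1\), it suffices to show
\[
A_t(j):=\sum_{i}\sum_{\substack{(\bar i,\bar j)\ \text{clust. from } i\\ j\in S_{\bar i,\bar j}}}\prod_\ell w_{i_\ell,j_\ell}\le (t+1)\,t!\,\Lambda^t .
\]
I will prove this by induction on \(t\), extended to \(s=0\) with the convention \(A_0(v)=1\), where the only term is the root \(i=v\). Split according to whether \(j=i\) or \(j\neq i\).

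\emph{Case \(j=i\).} This contributes at most \(t!\,\Lambda^t\) by the rooted bound.

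\emph{Case \(j\neq i\).} Let \(m\in[t]\) be the first index with \(j\in\{i_m,j_m\}\). Then \(j\notin V_{m-1}\), so the \(m\)-th edge has the form \(\{v,j\}\) for some \(v\in V_{m-1}\).

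1. The edges after position \(m\) are summed out exactly as in the rooted bound. They contribute at most \(\prod_{\ell=m+1}^t \ell\Lambda=\frac{t!}{m!}\Lambda^{t-m}\).
2. For edge \(m\), sum over \(v\) with weight \(w_{v,j}\).
3. The first \(m-1\) edges form a cluster from \(i\) whose vertex set \(V_{m-1}\) contains \(v\). Summed over \(i\), this is bounded by \(A_{m-1}(v)\). This uses the variant of \(A_{m-1}\) with membership in \(V_{m-1}\) rather than in \(S\); the two sets differ only by the root, which is exactly covered by the \(j=i\) case at the lower level, so I will define \(A_s\) with \(V_s\) from the start.

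By the induction hypothesis \(A_{m-1}(v)\le m\,(m-1)!\,\Lambda^{m-1}=m!\,\Lambda^{m-1}\), and \(\sum_v w_{v,j}\le\Lambda\). So position \(m\) contributes at most
\[
\Lambda\cdot m!\,\Lambda^{m-1}\cdot \frac{t!}{m!}\Lambda^{t-m}=t!\,\Lambda^t .
\]
Summing over \(m\in[t]\) and adding the \(j=i\) case gives \(A_t(j)\le (t+1)\,t!\,\Lambda^t\), as claimed. Multiplying by \(|S_{\bar i,\bar j}|\le t+1\) and dividing by \(t!\) yields \((t+1)^2\Lambda^t\).

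\emph{Main obstacle.} The delicate step is the marked sum. The marked site \(j\) need not lie in the first edge, so the clusters cannot simply be rerooted at \(j\) without losing control of the ordering factor \(t!\). The first-appearance decomposition is what avoids any combinatorics of reorderings. It requires care in two places: stating the induction with \(V_s\) (root included), and checking that \(j\notin V_{m-1}\) forces the \(m\)-th edge to be \(\{v,j\}\) with \(v\in V_{m-1}\).
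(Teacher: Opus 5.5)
Your proof is correct and reaches the same constants as the paper: $t!\,\Lambda^t$ for the rooted sum, and $(t+1)\,t!\,\Lambda^t$ for the marked sum before the factor $|S_{\bar i,\bar j}|\le t+1$. The route, however, is more elementary.

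The paper encodes each cluster injectively as a parent array $p_\ell\in\{0,\dots,\ell-1\}$, giving $t!$ formal trees, with physical labels $x_0=i,x_1,\dots,x_t$. It then proves one tree estimate, $Z_T(r,a)\le\Lambda^{|E(T)|}$, by leaf elimination, and this holds for \emph{any} pinned formal vertex $r$. The rooted bound is the case $r=0$. The marked bound follows from $\ind_{\{j\in S_{\bar i,\bar j}\}}\le\sum_{r=0}^t\ind_{\{x_r=j\}}$: pinning some $r\neq 0$ at $j$ leaves the root label free, and this absorbs $\sum_i$.

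Your reverse-order summation, with the factor $|V_{\ell-1}|\Lambda\le\ell\Lambda$, is the same leaf elimination. Formal vertices $t,t-1,\dots,1$ are removed in turn, each a leaf when removed. The difference is that you count the $\ell$ attachment choices directly rather than through an encoding, so you need no injectivity argument.

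For the marked sum, you replace the rerooting symmetry of $Z_T$ by the first-appearance decomposition plus a strong induction on the strengthened quantity that uses $V_s$ in place of $S_{\bar i,\bar j}$. The index $m$ plays the role of the pinned vertex. The induction hypothesis on the first $m-1$ edges does the work that the free root label does in the paper.

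What each approach buys: the paper's version gives a reusable lemma for arbitrary trees and pinned vertices, and an explicit accounting of the factors $t!\,(t+1)\,(t+1)$ without induction on $t$. Yours is self-contained and avoids the encoding altogether.

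One remark in your proposal is inaccurate, though harmless. $V_s$ and $S_{\bar i,\bar j}$ need not differ only by the root. $S_{\bar i,\bar j}$ is a union of operator supports, so it may omit an endpoint of an edge on which $H_{i_\ell,j_\ell}$ acts trivially. You only ever use $S_{\bar i,\bar j}\subseteq V_t$, which does hold, so nothing breaks.
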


    \begingroup\postdisplaypenalty=10000
    \begin{proof}
        Set $w_{i,i}=0$ for each $i \in [n]$. Fix a root $i$ and a cluster
        \[
        (\bar i,\bar j), \bar i <_v \bar j
        \]
         of order $t$ starting from $i$. We associate to the cluster a sequence of physical labels
        $ x_0,\ldots,x_t\in[n] $ and parent indices
        \[
        p_\ell\in \{0,\ldots,\ell-1\},\qquad 1\le \ell\le t.
        \]
        Start by setting $x_0=i$. We construct the remaining labels inductively. Before processing the $\ell$-th edge $(i_\ell,j_\ell)$, every site that has already been reached by the cluster appears among $ x_0,\ldots,x_{\ell-1} $. This is true initially, and it remains true because at step $\ell$ the only endpoint that can be newly reached is the one recorded as $x_\ell$. For the edge $(i_\ell,j_\ell)$, choose one endpoint $u_\ell$ that has already been reached. If both endpoints have already been reached, choose the smaller one according to the fixed ordering of $[n]$. Let $v_\ell$ denote the other endpoint. Define
        \[
        p_\ell:=\min \{q<\ell:x_q=u_\ell\}, \qquad x_\ell:=v_\ell.
        \]
        Thus, $p_\ell$ records the first occurrence of the already-reached endpoint $u_\ell$, while $x_\ell$ records the other endpoint $v_\ell$. Then an edge is given by $ e_\ell=\{x_{p_\ell},x_\ell\}$. The formal graph \(T\) on vertices \(\{0,\ldots,t\}\), with edges \(\{p_\ell,\ell\}\), is a tree rooted at \(0\), because every \(\ell\ge1\) has one parent with smaller index.

        The word ``formal'' is important: the labels \(x_0,\ldots,x_t\) need not be distinct.  Thus a new formal vertex is created at every edge occurrence even when that edge closes a cycle or repeats a physical edge.
        The encoding is injective. From $x_0=i$ and the pairs $(p_\ell,x_\ell)$ one recovers inductively $x_{p_\ell}$, hence the unordered edge
        \[
        \{i_\ell,j_\ell\}=\{x_{p_\ell},x_\ell\}
        \]
         for every $\ell$. Since $i_\ell<j_\ell$, the unordered edge determines $(i_\ell,j_\ell)$.

        Now, fix any such tree \(T\), any formal vertex \(r\), and the physical label
        \(x_r=a\).  Define
        \begin{equation}
        \label{eq:tree-label-sum}
        Z_T(r,a):=
        \sum_{\substack{x:V(T)\to[n]\\x_r=a}}
        \prod_{\{q,s\}\in E(T)}w_{x_q,x_s}
        \le\Lambda^{|E(T)|}=\Lambda^t.
        \end{equation}
        Thus the sum is over all \(n^t\) assignments of physical sites to the formal vertices other than \(r\); the assigned sites need not be distinct.  We prove the inequality in \eqref{eq:tree-label-sum} by induction on \(|E(T)|\).

        If \(T\) has no edges, then \(Z_T(r,a)=1\).  Otherwise, a tree has a leaf \(q\ne r\). Let \(s\) be its unique neighbor and put \(T':=T\setminus\{q\}\). First sum only the label \(x_q\), keeping all labels on \(T'\) fixed.  Since the only factor containing \(x_q\) is \(w_{x_q,x_s}\), we obtain
        \[
        \begin{aligned}
        Z_T(r,a) =\sum_{\substack{x:V(T')\to[n]\\x_r=a}} \left(\prod_{\{u,v\}\in E(T')}w_{x_u,x_v}\right) \left(\sum_{b=1}^n w_{b,x_s}\right) \le\Lambda Z_{T'}(r,a) \le\Lambda^{|E(T)|}.
        \end{aligned}
        \]
        The first inequality follows from the definition of $\Lambda$ and the second is the induction hypothesis for \(T'\). For example, if \(T\) is the two-edge path \(r-u-v\), then
        \[
        Z_T(r,a) = \sum_{b,c=1}^n w_{a,b}w_{b,c} = \sum_{b=1}^n w_{a,b}\left(\sum_{c=1}^n w_{b,c}\right) \le \Lambda\sum_{b=1}^n w_{a,b} \le \Lambda^2.
        \]
        The induction repeats exactly this operation, one leaf at a time. Notice also that, unless \(r=0\), the original root label \(x_0\) is among the variables summed in \(Z_T(r,a)\).  This observation is what will absorb the sum over the original root \(i\) in the marked estimate.

        There are
        \[
        \prod_{\ell=1}^t\ell=t!
        \]
         possible parent arrays \(p=(p_1,\ldots,p_t)\); denote their formal tree by \(T_p\).  For a fixed \(p\), the total weight of all formal physical-label assignments with \(x_0=i\) is exactly \(Z_{T_p}(0,i)\), and \eqref{eq:tree-label-sum} bounds it by \(\Lambda^t\).  The assignments arising from encoded clusters form a subset of these formal assignments. Since the cluster encoding is injective,
        \begin{equation} \label{subsresult in the cluster proof}
        \sum_{\substack{\bar i, \bar j \in [n]^t, \bar i <_v \bar j\\
                          (\bar i,\bar j) \ {\rm cluster\ from}\ i}}
        \prod_{\ell=1}^tw_{i_\ell, j_\ell}
        \le t!\Lambda^t.
        \end{equation}
        After division by \(t!\), this proves \eqref{eq:rooted-cluster}.

        Fix the marked physical site \(j\).  Given $\bar i, \bar j \in [n]^t$, every site in \(S_{\bar i, \bar j}\) appears at least once among the formal labels \(x_0,\ldots,x_t\).  Consequently,
        \[
        \ind_{\{j\in S_{\bar i, \bar j}\}}
        \le\sum_{r=0}^t \ind_{\{x_r=j\}}.
        \]
        After summing over the original root \(i=x_0\), injectivity of the
        encoding and \eqref{eq:tree-label-sum} give
        \begin{align*}
        &\sum_{i=1}^n
        \sum_{\substack{\bar i, \bar j \in [n]^t\\
                          (\bar i, \bar j) \ {\rm clust. \ from}\ i\\
                          j\in S_{\bar i, \bar j}}}
        \prod_{\ell=1}^tw_{i_\ell, j_\ell} \le
        \sum_p\sum_{r=0}^t Z_{T_p}(r,j)
        \le t!(t+1)\Lambda^t.
        \end{align*}
        For \(r=0\), \(Z_{T_p}(0,j)\) covers clusters rooted at \(i=j\).  For
        \(r\ne0\), the definition of \(Z_{T_p}(r,j)\) sums \(x_0=i\) over all
        sites, so it covers the sum over every possible original root.
        Finally, \(|S_{\bar i, \bar j}|\le t+1\).  Multiplying the last bound by
        \(t+1\) and dividing by \(t!\) proves
        \eqref{eq:marked-cluster}.  Thus the four
        factors in the marked count are completely explicit:
        \[
        \underbrace{t!}_{\text{parent arrays}}\,
        \underbrace{(t+1)}_{\text{formal occurrence of }j}\,
        \underbrace{(t+1)}_{\text{transport support}}\,
        \underbrace{\Lambda^t}_{\text{summed physical labels}}.
        \]
    \end{proof}
    \endgroup

    \subsection{Definition and properties of the Lindbladian}
    We use the Gibbs-preserving Lindbladian of \cite[Definition~4.1]{BLMT25}. Its jump operators and coherent correction fit the structural theory and canonical completion of KMS-symmetric generators \cite{FU10,AC21} and the exact KMS sampler constructions discussed in \cite{DLL25}. The rapid-mixing theorem below establishes uniqueness in the temperature regime considered here.
    \begin{definition} [Balanced Lindbladian]\label{Lindbladian}
        Given the target Gibbs state
        \[
        \sigma = \frac{e^{-\beta H}}{\tr(e^{-\beta H})}
        \]
        at inverse temperature $\beta > 0$, we consider the Lindblad evolution given by
        \[
        \cL(\rho) = \sum_{j=1}^n \cL_j(\rho) = \sum_{j=1}^n \sum_{P \in \cP_j} \cL^P(\rho),
        \]
        where
        \[
        \cL^P(\rho) = -\ii [G^P,\rho] + A^P \rho (A^P)^\dagger - \frac{1}{2}\{(A^P)^\dagger A^P,\rho\}.
        \]
        In other words, there is a Lindbladian term for every 1-local Pauli matrix. For a Hermitian proposal operator $P$, not necessarily a Pauli operator, we define the corresponding jump operator by
        \begin{equation} \label{def of jump operators}
        A^P = \sigma^{1/4} P \sigma^{-1/4}.
        \end{equation}
        Further, we take $G^P$ such that in an eigenbasis of $\sigma$, with matrix indices $a,b\in[N]$,
        \[
        G_{a,b}^P = -\ii \left(
        \frac{\sigma_a^{1/2}-\sigma_b^{1/2}} {2  \sigma_a^{1/4}\sigma_b^{1/4}(\sigma_a^{1/2}+\sigma_b^{1/2})} \right) \left( \sum_{c=1}^N \sigma_c^{1/2} P_{a,c} P_{c,b} \right),
        \]
        where $\{\sigma_c\}_{c=1}^{N}$ are the eigenvalues of $\sigma$.
        Note that $G^P$ is Hermitian and can be rewritten as follows, where $\circ$ denotes entry-wise product:
        \begin{equation} \label{expression for GP}
        G^P = -\frac{\ii }{2}\left((A^P)^\dagger A^P\right)\circ \left\{\frac{\sigma_a^{1/2}-\sigma_b^{1/2}} {\sigma_a^{1/2}+\sigma_b^{1/2}} \right\}_{a,b\in[N]}.
        \end{equation}
    \end{definition}

    This Lindbladian satisfies the KMS detailed balance condition, which also implies the stationarity of the Gibbs state.
    \begin{lemma}[KMS detailed balance {\cite[Lemma A.3]{BLMT25}}] \label{KMS detailed balance}
        The Lindbladians $\cL^P$ and $\cL $ introduced in \cref{Lindbladian} satisfy the Kubo-Martin-Schwinger (KMS) detailed balance, i.e. for any Hermitian matrix $P$, for every operator $B$,
        \begin{equation} \label{KMS condition}
            \cL^P (\sigma^{1/2} B \sigma^{1/2}) = \sigma^{1/2} (\cL^{P})^\dagger(B) \sigma^{1/2}.
        \end{equation}
        Consequently, the Gibbs state $\sigma$ is a stationary state for these Lindbladians: $\cL^P(\sigma)=0$ for every Hermitian matrix $P$ and $\cL(\sigma)=0$.
    \end{lemma}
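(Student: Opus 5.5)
The plan is to verify \eqref{KMS condition} directly, using the modular map $\Gamma(X):=\sigma^{1/2}X\sigma^{1/2}$. The claim is equivalent to $\Gamma^{-1}\circ\cL^P\circ\Gamma=(\cL^P)^\dagger$. I will split $\cL^P$ into a jump part and a ``drift'' part and check the identity for each separately. Throughout I fix an eigenbasis of $\sigma$ with positive eigenvalues $\sigma_a$, so all powers $\sigma^{s}$ are diagonal and invertible.

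\emph{Jump part.} Since $P$ is Hermitian, $(A^P)^\dagger=\sigma^{-1/4}P\sigma^{1/4}$. Then
\[
A^P\Gamma(B)(A^P)^\dagger=\sigma^{1/4}P\sigma^{1/4}B\sigma^{1/4}P\sigma^{1/4}.
\]
The adjoint of $X\mapsto A^PX(A^P)^\dagger$ is $B\mapsto (A^P)^\dagger BA^P=\sigma^{-1/4}P\sigma^{1/4}B\sigma^{1/4}P\sigma^{-1/4}$. Conjugating the latter by $\sigma^{1/2}$ on both sides gives the same expression as the former. So the jump parts intertwine exactly.

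\emph{Drift part.} Set $V:=(A^P)^\dagger A^P$, which is Hermitian, and $K:=\ii G^P+\tfrac12 V$. Then the non-jump part of $\cL^P$ is $X\mapsto -(KX+XK^\dagger)$, and its Hilbert--Schmidt adjoint is $B\mapsto-(K^\dagger B+BK)$. The drift identity therefore reads
\[
\sigma^{-1/2}K\sigma^{1/2}B+B\sigma^{1/2}K^\dagger\sigma^{-1/2}=K^\dagger B+BK
\quad\text{for all }B.
\]
It suffices to show $\sigma^{-1/2}K\sigma^{1/2}=K^\dagger$, since the second equality is its adjoint. In the eigenbasis this condition becomes $K_{ab}\,r_{ab}=\overline{K_{ba}}$ with $r_{ab}:=(\sigma_b/\sigma_a)^{1/2}$. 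Because $G^P$ and $V$ are Hermitian, $\overline{K_{ba}}=-\ii G^P_{ab}+\tfrac12V_{ab}$. The condition is thus the linear equation
\[
\ii G^P_{ab}(1+r_{ab})=\tfrac12 V_{ab}(1-r_{ab}),
\]
whose solution is
\[
G^P_{ab}=-\tfrac{\ii}{2}V_{ab}\,\frac{\sigma_a^{1/2}-\sigma_b^{1/2}}{\sigma_a^{1/2}+\sigma_b^{1/2}}.
\]
This is exactly \eqref{expression for GP}. I will also briefly check that this formula defines a Hermitian matrix: the factor $-\ii$ and the antisymmetric ratio combine to give $\overline{G^P_{ba}}=G^P_{ab}$. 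With that, the drift parts intertwine as well.

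\emph{Stationarity.} Taking $B=I$ in \eqref{KMS condition} gives $\cL^P(\sigma)=\sigma^{1/2}(\cL^P)^\dagger(I)\sigma^{1/2}$. The Heisenberg-picture generator of any Lindbladian annihilates the identity, because $(\cL^P)^\dagger(I)=(A^P)^\dagger A^P-(A^P)^\dagger A^P+\ii[G^P,I]=0$. Hence $\cL^P(\sigma)=0$. Summing over $P\in\cP_j$ and $j\in[n]$ gives $\cL(\sigma)=0$, and also the KMS condition for $\cL$ by linearity.

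The only real obstacle is the drift step. One must identify the correct sufficient condition $\sigma^{-1/2}K\sigma^{1/2}=K^\dagger$ and keep track of the conjugations, signs and factors of $\ii$ so that it reproduces \eqref{expression for GP} precisely. Everything else is bookkeeping.
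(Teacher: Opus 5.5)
Your proposal is correct: the jump part intertwines by direct conjugation, the drift part reduces to $\sigma^{-1/2}K\sigma^{1/2}=K^\dagger$ with $K=\ii G^P+\tfrac12 (A^P)^\dagger A^P$ (whose entrywise solution is exactly \eqref{expression for GP}), and stationarity follows from $(\cL^P)^\dagger(I)=0$. The paper gives no proof of its own and simply imports the lemma from \cite[Lemma~A.3]{BLMT25}; your argument is the standard direct verification and makes the statement self-contained.
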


    \begin{remark}
        For each fixed Hermitian operator $P$, the coherent term $G^P$
        in \eqref{expression for GP} is chosen so that the individual
        generator $\cL^P$ satisfies KMS detailed balance. With the jump operator $A^P$ defined in \eqref{def of jump operators} fixed, this coherent term is uniquely determined up to addition of a real scalar
        multiple of the identity. Explicitly, every other Hermitian
        coherent term yielding KMS detailed balance for the same
        single-jump generator has the form
        \[
            \widetilde G^P = G^P + c_P I,
            \qquad c_P \in \mathbb R.
        \]
    \end{remark}

    We now observe that, by means of the Fourier transform, the coherent terms \(G^P\) appearing in the Lindbladian can be expressed as averages over real-time evolutions.

    \begin{lemma}[Fourier transform expression for the coherent term {\cite[Lemma 4.4]{BLMT25}}] \label{lem:Fourier-transform-coherent-term}
        For the Lindbladian in \cref{Lindbladian}, for $i \in [n]$ and $P \in \cP_i$, we have
        \[
            G^P = \int_{-\infty}^{\infty} e^{\ii H\omega} \left( \frac{1}{2\ii} \left[P\sigma^{1/2}P,\sigma^{-1/2}\right] \right) e^{-\ii H\omega} f_\beta(\omega)\,d\omega,
        \]
        where
        \[
            f_\beta(\omega) = \frac{1}{\beta\cosh\left(\frac{2\pi}{\beta}\omega\right) }.
        \]
    \end{lemma}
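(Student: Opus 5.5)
The identity can be checked entrywise in a common eigenbasis of $H$ and $\sigma$. Since $\sigma\propto e^{-\beta H}$, the two operators are simultaneously diagonalizable. Fix an orthonormal eigenbasis with $H=\sum_a E_a |a\rangle\langle a|$ and $\sigma_a=e^{-\beta E_a}/\tr(e^{-\beta H})$. Both sides of the claimed identity are then $N\times N$ matrices, and the integral converges absolutely entrywise because $f_\beta$ decays exponentially and the integrand has bounded entries. So it suffices to compare the $(a,b)$ entries.

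\emph{Step 1: the entries before conjugation.} Set $M:=\frac{1}{2\ii}[P\sigma^{1/2}P,\sigma^{-1/2}]$. Since $\sigma^{-1/2}$ is diagonal,
\[
M_{a,b}=\frac{1}{2\ii}\,(P\sigma^{1/2}P)_{a,b}\left(\sigma_b^{-1/2}-\sigma_a^{-1/2}\right)
=\frac{1}{2\ii}\,\frac{\sigma_a^{1/2}-\sigma_b^{1/2}}{\sigma_a^{1/2}\sigma_b^{1/2}}\sum_{c=1}^N\sigma_c^{1/2}P_{a,c}P_{c,b}.
\]

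\emph{Step 2: effect of real-time evolution.} Conjugation gives $\bigl(e^{\ii H\omega}Me^{-\ii H\omega}\bigr)_{a,b}=e^{\ii\omega(E_a-E_b)}M_{a,b}$. Hence the $(a,b)$ entry of the right-hand side is $M_{a,b}\,\widehat f_\beta(E_a-E_b)$, where $\widehat f_\beta(\nu):=\int_{-\infty}^{\infty} e^{\ii\omega\nu}f_\beta(\omega)\,d\omega$.

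\emph{Step 3: the Fourier transform of $\sech$.} I would use the classical formula
\[
\int_{-\infty}^{\infty}\sech(a\omega)\,e^{\ii\nu\omega}\,d\omega=\frac{\pi}{a}\,\sech\!\left(\frac{\pi\nu}{2a}\right).
\]
It follows from a rectangular contour argument, shifting by $\ii\pi/a$ and picking up the residue at $\omega=\ii\pi/(2a)$. With $a=2\pi/\beta$ this gives $\widehat f_\beta(\nu)=\frac12\sech(\beta\nu/4)$.

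Now take $\nu=E_a-E_b$ and use $e^{\beta(E_a-E_b)/4}=\sigma_b^{1/4}/\sigma_a^{1/4}$, which holds because the normalizations cancel. This yields
\[
\frac12\sech\!\left(\frac{\beta(E_a-E_b)}{4}\right)=\frac{\sigma_a^{1/4}\sigma_b^{1/4}}{\sigma_a^{1/2}+\sigma_b^{1/2}}.
\]

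\emph{Step 4: conclusion.} Multiply the factor from Step 3 by $M_{a,b}$ from Step 1 and use $\frac{1}{2\ii}=-\frac{\ii}{2}$. The $(a,b)$ entry becomes
\[
-\ii\,\frac{\sigma_a^{1/2}-\sigma_b^{1/2}}{2\sigma_a^{1/4}\sigma_b^{1/4}(\sigma_a^{1/2}+\sigma_b^{1/2})}\sum_{c}\sigma_c^{1/2}P_{a,c}P_{c,b}.
\]
This is exactly $G^P_{a,b}$ in \cref{Lindbladian}. The diagonal entries vanish on both sides, and so do entries within degenerate eigenspaces, so no case split is needed.

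The only real obstacle is Step 3: getting the Fourier transform constant and the sign conventions right, so that the factor $\frac12\sech(\beta\nu/4)$ matches the $\sigma^{1/4}$ weights exactly. Everything else is diagonal bookkeeping.
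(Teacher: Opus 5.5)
Your argument is correct and complete. The entrywise formula for $M_{a,b}$, the transform $\widehat f_\beta(\nu)=\tfrac12\sech(\beta\nu/4)$, and the identity $\tfrac12\sech\bigl(\beta(E_a-E_b)/4\bigr)=\sigma_a^{1/4}\sigma_b^{1/4}/(\sigma_a^{1/2}+\sigma_b^{1/2})$ all check out, and their product reproduces $G^P_{a,b}$ exactly. Since $\beta>0$, any eigenbasis of $H$ is an eigenbasis of $\sigma$, so it is a legitimate basis for the definition of $G^P$.

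The paper gives no proof of its own here and simply imports the identity from \cite[Lemma~4.4]{BLMT25}. Your joint-eigenbasis computation, using the Fourier transform of $\sech$, therefore supplies a self-contained proof of the cited statement.
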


    \section{Quantum Dobrushin Condition}

    We follow the Dobrushin-to-mixing framework of \cite[Section~5]{BLMT25}. As in our update-matrix definition, we allow Hermiticity-preserving maps without a positivity assumption; this formulation accommodates the Euler approximations used below. In this section, we define the corresponding quantum generalization of the classical Dobrushin influence matrix associated with a linear map. In the quantum setting, we use the quantum Wasserstein distance as a metric on quantum states. Once this matrix has been defined, we derive a condition showing that contractivity of its \(1\to 1\) norm implies rapid mixing of the corresponding linear map.

    \begin{definition}[Quantum Dobrushin condition] \label{Q Dobrushin Condition}
        Let $\Phi$ be a trace-preserving linear map admitting a decomposition
        \[
            \Phi = \Phi_1 + \cdots + \Phi_n,
        \]
        where each $\Phi_i$ is a Hermiticity-preserving linear map that maps
        traceless matrices to traceless matrices.
        We define the associated Dobrushin influence matrix $D^{(\Phi)}$ to have entries
        \begin{equation}
        D^{(\Phi)}_{i,j} = \max_{\rho,\sigma \text{ $j$-neighbors}} \|\Phi_i(\rho-\sigma)\|_{W_1},
        \end{equation}
        where the maximization is over $\rho$ and $\sigma$ which are density matrices of quantum states and $j$-neighbors, meaning that
        \( \tr_j(\rho-\sigma)=0 \). When $\rho$ and $\sigma$ are neighbors, then $ \|\rho-\sigma\|_{W_1} \leq 1$.
        Given some $0 < \gamma < 1$, we then define the quantum Dobrushin condition to be
        \begin{equation} \label{Dobrushin condition}
            \|D^{(\Phi)}\|_{1 \to 1} = \max_{x:\|x\|_1=1} \|D^{(\Phi)}x\|_1
            \leq 1 - \frac{\gamma}{n}.
        \end{equation}
    \end{definition}
    In this definition, we interpret each \(\Phi_i\) in the decomposition of \(\Phi\) as the component of \(\Phi\) associated with the update at site \(i\). In the classical setting, the Dobrushin influence matrix quantifies the worst-case influence of such local updates using the total variation distance. Our quantum analogue follows the same principle, but uses the quantum Wasserstein distance instead.

    Given the aforementioned condition, we show that the dynamics obtained by repeatedly applying the map $\Phi$ mixes rapidly. We first consider the following proposition.

    \begin{prop} [Defining the Dobrushin influence matrix in terms of Wasserstein growth {\cite[Fact 5.3]{BLMT25}}] \label{Dobrushin condition in terms of W1}
        For a linear map on quantum states $\Phi$, the Dobrushin influence matrix $D^{(\Phi)}$ as defined in \cref{Q Dobrushin Condition} can be equivalently defined as
        \begin{equation} \label{Dobrushin matrix}
        D^{(\Phi)}_{i,j} = \max_{\substack{X = X^\dagger,\tr_j(X)=0 \\ X\neq 0}} \frac{\left\|\Phi_i(X)\right\|_{W_1}}{\|X\|_{W_1}} = \max_{\substack{X=X^\dagger,\tr_j(X)=0 \\ \|X\|_{W_1}=1}}
        \left\|\Phi_i(X)\right\|_{W_1}.
        \end{equation}
    \end{prop}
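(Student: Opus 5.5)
We must show that the maximum over pairs of $j$-neighbour states equals the maximum over Hermitian $X$ with $\tr_j(X)=0$ of the ratio $\|\Phi_i(X)\|_{W_1}/\|X\|_{W_1}$. We prove two inequalities. Since $\Phi_i$ is linear and $\|\cdot\|_{W_1}$ is a norm, the ratio is invariant under rescaling $X$. This gives the equality of the two right-hand expressions. It also shows the maximum is attained, because the unit $W_1$-sphere intersected with the subspace $\{\tr_j X=0\}$ is compact.

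\emph{Step 1: states give at most the ratio.} Let $\rho,\sigma$ be $j$-neighbours and set $X=\rho-\sigma$. Then $X$ is Hermitian and $\tr_j X=0$. If $X=0$ the claim is trivial. Otherwise \cref{Lemma W1 rel tr} gives
\[
\|X\|_{W_1}=\tfrac12\|X\|_{\mathrm{tr}}\le 1 .
\]
Hence
\[
\|\Phi_i(\rho-\sigma)\|_{W_1}\le \frac{\|\Phi_i(X)\|_{W_1}}{\|X\|_{W_1}} .
\]
So the state-based maximum is at most the ratio-based maximum.

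\emph{Step 2: the ratio is at most the states.} Take Hermitian $X\neq0$ with $\tr_j X=0$. By \cref{Lemma W1 rel tr}, $\|X\|_{W_1}=\frac12\|X\|_{\mathrm{tr}}$. Write the Jordan decomposition $X=X_+-X_-$. Since $\tr X=0$, we have $\tr X_+=\tr X_-=:c=\frac12\|X\|_{\mathrm{tr}}=\|X\|_{W_1}$. Define
\[
\rho=X_+/c,\qquad \sigma=X_-/c .
\]
These are density matrices and $\rho-\sigma=X/c$, so $\tr_j(\rho-\sigma)=0$ and they are $j$-neighbours. By linearity of $\Phi_i$,
\[
\frac{\|\Phi_i(X)\|_{W_1}}{\|X\|_{W_1}}=\|\Phi_i(\rho-\sigma)\|_{W_1}\le D^{(\Phi)}_{i,j}.
\]

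\emph{Main point.} The key fact is the equality $\|X\|_{W_1}=\frac12\|X\|_{\mathrm{tr}}$ when $\tr_j X=0$. This is what makes the normalization by $c$ coincide with unit $W_1$ norm, and it is supplied by \cref{Lemma W1 rel tr}. The remaining care is routine: for the well-definedness of the map we need $\Phi_i(X)$ traceless, which holds by assumption, and we must note that the state-based maximum is attained, which follows from compactness of the set of state pairs.
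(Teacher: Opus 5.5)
Your proof is correct. The paper does not prove this statement itself; it cites \cite[Fact~5.3]{BLMT25}. Your argument is the natural one behind that fact: $\|X\|_{W_1}=\tfrac12\|X\|_{\mathrm{tr}}$ for $\tr_j X=0$ (\cref{Lemma W1 rel tr}), combined with the Jordan decomposition, writes $X=\|X\|_{W_1}(\rho-\sigma)$ with $\rho,\sigma$ $j$-neighbour states. Your treatment of the remaining points (tracelessness of $\Phi_i(X)$, the case $X=0$, and attainment of both maxima by compactness) is complete.
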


    Now we prove a key lemma which shows that if the Dobrushin condition holds, then $\Phi$ contracts the space of quantum states in the quantum Wasserstein metric.
    \begin{lemma} [Relating contraction to influence {\cite[Lemma 5.4]{BLMT25}}] \label{Lemma contraction to influence}
        Let $\rho$ and $\varrho$ be trace-$1$ Hermitian matrices. For a trace-preserving map $\Phi$ with corresponding Dobrushin influence matrix
        $D^{(\Phi)}$ and some integer $\tau \geq 0$,
        \[
        \left\|\Phi^{\tau}(\rho)-\Phi^{\tau}(\varrho)\right\|_{W_1}
        \leq
        \left\|D^{(\Phi)}\right\|_{1\to 1}^{\tau}\,
        \|\rho-\varrho\|_{W_1}.
        \]
    \end{lemma}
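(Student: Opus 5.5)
The plan is to prove the bound for $\tau=1$ and then iterate. Since $\Phi$ is trace preserving and linear, $\Phi^{\tau}(\rho)-\Phi^{\tau}(\varrho)=\Phi^{\tau}(X)$ with $X:=\rho-\varrho$ traceless and Hermitian. Moreover, $\Phi(X)$ is again traceless Hermitian, because each $\Phi_i$ preserves Hermiticity and tracelessness. It therefore suffices to show, for every traceless Hermitian $X$,
\[
\|\Phi(X)\|_{W_1}\le \|D^{(\Phi)}\|_{1\to1}\,\|X\|_{W_1},
\]
and then apply this inequality $\tau$ times to $X,\Phi(X),\dots,\Phi^{\tau-1}(X)$. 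The case $\tau=0$ is trivial.

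For the one-step bound, take an optimal transport plan $(X_j)_{j\in[n]}$ for $X$, which exists because the minimum defining $\|X\|_{W_1}$ is attained. Its cost vector $x$ satisfies $x_j=\|X_j\|_{W_1}=\frac12\|X_j\|_{\mathrm{tr}}$ by \cref{Lemma W1 rel tr}, since $\tr_j(X_j)=0$, and $\|x\|_1=\|X\|_{W_1}$. Using the decomposition $\Phi=\sum_i\Phi_i$ and linearity,
\[
\Phi(X)=\sum_{i=1}^n\sum_{j=1}^n\Phi_i(X_j).
\]
The triangle inequality for the norm $\|\cdot\|_{W_1}$ follows from concatenating transport plans via \cref{prop on lin}. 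Each $\Phi_i(X_j)$ is traceless Hermitian, so
\[
\|\Phi(X)\|_{W_1}\le\sum_{i,j}\|\Phi_i(X_j)\|_{W_1}.
\]

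Each $X_j$ is Hermitian with $\tr_j(X_j)=0$, so \cref{Dobrushin condition in terms of W1} gives $\|\Phi_i(X_j)\|_{W_1}\le D^{(\Phi)}_{i,j}\|X_j\|_{W_1}=D^{(\Phi)}_{i,j}x_j$; the case $X_j=0$ is trivial. Summing,
\[
\|\Phi(X)\|_{W_1}\le\sum_i (D^{(\Phi)}x)_i=\|D^{(\Phi)}x\|_1\le\|D^{(\Phi)}\|_{1\to1}\|x\|_1=\|D^{(\Phi)}\|_{1\to1}\|X\|_{W_1},
\]
where we use $D^{(\Phi)}\ge0$ and $x\ge_v0$.

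The only delicate point is the appeal to \cref{Dobrushin condition in terms of W1}. That proposition lets us pass from the state-pair definition of $D^{(\Phi)}$ in \cref{Q Dobrushin Condition} to arbitrary Hermitian $X_j$ with $\tr_j(X_j)=0$, by rescaling $X_j=\lambda(\rho-\sigma)$ with $j$-neighbouring states. We take it as given. The remaining care is to check that each $X_j$ is genuinely Hermitian and $j$-traceless, which holds by the definition of a transport plan.
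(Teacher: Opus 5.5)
Your proof is correct and follows the same route as the cited \cite[Lemma~5.4]{BLMT25}, which the paper invokes without reproducing: decompose $\rho-\varrho$ with an optimal transport plan, bound each $\|\Phi_i(X_j)\|_{W_1}$ by $D^{(\Phi)}_{i,j}\|X_j\|_{W_1}$ using the ratio form in \cref{Dobrushin condition in terms of W1}, sum to get $\|D^{(\Phi)}x\|_1\le\|D^{(\Phi)}\|_{1\to1}\|x\|_1$, and iterate by linearity. Your checks that each $\Phi_i(X_j)$ is traceless and that $X_j$ rescales to a difference of $j$-neighbouring states cover the only points needing care.
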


    We can finally prove that the Dobrushin condition in \eqref{Dobrushin condition} implies the rapid mixing of the dynamics obtained by repeatedly applying the map $\Phi$.

    \begin{lemma}[Dobrushin condition implies rapid mixing {\cite[Lemma 5.2]{BLMT25}}] \label{Dobrushin condition implies rapid mixing}
         Let $\Phi$ be a trace-preserving linear map satisfying the quantum Dobrushin condition (\cref{Q Dobrushin Condition}) with parameter $\gamma$. Then $\Phi$ admits a unique fixed point $\sigma$  among trace-one Hermitian operators. Moreover, for every initial Hermitian operator $\rho$ with $\tr(\rho)=1$, and every error parameter $\varepsilon>0$,
        \[
        \|\Phi^\tau(\rho)-\sigma\|_{\mathrm{tr}} \leq \varepsilon \|\rho-\sigma\|_{\mathrm{tr}},
        \]
        for every integer $\tau\geq0$ satisfying
        \begin{equation} \label{minimum time rapid mixing}
        \tau \geq \frac{n}{\gamma}\log\frac{n}{\varepsilon}.
        \end{equation}
    \end{lemma}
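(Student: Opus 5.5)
The proof combines uniqueness via a fixed-point argument with the iterated contraction of \cref{Lemma contraction to influence} and the comparison of \cref{Lemma W1 rel tr}. I take $\tau\ge0$ and two trace-one Hermitian operators $\rho,\varrho$. By \cref{Lemma contraction to influence} and the Dobrushin condition,
\[
\|\Phi^\tau(\rho)-\Phi^\tau(\varrho)\|_{W_1}\le \Bigl(1-\frac{\gamma}{n}\Bigr)^{\tau}\|\rho-\varrho\|_{W_1}.
\]
Thus $\Phi$ is a strict contraction, with constant $q:=1-\gamma/n<1$, of the affine space of trace-one Hermitian operators equipped with the norm $\|\cdot\|_{W_1}$. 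That $\Phi$ maps this space into itself follows from trace preservation and the fact that each $\Phi_i$ is Hermiticity-preserving.

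For existence and uniqueness of the fixed point, I note that the space is finite-dimensional, hence complete under $\|\cdot\|_{W_1}$. This norm is equivalent to the trace norm on traceless Hermitian operators by \cref{Lemma W1 rel tr}. Banach's fixed-point theorem then gives a unique fixed point $\sigma$. Uniqueness can also be seen directly: two fixed points $\sigma,\sigma'$ would satisfy $\|\sigma-\sigma'\|_{W_1}\le q\|\sigma-\sigma'\|_{W_1}$, forcing $\sigma=\sigma'$.

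For the mixing bound, I set $\varrho=\sigma$, so that $\Phi^\tau(\sigma)=\sigma$, and convert between norms using \cref{Lemma W1 rel tr}:
\[
\frac12\|\Phi^\tau(\rho)-\sigma\|_{\mathrm{tr}}\le\|\Phi^\tau(\rho)-\sigma\|_{W_1}\le q^\tau\|\rho-\sigma\|_{W_1}\le q^\tau\frac n2\|\rho-\sigma\|_{\mathrm{tr}}.
\]
Hence $\|\Phi^\tau(\rho)-\sigma\|_{\mathrm{tr}}\le n q^\tau\|\rho-\sigma\|_{\mathrm{tr}}$. Using $1-x\le e^{-x}$ gives $nq^\tau\le n e^{-\gamma\tau/n}$, which is at most $\varepsilon$ whenever $\tau\ge\frac n\gamma\log\frac n\varepsilon$.

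There is no real obstacle here. The only point needing care is that \cref{Lemma contraction to influence} applies to trace-one Hermitian operators rather than only to states, and its statement already covers this. The factor $n$ lost in the $W_1$-to-trace conversion is exactly the one absorbed by the $\log(n/\varepsilon)$ in \eqref{minimum time rapid mixing}.
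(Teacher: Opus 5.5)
Your argument is correct and matches the standard proof behind \cite[Lemma~5.2]{BLMT25}, which the paper cites without reproving: $W_1$-contraction with constant $q=1-\gamma/n$ from \cref{Lemma contraction to influence}, conversion to trace norm at a cost of a factor $n$ via \cref{Lemma W1 rel tr}, and then $q^\tau\le e^{-\gamma\tau/n}$. Your Banach fixed-point step on the complete affine space of trace-one Hermitian operators is the right way to get existence here, because $\Phi$ is only assumed to preserve Hermiticity and trace, not positivity, as the remark following the lemma notes.
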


    \begin{remark}
        In this section, $\Phi$ is not necessarily a quantum channel and therefore need not map positive semidefinite matrices to positive semidefinite matrices. Instead, we consider $\Phi$ on the space of trace-one Hermitian matrices, where the Dobrushin condition guarantees a unique fixed point. In our application, stationarity under the Lindbladian implies that the Gibbs state is also a fixed point of the first-order approximation considered below. Whenever the Dobrushin condition holds, this state is therefore the unique trace-one Hermitian fixed point and is a valid quantum state.
    \end{remark}

    We conclude this section with a result relating the Dobrushin influence matrix (\cref{Q Dobrushin Condition}) to update matrices (\cref{Update matrix}).
    \begin{lemma} \label{relating dobrushin matrix to update matrix}
        Let $\Phi$ be a trace-preserving linear map admitting a decomposition
        \[
        \Phi = \sum_{i=1}^n \Phi_i,
        \]
         where each $\Phi_i$ is a Hermiticity-preserving linear map that maps traceless matrices to traceless matrices. For each $i \in [n]$ let $Q^{(i)}$ be an update matrix for $\Phi_i$ and let $D^{(\Phi)}$ denote the Dobrushin influence matrix associated with this decomposition of $\Phi$. Then
        \begin{equation*}
            D^{(\Phi)}_{i,j} \leq \mathbf{1}^{\mathsf T} Q^{(i)}e_j, \qquad i,j \in [n].
        \end{equation*}
        Consequently,
        \begin{equation*}
            \|D^{(\Phi)}\|_{1 \to 1} \leq \left\| \sum_{i=1}^n Q^{(i)} \right\|_{1 \to 1}.
        \end{equation*}
    \end{lemma}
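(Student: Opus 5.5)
We prove the entrywise bound first and then obtain the norm bound by summation. Fix $i,j\in[n]$. We use the characterisation of $D^{(\Phi)}_{i,j}$ from \cref{Dobrushin condition in terms of W1}. It suffices to show that for every Hermitian $X$ with $\tr_j(X)=0$ we have
\[
\|\Phi_i(X)\|_{W_1}\leq \left(\mathbf 1^{\mathsf T}Q^{(i)}e_j\right)\|X\|_{W_1}.
\]
Any such $X$ is traceless, because $\tr X=\tr(\tr_j X)=0$. By \cref{Lemma W1 rel tr}, $X$ admits the single-site transport plan with $X_j=X$ and $X_k=0$ for $k\neq j$. Its cost vector is $x=\tfrac12\|X\|_{\mathrm{tr}}\,e_j=\|X\|_{W_1}e_j$, where the equality $\tfrac12\|X\|_{\mathrm{tr}}=\|X\|_{W_1}$ is the last assertion of that lemma.

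Next we apply the definition of update matrix (\cref{Update matrix}) to $\Phi_i$ with this plan. This gives a transport plan for $\Phi_i(X)$ whose cost vector satisfies
\[
y\leq_v Q^{(i)}x=\|X\|_{W_1}\,Q^{(i)}e_j .
\]
The Wasserstein norm is the minimum $\ell_1$ cost over transport plans, and $y$ has nonnegative entries. Hence
\[
\|\Phi_i(X)\|_{W_1}\leq \|y\|_1=\mathbf 1^{\mathsf T}y\leq \|X\|_{W_1}\,\mathbf 1^{\mathsf T}Q^{(i)}e_j,
\]
where the last step uses that $\mathbf 1^{\mathsf T}$ preserves entrywise order. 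Dividing by $\|X\|_{W_1}$ for $X\neq0$ and taking the maximum gives the first claim. We need $\|X\|_{W_1}>0$ here; this holds because $\|\cdot\|_{W_1}$ is a norm, or directly from $\|X\|_{W_1}\ge\tfrac12\|X\|_{\mathrm{tr}}$.

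For the norm bound, $D^{(\Phi)}$ has nonnegative entries, so its $1\to1$ norm is its maximum column sum:
\[
\|D^{(\Phi)}\|_{1\to1}=\max_j\sum_i D^{(\Phi)}_{i,j}.
\]
The entrywise bound gives
\[
\sum_i D^{(\Phi)}_{i,j}\leq \mathbf 1^{\mathsf T}\Big(\sum_i Q^{(i)}\Big)e_j=\Big\|\Big(\sum_i Q^{(i)}\Big)e_j\Big\|_1,
\]
using nonnegativity of the $Q^{(i)}$. Maximising over $j$ yields $\|\sum_i Q^{(i)}\|_{1\to1}$, by the definition of this norm given in the notation section. The only point requiring care is that the update-matrix property must be applied to an arbitrary traceless Hermitian input with an arbitrary plan; the single-site plan from \cref{Lemma W1 rel tr} is a legitimate such input.
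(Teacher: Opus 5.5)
Your proof is correct and follows essentially the same route as the paper: you feed the single-site transport plan with cost $\|X\|_{W_1}e_j$ into the update-matrix definition, then use the column-sum formula for the $1\to1$ norm of nonnegative matrices. The only cosmetic difference is that the paper fixes a normalized maximizer with $\|X\|_{W_1}=1$, whereas you treat an arbitrary $X$ and divide by $\|X\|_{W_1}$ at the end.
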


    \begin{proof}
        Let $i,j \in [n]$ and choose a maximizer
        \[X_j \in \operatorname*{arg\,max}_{\substack{X=X^\dagger, \, \tr_j(X) = 0,\\ \wnorm {X} = 1} } \wnorm {\Phi_{i}(X)}.\]
        The transport plan with $X_j$ in component $j$ and zero in every other component has cost vector $e_j$. By the definition of update matrix, $\Phi_i(X_j)$ therefore admits a transport plan with cost vector $y$ satisfying $y\leq_v Q^{(i)}e_j$. Hence,
        \[ D^{(\Phi)}_{i,j} = \wnorm {\Phi_i(X_j)} \leq \mathbf{1}^{\mathsf T} y \leq \mathbf{1}^{\mathsf T} Q^{(i)} e_j.\]
        Finally, since all these matrices are entrywise nonnegative, their induced $1\to1$ norms are their largest column sums. Thus,
        \begin{equation*}
            \|D^{(\Phi)}\|_{1 \to 1} = \max_{j \in [n]} \sum_{i=1}^n D^{(\Phi)}_{i,j} \leq \max_{j \in [n]} \mathbf{1}^{\mathsf T} \left( \sum_{i=1}^n Q^{(i)} \right) e_j =  \left \| \sum_{i=1}^n Q^{(i)} \right \|_{1 \to 1}
        \end{equation*}
    \end{proof}

    \section{Evolution of Wasserstein transport plans} \label{Evolution of Wasserstein transport plans}
    Given a fixed positive parameter $\delta >0$, we study the evolution of the transport plans of the first-order approximation
    \begin{equation} \label{def of Phi}
    \Phi := \cI + \frac{\delta}{n} \cL  = \sum_{i = 1}^n  \left[\frac{1}{n} \left( \cI +\ \delta \cL_i\right)\right] =:  \sum_{i=1}^n \Phi_i.
    \end{equation}
    In particular, by the definition of the Lindbladian (\cref{Lindbladian}), each summand satisfies
    \begin{equation} \label{def of Phi_i}
    n\Phi_i(\rho) = \rho + \delta \sum_{P \in \cP_i}\left(  -\ii [G^P,\rho] + A^P \rho (A^P)^\dagger - \frac{1}{2}\{(A^P)^\dagger A^P,\rho\}\right)
    \end{equation}
    In view of \cref{Dobrushin condition in terms of W1} and \cref{relating dobrushin matrix to update matrix}, we seek update matrices for the maps $\Phi_i$ to control the Dobrushin influence matrix of $\Phi$.

    We decompose each term $n\Phi_i$ into a coherent part
    \begin{equation} \label{coherent part}
        \Phi_{i,\mathrm c}(\rho) := \rho + \delta \sum_{P \in \cP_i}  -\ii [G^P,\rho],
    \end{equation}
    and a dissipative part
    \begin{equation} \label{dissipative part}
        \Phi_{i,\mathrm d}(\rho) := \rho + \delta \sum_{P \in \cP_i} \left( A^P \rho (A^P)^\dagger - \frac{1}{2}\{(A^P)^\dagger A^P,\rho\} \right).
    \end{equation}
    These maps satisfy
    \[
    n\Phi_i=\Phi_{i,\mathrm c}+\Phi_{i,\mathrm d}-\cI.
    \]

    We first study their update matrices separately and then combine the resulting estimates.

    In update-matrix estimates, we will use $o(\delta)$ to denote a matrix $E_\delta$ such that $\|E_\delta\|_{1\to1}/\delta\to0$ as $\delta\to0$; similarly, $\mathcal O(\delta^2)$ means
    \[
    \|E_\delta\|_{1\to1}=\mathcal O(\delta^2).
    \]

    All model and auxiliary parameters are held fixed. The same remainder matrix works for every input operator and transport plan.

    \subsection{Dissipative update}

    Our goal in this subsection is to prove \cref{dissipative evolution}, which describes how the dissipative part of the Lindblad generator associated with site $i$ affects transport plans. We follow the transport-plan construction of \cite[Section~4.3.2, Claim~4.15]{BLMT25}, while retaining the individual interaction strengths in the cluster expansions. Using \cref{lem:weighted-clusters}, we then bound the resulting update matrices in terms of $\Lambda$, without introducing graph-growth parameters. Combining connected cluster expansions and summing over a marked site also parallels \cite[proof of Corollary~3.8]{Bergamaschi26}.

    \begin{teo} [Wasserstein norm of the dissipative evolution] \label{dissipative evolution}
        Let $i \in [n]$ and $ \Phi_{i,\mathrm d}$ as in \eqref{dissipative part} where the jump operators $A^P$ are specified in \eqref{def of jump operators}. Assume $\beta \Lambda < 1 $. Then, for all sufficiently small $\delta >0$, the map $\Phi_{i,\mathrm d}$ admits an update matrix of the form
        \begin{equation} \label{update Q_i^d}
        Q_{\mathrm d}^{(i)} := I +\delta(- 4 E_{\{i\}}+ R_{\mathrm d}^{(i)})+\mathcal O(\delta^2),
        \end{equation}
        where
        \begin{equation} \label{def of R^d_i}
        R_{\mathrm d}^{(i)} =  12 \sum_{t \geq 1}  \frac{ \beta ^{t}} {t !}\sum_{\substack{\bar r, \bar s \in [n]^t, \bar r <_v \bar s\\ (\bar r, \bar s) \text{ cluster from } i}} \left(\prod_{\ell=1}^t w_{r_\ell,s_\ell} \right) E_{S_{\bar r, \bar s}},
        \end{equation}
        and the remainder is entrywise nonnegative and has $1\to1$ norm $\mathcal O(\delta^2)$ as $\delta$ goes to $0$. In particular,
        \begin{equation}
            \left \| \sum_{i=1}^n  R_{\mathrm d}^{(i)} \right \|_{1 \to 1} \leq 12 \sum_{t \geq 1} (t+1)^2 (\beta \Lambda)^t.
        \end{equation}

    \end{teo}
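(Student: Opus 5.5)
Given a traceless Hermitian $X$ with transport plan $(X_k)_k$ and cost vector $x$, we build a plan for $\Phi_{i,\mathrm d}(X)$ componentwise, using linearity (\cref{prop on lin}). The key observation is that $A^P=\sigma^{1/4}P\sigma^{-1/4}=e^{-\beta H/4}Pe^{\beta H/4}$, so by \cref{Expansion with nested} with $\alpha=-\beta/4$ we can write
\[
A^P=P+\sum_{t\ge1}\sum_{(\bar r,\bar s)\ \text{cluster from } i}\frac{(-\beta/4)^t}{t!}C_{\bar r,\bar s}^P=:P+B^P,
\]
where each $C^P_{\bar r,\bar s}$ is a nested commutator supported on $S_{\bar r,\bar s}$ with $\|C^P_{\bar r,\bar s}\|\le 2^t\prod_\ell w_{r_\ell,s_\ell}$. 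Expanding the dissipator $\mathcal D_A(X)=AXA^\dagger-\frac12\{A^\dagger A,X\}$ bilinearly in $A=P+B^P$ then separates a pure-Pauli part from the correction terms.

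\textbf{Step 1 (infinite-temperature part).} The part with $A=P$ gives $\sum_{P\in\cP_i}(PXP-X)=4\,\tr_i$-depolarization minus identity; explicitly $\sum_{P\in\cP_i}PYP=2I_i\otimes\tr_i Y-Y$. Hence $X_k+\delta\sum_P(PX_kP-X_k)=(1-4\delta)X_k+2\delta\, I_i\otimes\tr_iX_k$. For $k=i$ the last term vanishes, so component $i$ has cost $(1-4\delta)x_i$. For $k\neq i$, the map is a convex combination of positive-type operations that commute with $\tr_k$, so component $k$ keeps cost at most $x_k$. This produces the $I-4\delta E_{\{i\}}$ part; the main diagonal entry is exactly $1-4\delta$.

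\textbf{Step 2 (cross and quadratic terms).} The remaining terms have the form of \cref{Lemma few site}: $\mathcal D_{P+B}-\mathcal D_P$ equals $\Psi_{K,L}$ with $K=P$, $L=B$, plus $\mathcal D_{B}$ (which is $\frac12\Psi_{B,B}$). We split $B$ into its cluster summands and use bilinearity. A term $\Psi_{P,C_{\bar r,\bar s}}$ applied to $X_k$ has a transport plan with cost at most $4\cdot 2^t\prod w\cdot \|X_k\|_{\mathrm{tr}}\,e_{S_{\bar r,\bar s}}=8\cdot2^t\prod w\,x_k\,e_{S}$, with coefficient $(\beta/4)^t/t!$. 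This yields a column contribution $E_{S}$-type bounded by $\frac{\beta^t}{t!}\prod w$ times a constant; summing over the three Paulis gives the factor $12$. Terms quadratic in $B$, i.e.\ products of two clusters, are handled similarly. Either they are absorbed into the same series because their coefficients carry $(\beta/4)^{t+t'}$, bounded after merging the two clusters into one cluster from $i$ on $S\cup S'$, or their contribution is bounded directly. Note that $\Psi$ applied to $X_k$ with $k\notin S$ still gives cost supported on $S$, which is why the bound involves the full matrix $E_S$ rather than only the column $i$. The $\delta^2$ remainder is not needed at first order, since $\Phi_{i,\mathrm d}$ is affine in $\delta$; it only absorbs slack, for instance from using \cref{size of lind step} if one instead composes exponentials.

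\textbf{Step 3 (summation).} For $\|\sum_i R^{(i)}_{\mathrm d}\|_{1\to1}$, the $j$-th column sum is
\[
12\sum_t\frac{\beta^t}{t!}\sum_i\sum_{(\bar r,\bar s)\ \text{from } i,\ j\in S}|S|\prod w .
\]
This is exactly the marked-cluster sum, and \eqref{eq:marked-cluster} bounds it by $12\sum_t(t+1)^2(\beta\Lambda)^t$, which converges for $\beta\Lambda<1$.

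The main obstacle is Step 2's constant bookkeeping: making the quadratic $B$–$B$ terms fit under the single sum with constant $12$. I would try first to exploit $\|B^P\|$ being small and regroup the nested-commutator expansion of $A^{P\dagger}A^P=\sigma^{-1/4}P\sigma^{1/2}P\sigma^{-1/4}$ directly as one cluster expansion at $\alpha=\beta/2$.
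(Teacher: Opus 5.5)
Your Step~1 is correct, and Step~3 would follow once the stated form of $R^{(i)}_{\mathrm d}$ is established. Step~2 does not establish it, and the obstruction is not the constant.

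You apply \cref{Lemma few site} to $\Psi_{P,C_{\bar r,\bar s}}(X_k)$ for \emph{every} input component $k$. This is fine when $k\in S_{\bar r,\bar s}$, in particular for $k=i$. When $k\notin S_{\bar r,\bar s}$, however, the resulting first-order cost (supported on $S_{\bar r,\bar s}$) must be charged to column $k$ of the update matrix. Column $k$ of $E_{S_{\bar r,\bar s}}$ is zero, because $(E_S)_{a,b}=1$ only if both $a,b\in S$. Your remark that the ``full matrix $E_S$'' accommodates these terms misreads this notation.

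What your construction actually produces is a matrix each of whose columns contains $\sum c_{\bar r,\bar s}\,e_{S_{\bar r,\bar s}}$ over \emph{all} clusters from $i$. After summing over $i$, the $t=1$ terms alone give column sums $24\beta\sum_{i}\sum_{j}w_{i,j}$. This is of order $n\beta\Lambda$ for, e.g., regular interaction graphs. So the restriction $j\in S_{\bar r,\bar s}$ used in your Step~3 is exactly what is unproven.

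Keeping such a term in component $k$ does not help either. Take $X_k=\rho\,\sigma_Z^{(k)}$ with $\rho\geq0$ acting trivially on site $k$. Then $\Psi_{P,C}(X_k)=\Psi_{P,C}(\rho)\,\sigma_Z^{(k)}$, whose trace norm is in general of order $\|C\|\,\|X_k\|_{\mathrm{tr}}$. Clusters not touching $k$ would thus create a first-order diagonal entry at $(k,k)$, with the same growth in $n$.

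The missing idea is that the splitting of $A^P$ must depend on the input site. For an input component $j\neq i$, write $A^P=A^{P,\hat j}+A^{P,j}$ according to whether the cluster support contains $j$. Keep $X_j+\delta\sum_{P\in\cP_i}\bigl(A^{P,\hat j}X_j(A^{P,\hat j})^\dagger-\frac12\{(A^{P,\hat j})^\dagger A^{P,\hat j},X_j\}\bigr)$ together as one block. It still satisfies $\tr_j=0$. Writing it as an average over the three Paulis with step $3\delta$, the positivity argument of \cref{size of lind step} bounds its trace norm by $(1+\mathcal O(\delta^2))\|X_j\|_{\mathrm{tr}}$. The first-order cross terms with $j\notin S_{\bar r,\bar s}$ that you estimate separately cancel inside this near-channel.

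Only the terms containing at least one cluster through $j$ are then estimated with \cref{Lemma few site}. This gives exactly column $j$ of $E_{S_{\bar r,\bar s}}$ with $j\in S_{\bar r,\bar s}$. It is also the genuine source of the $\mathcal O(\delta^2)$ remainder, contrary to your remark that it is not needed.

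The constant $12$, by contrast, is routine. Treat $P$ as the length-$0$ cluster, symmetrize over ordered pairs of clusters, and concatenate each pair into a single cluster from $i$ of length $t$. Then $\sum_{s=0}^t\frac{1}{s!\,(t-s)!}=\frac{2^t}{t!}$ turns $(\beta/2)^t$ into $\beta^t/t!$. No regrouping of $(A^P)^\dagger A^P$ at $\alpha=\beta/2$ is required.
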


    \begin{proof}
        Throughout the proof, we assume $0<\delta<1/4$. It suffices to prove the statement for normalized single-site contributions. More precisely, fix $j\in[n]$ and let $X$ be a Hermitian operator satisfying $\tr_j(X)=0$, and
        \[
        \|X\|_{W_1} = \frac{1}{2}\|X\|_{\mathrm{tr}} = 1.
        \]
         We will construct a transport plan for $\Phi_{i,\mathrm d}(X)$ with cost vector bounded entrywise by $Q_{\mathrm d}^{(i)}e_j$. By scaling and combining the resulting plans (\cref{prop on lin}), this implies the desired estimate for every traceless Hermitian operator.

        By \cref{Expansion with nested}, we can write our jump operators $A^P$ as
        \[
        A^P = e^{-\frac{\beta}{4}H} P e^{\frac{\beta}{4}H} = \sum_{k \geq 0} \sum_{\substack{\bar i , \bar j \in [n]^k\\ \bar i <_v \bar j}} A^P_{\bar i , \bar j}
        \]
        where each summand $A_{\bar i , \bar j}^P$ is a nested commutator,
        \begin{equation*}
            A_{\bar i , \bar j}^P := \left (-\frac{\beta}{4 } \right)^k \frac{1}{k!} [H_{i_k, j_k},[\dots[H_{i_2,j_2},[H_{i_1,j_1}, P]]\dots]].
        \end{equation*}
        More specifically, $A_{\bar i , \bar j}^P$ is only non-zero when $(\bar i , \bar j)$ is a cluster starting from $i$, and in that case,
        \begin{equation}  \label{eq:jump-cluster-norm}
        \|A_{\bar i , \bar j}^P\| \leq \left(\frac{ \beta}{2} \right)^k \frac{1}{k!} \prod_{\ell=1}^k w_{i_\ell,j_\ell}.
        \end{equation}
        Therefore
        \begin{equation} \label{lab 1}
            \sum_{k \geq 0} \sum_{\substack{\bar i <_v \bar j \in [n]^k \\ (\bar i , \bar j) \text{ cluster from } i}} \|A_{\bar i , \bar j}^P\|  \leq   \sum_{k \geq 0}\frac{1}{k!} \left(\frac{ \beta }{2 } \right)^k  \sum_{\substack{\bar i <_v \bar j \in [n]^k \\ (\bar i , \bar j) \text{ cluster from } i}}\prod_{\ell=1}^k w_{i_\ell,j_\ell} \leq  \sum_{k \geq 0} \left(\frac{ \beta \Lambda }{2 } \right)^k,
        \end{equation}
        where we used the bound in \eqref{subsresult in the cluster proof} from the proof of \cref{lem:weighted-clusters}. Under the hypothesis $\beta \Lambda < 1$, the right-hand side converges.

        Now, we break our argument into two cases: $i=j$ and $i\ne j$.\\
        \textbf{Case $i=j$.}
        This case is where the ``contraction'' appears. We can decompose
        $\Phi_{i,\mathrm d}$ in the following way.
        \begin{align} \label{diss 1}
            \Phi_{i,\mathrm d}(X)
            &= X+\delta \sum_{P\in\cP_i} \sum_{k,h \geq 0} \sum_{\substack{\bar i <_v \bar j \in [n]^k  \\ \bar \ell <_v \bar m \in [n]^h}} \left( A_{\bar i, \bar j}^P X (A_{\bar \ell, \bar m}^P)^\dagger -\frac{1}{2}\{(A_{\bar \ell, \bar m}^P)^\dagger A_{\bar i, \bar j}^P,X\}
            \right) \nonumber\\
            &=  \underbrace{ X+\delta \sum_{P\in \cP_i} (PXP-X) }_{=:Y_0} \nonumber\\
            &\ + \underbrace{\delta \sum_{P\in \cP_i}
            \sum_{(k,h) \ne (0,0)} \sum_{\substack{\bar i <_v \bar j \in [n]^k  \\ \bar \ell <_v \bar m \in [n]^h}} \left( A_{\bar i, \bar j}^P X (A_{\bar \ell, \bar m}^P)^\dagger -\frac{1}{2}\{(A_{\bar \ell, \bar m}^P)^\dagger A_{\bar i, \bar j}^P,X\}
            \right) }_{=:Y_1}.
        \end{align}

        The first summand $Y_0$ is obtained from $X$ by a one-site depolarizing channel, which contracts $X$ when $\tr_i(X)=0$. Concretely, for an $M$ with $\tr_i(M)=0$,
        \[
        M+\sigma_X^{(i)}M\sigma_X^{(i)} + \sigma_Y^{(i)}M\sigma_Y^{(i)} + \sigma_Z^{(i)}M\sigma_Z^{(i)}=0.
        \]
        Thus,
        \[
        Y_0=\left(1-4\delta\right)X.
        \] Because $\tr_i(Y_0)=0$, by \cref{Lemma W1 rel tr}, it has a transport plan with cost
        \begin{equation} \label{bound 1 diss}
            \frac{1}{2}\|Y_0\|_{\mathrm{tr}} \,e_i = \frac{1}{2}\left(1-4\delta\right)\|X\|_{\mathrm{tr}} \,e_i = \left(1-4\delta\right) \, e_i.
        \end{equation}
        Now we consider $Y_1$. We first symmetrize the elements of the sum and we obtain,
        \[ \sum_{P \in \cP_i} \! \sum_{(k,h) \ne (0,0)} \! \sum_{\substack{\bar i <_v \bar j \in [n]^k  \\ \bar \ell <_v \bar m \in [n]^h}} \frac{\delta}{2} \left( A_{\bar i, \bar j}^P X (A_{\bar \ell, \bar m}^P)^\dagger + A_{\bar \ell, \bar m}^P X (A_{\bar i, \bar j}^P)^\dagger -\frac{1}{2}\{(A_{\bar \ell, \bar m}^P)^\dagger A_{\bar i, \bar j}^P + (A_{\bar i, \bar j}^P)^\dagger A_{\bar \ell, \bar m}^P,X\} \right).
        \]
        By \cref{Lemma few site}, for fixed indices of the three sums, each term of the sum has a transport plan with cost vector bounded entry-wise by
        \[
        \frac{\delta}{2}\left(4\|A_{\bar i, \bar j}^P\|\|A_{\bar \ell, \bar m}^P\|\|X\|_{\mathrm{tr}} \,e_{{\supp(A_{\bar i, \bar j}^P) \, \cup \, \supp(A_{\bar \ell, \bar m}^P) }} \right) = 4\delta \|A_{\bar i, \bar j}^P\| \|A_{\bar \ell, \bar m}^P\| \, e_{{\supp(A_{\bar i, \bar j}^P) \, \cup \,\supp(A_{\bar \ell, \bar m}^P) }}.
        \]
        Consequently, $Y_1$ has a transport plan with cost vector at most
        \begin{align} \label{bound 2 diss}
            &4\delta \sum_{P\in \cP_i}
            \sum_{(k,h) \ne (0,0)}  \sum_{\substack{\bar i <_v \bar j \in [n]^k  \\ \bar \ell <_v \bar m \in [n]^h}}
            \|A_{\bar i, \bar j}^P\| \|A_{\bar \ell, \bar m}^P\| \, e_{{\supp(A_{\bar i, \bar j}^P) \, \cup \,\supp(A_{\bar \ell, \bar m}^P) }} \notag \\
            \leq_v& 4\delta \sum_{P\in \cP_i}
            \sum_{(k,h) \ne (0,0)}  \sum_{\substack{\bar i <_v \bar j ; \  \bar \ell <_v \bar m  \\ i \in S_{\bar i, \bar j} \cup S_{\bar \ell, \bar m}} }
            \|A_{\bar i, \bar j}^P\| \|A_{\bar \ell, \bar m}^P\| \, e_{S_{\bar i, \bar j} \cup \,S_{\bar \ell, \bar m}} .
        \end{align}
        The inequality follows because, for every nonzero contribution, the union of the operator supports is contained in $S_{\bar i,\bar j}\cup S_{\bar \ell,\bar m}$. Moreover, since $(k,h)\ne(0,0)$, every nonzero contribution contains a nonempty cluster whose support includes $i$, which justifies restricting the sum to $i\in S_{\bar i,\bar j}\cup S_{\bar \ell,\bar m}$. Combining \eqref{bound 1 diss} with \eqref{bound 2 diss}, we have that for $i=j$, $\Phi_{i,\mathrm d}(X)$ has a transport plan with cost vector at most
        \[
        (1-4\delta)e_i +  4\delta \sum_{P\in \cP_i} \sum_{(k,h) \ne (0,0)}  \sum_{\substack{\bar i <_v \bar j ; \  \bar \ell <_v \bar m  \\ i \in S_{\bar i, \bar j} \cup S_{\bar \ell, \bar m}} } \|A_{\bar i, \bar j}^P\| \|A_{\bar \ell, \bar m}^P\| \, e_{S_{\bar i, \bar j} \cup \,S_{\bar \ell, \bar m}}
        \]
        \textbf{Case $i \ne j$.}
        We now consider an input transport-plan component associated with site $j$ and a jump operator $A^P$ with $P\in\cP_i$. We decompose $A^P$ into the sum of terms whose cluster supports contain $j$ and those whose cluster supports avoid $j$.
        \[
        A^P = \underbrace{\sum_{k \geq 0}\sum_{\bar i <_v \bar j \in [n]^k} \ind_{\{j\notin S_{\bar i, \bar j}\}}\,A_{\bar i , \bar j}^P}_{A^{P,\hat{j}}} + \underbrace{\sum_{k \geq 0} \sum_{\bar i <_v \bar j\in [n]^k} \ind_{\{j\in S_{\bar i, \bar j}\}}\,A_{\bar i , \bar j}^P}_{A^{P,j}}.
        \]
        This gives a corresponding decomposition of $\Phi_{i, \mathrm d}(X)$.
        \begin{align} \label{decomp of phi diss}
            \Phi_{i,\mathrm d}(X) &= \underbrace{ X + \delta \sum_{P\in \cP_i}
            \left( A^{P,\hat{j}} X (A^{P,\hat{j}})^\dagger - \frac{1}{2}\left\{(A^{P,\hat{j}})^\dagger A^{P,\hat{j}},X\right\} \right)
            }_{=:Z_0} \nonumber\\
            &\quad + \delta \sum_{P\in \cP_i}
            \Bigl( A^{P,j}X(A^{P,\hat{j}})^\dagger + A^{P,\hat{j}}X(A^{P,j})^\dagger + A^{P,j}X(A^{P,j})^\dagger
            \nonumber\\
            & \quad -\frac{1}{2}\left\{ (A^{P,\hat{j}})^\dagger A^{P,j} + (A^{P,j})^\dagger A^{P,\hat{j}} + (A^{P,j})^\dagger A^{P,j}, X \right\}  \Bigr).
        \end{align}
        We denote the remaining summand in \eqref{decomp of phi diss} by
        \[
        Z_1:=\Phi_{i,\mathrm d}(X)-Z_0.
        \]

        First, we want to construct a transport plan for $Z_0$. We write
        \[
        Z_0 =\frac{1}{3}  \sum_{P\in \cP_i}
            \left( X + 3 \delta \left (A^{P,\hat{j}} X (A^{P,\hat{j}})^\dagger - \frac{1}{2}\left\{(A^{P,\hat{j}})^\dagger A^{P,\hat{j}},X\right\} \right)\right)
        \]
        Each summand in this average has vanishing partial trace over $j$, since $A^{P,\hat j}$ acts away from $j$. Hence $Z_0$ also has vanishing partial trace over $j$ and, by \cref{Lemma W1 rel tr}, admits a transport plan with cost vector
        \[
        \frac12\|Z_0\|_{\mathrm{tr}}e_j.
        \]
        By \cref{size of lind step}, the trace norm of each unweighted summand in the average is at most
        \[  \left(1 + \frac{1}{2} \left(3 \delta\left\| (A^{P,\hat{j}})^\dagger A^{P,\hat{j}} \right\| \right)^2 \right) \| X\|_{\mathrm{tr}}.\]
        Consequently, combining the bounds for the individual summands, $Z_0$ admits a transport plan with cost vector bounded entrywise by
        \begin{align} \label{eq 1 diss case 2}
            \frac{1}{3}  \sum_{P\in \cP_i} \frac{1}{2} \left(1 + \frac{1}{2} \left( 3\delta\| (A^{P,\hat{j}})^\dagger A^{P,\hat{j}}\| \right)^2\right) \| X\|_{\mathrm{tr}} \, e_j  \leq_v \left(1+\mathcal O(\delta^2)\right) e_j.
        \end{align}
        In the final bound, we used the fact $ A^{P, \hat j} $ is well-defined whenever $\beta \Lambda < 1$ as a consequence of \eqref{lab 1}, and thus its norm is finite. As for the second term $Z_1$, we expand in clusters and symmetrize the summands as in the case $i=j$.
        \[
        Z_1 = \!\!\!\sum_{P\in \cP_i}  \! \sum_{\substack{\bar i <_v \bar j ;   \bar \ell <_v \bar m  \\ j \in {S_{\bar i, \bar j} \cup S_{\bar \ell, \bar m}} }} \frac{\delta}{2}\left( A^{P}_{\bar i, \bar j} X (A^{P}_{\bar \ell, \bar m})^\dagger + A^{P}_{\bar \ell, \bar m} X (A^{P}_{\bar i, \bar j})^\dagger -\frac{1}{2}\left\{  (A^{P}_{\bar \ell, \bar m})^\dagger  A^{P}_{\bar i, \bar j} + (A^{P}_{\bar i, \bar j})^\dagger  A^{P}_{\bar \ell, \bar m}, X \right\}  \right)
        \]
        By applying \cref{Lemma few site}, since in the sum we have $j \in {S_{\bar i, \bar j}\cup S_{\bar \ell, \bar m}} $, each summand of the sum above has a transport plan with cost vector at most
        \[\frac{\delta}{2} 4 \| A^P_{\bar i, \bar j}\| \| A^P_{\bar \ell, \bar m} \| \| X \|_{\mathrm{tr}} e_{S_{\bar i, \bar j} \cup S_{\bar \ell, \bar m}} = 4 \delta  \| A^P_{\bar i, \bar j}\| \| A^P_{\bar \ell, \bar m} \| e_{S_{\bar i, \bar j} \cup S_{\bar \ell, \bar m}}.\]
        Thus, $Z_1$ has a transport plan with cost at most
        \begin{equation} \label{eq 2 diss case 2}
            4 \delta \sum_{P\in \cP_i}  \! \sum_{(k,h) \ne (0,0)}\sum_{\substack{\bar i <_v \bar j ;   \bar \ell <_v \bar m  \\ j \in {S_{\bar i, \bar j} \cup S_{\bar \ell, \bar m}} }}  \| A^P_{\bar i, \bar j}\| \| A^P_{\bar \ell, \bar m} \| \, e_{S_{\bar i, \bar j} \cup S_{\bar \ell, \bar m}}.
        \end{equation}

        Combining \eqref{eq 1 diss case 2}, \eqref{eq 2 diss case 2}, we have that for $i \ne j$, $\Phi_{i, \mathrm d}(X)$ has a transport plan with cost vector at most

        \[\left(1+\mathcal O(\delta^2)\right) e_j + 4 \delta \sum_{P\in \cP_i}  \! \sum_{(k,h) \ne (0,0)}\sum_{\substack{\bar i <_v \bar j ;   \bar \ell <_v \bar m  \\ j \in {S_{\bar i, \bar j} \cup S_{\bar \ell, \bar m}} }}  \| A^P_{\bar i, \bar j}\| \| A^P_{\bar \ell, \bar m} \| \, e_{S_{\bar i, \bar j} \cup S_{\bar \ell, \bar m}}.  \]
        \textbf{Combining the two cases.}
        What we have shown is that, for a general $j$, $\Phi_{i,\mathrm d}(X)$ has a transport plan with cost vector at most
        \begin{align}
            &\left(1 - 4\delta\,\ind_{\{i=j\}} + \mathcal O(\delta^2)\right)e_j + 4 \delta \sum_{P\in \cP_i} \sum_{(k,h) \ne (0,0)} \sum_{\substack{\bar i <_v \bar j ; \bar \ell <_v \bar m  \\ j \in {S_{\bar i, \bar j} \cup S_{\bar \ell, \bar m}} }}  \| A^P_{\bar i, \bar j}\| \| A^P_{\bar \ell, \bar m} \| e_{S_{\bar i, \bar j} \cup S_{\bar \ell, \bar m}} \notag
            \\
            =& \left( \left(1 + \mathcal O(\delta^2)\right)I - 4\delta E_{\{i\}} + 4 \delta \sum_{P\in \cP_i} \sum_{(k,h) \ne (0,0)} \sum_{\substack{\bar i <_v \bar j ; \bar \ell <_v \bar m  \\ j \in {S_{\bar i, \bar j} \cup S_{\bar \ell, \bar m}} }}   \| A^P_{\bar i, \bar j}\| \| A^P_{\bar \ell, \bar m} \| E_{S_{\bar i, \bar j} \cup S_{\bar \ell, \bar m}}
            \right)e_j . \label{eq bound}
        \end{align}
        We have therefore constructed a transport plan for $\Phi_{i,\mathrm d}(X)$ whose cost vector is bounded componentwise by a matrix applied to the input cost vector. We consider some bounds to simplify the expression of this matrix. So far, we have used very little about the jump operators: we only used that they can be decomposed into operators of different-sized supports. We recall \eqref{eq:jump-cluster-norm}, and observe that $A_{\bar i , \bar j}^P = 0$ unless $(\bar i,  \bar j)$ is a cluster starting from $i$, the site where $P$ is supported. So \eqref{eq bound} is entry-wise bounded by
        \begin{equation}\label{eq bound 2}
        \left(
        \begin{aligned}
            &\left(1+\mathcal O(\delta^2)\right)I-4\delta E_{\{i\}}\\
            &{}+12\delta\sum_{(k,h)\ne(0,0)}
            \sum_{\substack{\bar i<_v\bar j,\ \bar\ell<_v\bar m\\
                  j\in S_{\bar i,\bar j}\cup S_{\bar\ell,\bar m}\\
                  (\bar i,\bar j),(\bar\ell,\bar m)\text{ clusters from }i}}
            \left(\frac\beta2\right)^{k+h}
            \left(\prod_{s=1}^k w_{i_s,j_s}\right)
            \left(\prod_{r=1}^h w_{\ell_r,m_r}\right)
            \frac{E_{S_{\bar i,\bar j}\cup S_{\bar\ell,\bar m}}}{k!h!}
        \end{aligned}
        \right)e_j.
        \end{equation}

        Now, we re-parametrize and concatenate $(\bar i, \bar j)$ with $(\bar \ell, \bar m)$ as two vectors, $(\bar r, \bar s) \in [n]^{k+h}$ of length $k+h>0$. When $(\bar i, \bar j)$ and $(\bar \ell, \bar m)$ are clusters from $i$, so is $(\bar r, \bar s)$. Further,
        \[
        S_{\bar i, \bar j} \cup S_{\bar \ell, \bar m} = S_{\bar r, \bar s}.
        \]
         Thus, \eqref{eq bound 2} is bounded by
        \begin{align}
            & \left( (1 + \mathcal O(\delta^2))I - 4\delta E_{\{i\}} + 12 \delta \sum_{t >0 } \sum_{\substack{\bar r <_v \bar s \in [n]^t \\ j \in {S_{\bar r, \bar s}} \\ (\bar r,    \bar s) \text{ cl. from } i}} \left( \sum_{s = 0}^t   \left (\frac{\beta }{2 } \right)^{t} \left(\prod_{\ell=1}^t w_{r_\ell,s_\ell} \right)\frac{1}{s! (t-s) !} \right)  E_{S_{\bar r, \bar s} }\right)e_j \notag\\
            =& \left( (1 + \mathcal O(\delta^2))I - 4\delta E_{\{i\}} + 12 \delta \sum_{t > 0}  \frac{\beta^t}{t !} \sum_{\substack{\bar r <_v \bar s \in [n]^t \\ j \in {S_{\bar r, \bar s}} \\ (\bar r, \bar s) \text{ cl. from } i}} \left(\prod_{\ell=1}^t w_{r_\ell,s_\ell} \right)  E_{S_{\bar r, \bar s} }\right)e_j, \label{final estim for dissipative part}
        \end{align}
        which proves \eqref{update Q_i^d}.

        Finally, defining $R_{\mathrm d}^{(i)}$ as in \eqref{def of R^d_i}, we get from \eqref{eq:marked-cluster}
        \[\left\| \sum_{i=1}^n R_{\mathrm d}^{(i)} \right\|_{1 \to 1} = 12 \max_{j \in [n]} \sum_{t \geq 1} \frac{ \beta^{t}} {t !}  \sum_{i=1}^n  \sum_{\substack{\bar i <_v \bar j \in [n]^t \\ (\bar i, \bar j)\ \text{cluster from } i \\ j \in S_{\bar i, \bar j}}} \left(\prod_{\ell=1}^t w_{i_\ell,j_\ell} \right)|S_{\bar i, \bar j}| \leq  12 \sum_{t \geq 1} (t+1)^2 ( \beta \Lambda)^{t}. \]

    \end{proof}

   \subsection{Coherent update} \label{subsec growth cohe evolution}
   For the coherent contribution, substituting the representation obtained in \cref{lem:Fourier-transform-coherent-term} into \eqref{coherent part}, we obtain
    \begin{equation}
        \Phi_{i,\mathrm c}(\rho) = \rho - \sum_{P \in \cP_i} \frac{\ii  \delta}{\beta} \left[\int_{\mathbb{R}} \sech\left(\frac{2\pi \xi}{\beta}\right)e^{\ii H\xi} B^P e^{-\ii H\xi} \,d\xi,\rho \right],
    \end{equation}
    where for each $P \in \cP_i$, we define
    \[
        B^P = \frac1{2\ii } [P\sigma^{1/2}P,\sigma^{-1/2}] = \Skew\left(Pe^{-\beta H/2}Pe^{\beta H/2}\right).
    \]

    The following theorem bounds the first-order contribution of the coherent evolution to the update matrices in terms of the total interaction strength $\Lambda$.

        \begin{teo}[All-time coherent update-matrix bound] \label{coherent evolution}
        Suppose that
        \[
             12\beta\Lambda<2\pi.
        \]
        Then the coherent update $\Phi_{i,\mathrm c}$ admits an update matrix of the form
        \[
            I+\delta R_{\mathrm c}^{(i)}+o(\delta),
        \]
        where $R_{\mathrm c}^{(i)}$ is nonnegative and
        \begin{equation} \label{eq:coherent-column-bound-all-time}
            \left\|\sum_{i=1}^n R_{\mathrm c}^{(i)}\right\|_{1\to1} \leq \frac{600}{2\pi-12\beta\Lambda} \sum_{k\geq2}(k+1)^2(\beta\Lambda)^k.
        \end{equation}
    \end{teo}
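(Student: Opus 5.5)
The plan is to treat the coherent generator as a commutator with a Hermitian operator that is a sum of few-site pieces, each with norm controlled by cluster weights. The bound follows the same pattern as \cref{dissipative evolution}, with one additional step for the real-time average in \cref{lem:Fourier-transform-coherent-term}.

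\textbf{Cluster expansion of $B^P$.} Expand
\[
Pe^{-\beta H/2}Pe^{\beta H/2}=P\,(e^{-\beta H/2}Pe^{\beta H/2})
\]
using \cref{Expansion with nested} with $\alpha=-\beta/2$. This gives clusters $(\bar r,\bar s)$ from $i$ of order $k$, with coefficient of norm at most $\beta^k\prod_\ell w_{r_\ell,s_\ell}/k!$. The $k=0$ term is $P^2=I$, whose $\Skew$ vanishes. For $k=1$, $\Skew(P[H_{r,s},P])$ contains $\Skew(PH_{r,s}P-H_{r,s})=0$, since both operators are Hermitian. So the first-order term also drops out. I expect a symmetry in the sign of $\beta$ to kill the odd part more generally, but the stated sum starts at $k\ge2$, and I only need the vanishing of orders $0$ and $1$.

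Hence $B^P=\sum_{k\ge2}B^P_{\bar r,\bar s}$, with each term supported on $S_{\bar r,\bar s}\ni i$ and $\|B^P_{\bar r,\bar s}\|\le \beta^k\prod w/k!$.

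\textbf{Fixed-time transport step.} For a fixed $\xi$, conjugation by $e^{\ii H\xi}$ destroys locality, so I would not expand it perturbatively; the cluster series in real time converges only for short times. Instead, I would use the identity
\[
[e^{\ii H\xi}Be^{-\ii H\xi},\rho]=e^{\ii H\xi}\,[B,\,e^{-\ii H\xi}\rho e^{\ii H\xi}]\,e^{-\ii H\xi}.
\]
This converts the problem into the Wasserstein behaviour of the unitary channel $\mathcal U_\xi=e^{-\ii H\xi}\cdot e^{\ii H\xi}$ and its inverse. This is where \cref{lem:reversible-plan-lift} is meant to enter: it should lift a transport plan through the reversible evolution $\mathcal U_\xi$ and back, so that the local commutator $-\ii\delta[B^P_{\bar r,\bar s},\cdot]$ acts in the evolved frame. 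By \cref{Lemma few site}, or rather its commutator analogue with cost $2\|B\|\|X\|_{\mathrm{tr}}e_S$, this action produces cost $\propto\|B^P_{\bar r,\bar s}\|e_{S_{\bar r,\bar s}}$ times the evolved component's weight.

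The growth of the sitewise costs under $\mathcal U_{\pm\xi}$ would be controlled by the nonnegative comparison matrices of \cref{lem:two-mark-input}. I expect these to grow like $e^{c\Lambda|\xi|}$ with $c\le 12$, from the factor $2$ in commutator bounds combined with the tree counting of \cref{lem:weighted-clusters} in a Lieb–Robinson-type estimate with two marked sites. Integrating against the weight $\frac1\beta\sech(2\pi\xi/\beta)\le\frac{2}{\beta}e^{-2\pi|\xi|/\beta}$ gives
\[
\frac{2}{\beta}\int_{\mathbb R} e^{12\Lambda|\xi|-2\pi|\xi|/\beta}\,d\xi=\frac{4}{2\pi-12\beta\Lambda}.
\]
This accounts for both the hypothesis and the prefactor.

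\textbf{Summation and main obstacle.} Summing over $P\in\cP_i$ (a factor $3$), over clusters, and then over $i$ via \eqref{eq:marked-cluster} yields
\[
\sum_{k\ge2}(k+1)^2(\beta\Lambda)^k,
\]
with the remaining numerical constants absorbed into $600$. The identity part contributes $I$ at order $\delta^0$, and the $o(\delta)$ error is uniform because all the series converge.

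The main obstacle is the second step. A naive approach that expands $e^{\ii H\xi}$ in clusters only converges for $|\xi|\lesssim 1/\Lambda$. Obtaining an exponential-in-$|\xi|$ bound on the update matrices of $\mathcal U_\xi$ that is uniform in $n$ and has no graph-growth factor is exactly what the reversible-lift construction has to deliver. I would first try to prove that bound via a Grönwall argument on the cost vectors, using $\frac{d}{d\xi}$ and single-edge update matrices built from the weights $w_{r,s}$.
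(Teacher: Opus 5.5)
Your cluster expansion of $B^P$ (with the vanishing of orders $0$ and $1$) agrees with the paper. So do the kernel integral $4/(2\pi-12\beta\Lambda)$ and the idea of conjugating the kick by a reversible lift of $\tau_\xi$, so that the zeroth-order term is exactly $I$. The gap is in the first-order term $\mathfrak U_\xi\mathfrak k_B\mathfrak U_{-\xi}$. The generator $-\ii[B,\cdot\,]$ acts on \emph{every} plan component, including components $X_j$ with $j\notin S=\supp B$. You propose to charge each of them with the few-site commutator bound $2\|B\|\,\|X_j\|_{\mathrm{tr}}e_S$, which gives a matrix of the form $\|B\|\,e_S\mathbf 1^{\mathsf T}$. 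Then every column is charged by clusters rooted at every site, \eqref{eq:marked-cluster} no longer applies, and the column sums of the resulting $\sum_i R^{(i)}_{\mathrm c}$ grow linearly in $n$.

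In \cref{dissipative evolution} this is avoided because jump terms avoiding the mark $j$ cost only $1+\mathcal O(\delta^2)$ by \cref{size of lind step}. The coherent analogue would be that $e^{\delta\cC_B}$ is a trace-norm isometry preserving $\mathfrak P_j$ when $j\notin S$. That isometry does not survive conjugation, because $|\mathfrak U_{\pm\xi}|_{\mathrm{bl}}\le T_{|\xi|}$ is only a growth bound. Either you keep $e^{\delta\overline{\cC}_B}$ and pay $T_{|\xi|}^2\neq I$ already at order $\delta^0$, which is the loss the paper explicitly warns about. Or you expand it and pay $2\|B\|\,T_{|\xi|}^2$ in every column at order $\delta$, which is again $n$-dependent after summing over roots. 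A Gr\"onwall bound on the cost growth of $\mathfrak U_\xi$ plus an $\|B\|E_S$-type bound for the kick therefore cannot give the stated $n$-independent estimate.

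The missing idea is to compare the conjugated uniform kick with the uniform kick by the evolved operator in the physical frame: $\mathfrak k_{B,t}=\overline{\cC}_{B_t}+\mathfrak e_{B,t}+\mathfrak U_tq_B\mathfrak U_{-t}$ with $B_t=\tau_t(B)$.
\begin{itemize}
\item On component $j$ one splits $B_t=\mathbb E_jB_t+(B_t-\mathbb E_jB_t)$. The first piece generates a cost isometry preserving $\mathfrak P_j$; this replaces your ``terms avoiding $j$''. The second costs only $2\eta_j(B_t)$, with $\eta_j(L)=\|L-\mathbb E_jL\|$.
\item The defect $\mathfrak e_{B,t}$ is a Duhamel integral whose source $[q_H,\overline{\cC}_{B_s}]$ is again controlled by local oscillations, because $q_{i,j}$ commutes with $\overline{\cC}_{\mathbb E_i\mathbb E_jL}$.
\item What makes the cluster sum $n$-independent is a separate weighted Lieb--Robinson bound for local oscillations, $\boldsymbol\eta(\tau_\xi(B))\le 2\|B\|e^{4|\xi|M}e_S$. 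It gives $\sum_\alpha\boldsymbol\eta(\tau_\xi(B_\alpha))\le 2e^{4|\xi|M}b_0$ with $\|b_0\|_\infty\le\|A_0\|_{1\to1}$.
\end{itemize}
Your proposal has no counterpart of $\eta_j$ or of this oscillation estimate, and without them the real-time step fails. Relatedly, the exponent $12\Lambda|\xi|$ comes from $\|T_t\|_{1\to1}^2\le e^{12\Lambda t}$, one lift on each side, not from tree counting of real-time clusters.
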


    In order to obtain this result, we first expand the operators $B^P$ into local Hermitian terms and show that the contributions of orders zero and one in $\beta$ vanish. The next step is to control the coherent update generated by the real-time evolution of each such term. To explain the role of the reversible lift, let $K$ be a local Hermitian operator and write

    \[
    \tau_\xi(K)=e^{\ii H\xi}Ke^{-\ii H\xi}
    \]
     and $\cC_K=-\ii [K,\cdot]$. Unitary conjugation gives
    \[
        \cI +\delta\cC_{\tau_\xi(K)}
        =\tau_\xi\circ(\cI +\delta\cC_K)\circ\tau_{-\xi}.
    \]
    Estimating the transport costs of these three factors separately can lose the cancellation between $\tau_\xi$ and $\tau_{-\xi}$: the resulting bound may already increase the cost at $\delta=0$. Such a bound cannot capture the competition with the dissipative contraction, which is of order $\delta$.

    We therefore construct linear maps $\mathfrak U_\xi$ acting on
    transport-plan tuples $\boldsymbol X=(X_j)_{j\in[n]}$. These maps preserve the partial-trace constraints and satisfy
    \[
        \sum_j(\mathfrak U_\xi\boldsymbol X)_j
        =\tau_\xi\!\left(\sum_jX_j\right),
        \qquad
        \mathfrak U_\xi\mathfrak U_{-\xi}=I.
    \]
    The first identity says that $\mathfrak U_\xi$ lifts the physical evolution to transport plans; the second retains its exact reversibility. If $\mathfrak k_K$ similarly lifts the generator $\cC_K$ to transport plans, then
    \[
        \mathfrak U_\xi(I+\delta\mathfrak k_K)\mathfrak U_{-\xi}
        =I+\delta\mathfrak U_\xi\mathfrak k_K\mathfrak U_{-\xi}.
    \]
    Applied to an input plan, this gives a plan for the desired coherent update. Crucially, its zeroth-order term is exactly the original plan, so no transport cost is introduced at $\delta=0$.

    To estimate the first-order term, we separate a componentwise
    unitary evolution generated by observables acting trivially on the site indexing each component. This evolution preserves both the partial-trace constraints and each component's trace norm. We bound the remaining contribution using weighted real-time propagation estimates, obtaining an update matrix of the  form $I+\delta R_{\xi,K}+\mathcal O(\delta^2)$. Finally, we sum these bounds over the imaginary-time clusters and integrate against the Fourier kernel to prove \cref{coherent evolution}.

    \subsubsection{Imaginary-time expansion and cancellations} We apply the nested-commutator method of \cref{Expansion with nested}, also used in \cite[Lemma~3.4]{Bergamaschi26}, to the operator $B^P$. The cancellation of its first two orders is checked explicitly below.

    \begin{lemma}[Cluster expansion of \(B^P\)] \label{lem:B-cluster-cancellation}
        Let \(P \in \cP_i\). For every $\bar i,\bar j \in [n]^k, \bar i <_v \bar j$, define
        \[
        B_{\bar i, \bar j}^P :=\Skew\left( \left(-\frac{\beta}{2} \right)^k \frac{1}{k!}\,P [H_{i_k,j_k},[\dots[H_{i_1,j_1},P]\dots]] \right).
        \]
        Then
        \begin{equation} \label{eq:B-cluster}
        B^P =\sum_{k\ge2} \sum_{\substack{\bar i <_v \bar j \in [n]^k\\
                          (\bar i, \bar j) \ {\rm clust. \ from}\ i}}
        B_{\bar i, \bar j}^P, \quad \text{with } \ \left\|{B_{\bar i,\bar j}^P} \right \|
        \le  \frac{\beta^k}{k!}\prod_{\ell=1}^k w_{i_\ell,j_\ell}.
        \end{equation}
        In particular, every \(B_{\bar i, \bar j}^P\) is Hermitian and is supported on  \(S_{\bar i, \bar j}\).
    \end{lemma}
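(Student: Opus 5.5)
We plan to start from the identity $B^P=\Skew(Pe^{-\beta H/2}Pe^{\beta H/2})$ and apply \cref{Expansion with nested} with $\alpha=-\beta/2$ and $X=P$:
\[
e^{-\beta H/2}Pe^{\beta H/2}=\sum_{k\ge0}\sum_{\substack{\bar i,\bar j\in[n]^k\\ \bar i<_v\bar j}}\left(-\frac\beta2\right)^k\frac1{k!}[H_{i_k,j_k},[\dots[H_{i_1,j_1},P]\dots]].
\]
We then multiply on the left by $P$ and take $\Skew$, which is linear and real-linear in the scalar coefficients (these are real). The result is $B^P=\sum_k\sum B^P_{\bar i,\bar j}$, where only clusters from $\supp(P)=\{i\}$ contribute by \cref{Expansion with nested}. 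The series converges absolutely in operator norm by the norm bound of that lemma, since the sum over clusters is controlled by \eqref{subsresult in the cluster proof}; in fact the exponential series converges for any finite $H$.

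The norm bound follows directly. By \cref{Expansion with nested}, $\|P\,[\cdots]\|\le\|P\|\,2^k\prod w\,\|P\|=2^k\prod w$. Together with the factor $(\beta/2)^k/k!$ and $\|\Skew(Y)\|\le\|Y\|$, this gives $\beta^k\prod w_{i_\ell,j_\ell}/k!$. Hermiticity holds because $\Skew(Y)^\dagger=\Skew(Y)$. For support, the nested commutator is supported in $\{i\}\cup S_{\bar i,\bar j}$, which equals $S_{\bar i,\bar j}$ when $k\ge1$ and $(\bar i,\bar j)$ is a cluster from $i$. Multiplying by $P$, supported on $i$, does not enlarge this set.

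The main work is the cancellation of orders $k=0$ and $k=1$. For $k=0$, the term is $\Skew(P^2)=\Skew(I)=0$. For $k=1$, we need
\[
\sum_{(i_1,j_1)}\Skew\left(-\tfrac\beta2P[H_{i_1,j_1},P]\right)=0,
\]
and it suffices to show $P[K,P]$ is Hermitian for every Hermitian $K$. Using $P^2=I$, we get $P[K,P]=PKP-K$, which is Hermitian, so its $\Skew$ vanishes. This settles the claim.

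The one point to double-check is the identification $B^P=\Skew(Pe^{-\beta H/2}Pe^{\beta H/2})$. Since $\sigma^{\pm1/2}\propto e^{\mp\beta H/2}$, we have $\frac1{2\ii}[P\sigma^{1/2}P,\sigma^{-1/2}]=\frac1{2\ii}(Y-Y^\dagger)$ with $Y=Pe^{-\beta H/2}Pe^{\beta H/2}$, because $Y^\dagger=e^{\beta H/2}Pe^{-\beta H/2}P=\sigma^{-1/2}P\sigma^{1/2}P$. The normalization constants cancel, so the identification holds and the definition used in the expansion is consistent.
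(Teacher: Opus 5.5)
Your proposal is correct and follows the paper's proof essentially step for step: you expand $e^{-\beta H/2}Pe^{\beta H/2}$ by \cref{Expansion with nested}, multiply by $P$, apply $\Skew$ with $\|\Skew(Y)\|\le\|Y\|$, and eliminate orders $0$ and $1$ via $\Skew(I)=0$ and the Hermiticity of $P[K,P]=PKP-K$. Your explicit check that $B^P=\Skew(Pe^{-\beta H/2}Pe^{\beta H/2})$ is a detail the paper takes for granted; the only slight imprecision is that $i\in S_{\bar i,\bar j}$ can fail for an index-cluster when $H_{i_1,j_1}$ acts trivially on site $i$, but then $[H_{i_1,j_1},P]=0$, so the support claim is unaffected.
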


    \begin{proof}
        Applying \cref{Expansion with nested}, we obtain
        \[
        e^{-\beta H/2}Pe^{\beta H/2}
        =\sum_{k\ge0} \left(-\frac{\beta}{2} \right)^k \frac{1}{k!} \sum_{\substack{\bar i,\bar j\in[n]^k\\\bar i<_v\bar j}}[H_{i_k,j_k},[\cdots[H_{i_1,j_1},P]\cdots]].
        \]
        Multiplication by \(P\) on the left and linearity of \(\Skew\) give the equality for $B^P$ in \eqref{eq:B-cluster}. The same lemma shows that each summand vanishes unless \((\bar i, \bar j)\) is a cluster from \(i\), and that its support is contained in \(S_{\bar i, \bar j}\). Moreover, since \( \|\Skew(X) \| \leq \| X \| \), we have
        \[
        \|B_{\bar i, \bar j}^P\| \le \frac{(\beta/2)^k}{k!} 2^k\prod_{\ell=1}^k w_{i_\ell, j_\ell} = \frac{\beta^k}{k!}\prod_{\ell=1}^kw_{i_\ell, j_\ell}.
        \]
        The definition of \(\Skew\) also shows directly that

        \[
        (B_{\bar i, \bar j}^P)^\dagger=B_{\bar i, \bar j}^P.
        \]
         Finally, for \(k=0\), the expression inside \(\Skew\) is \(P^2=I\), so that \( B_{\varnothing}^P=\Skew(I)=0 \). For \(k=1\) and an individual edge \(\{ i, j \}\), apart from the real coefficient \(-\beta/2\), we have
        \[
        \Skew\left(P\left[H_{i,j},P\right]\right) = \Skew\left(P H_{i,j}P - H_{i,j}\right) = 0.
        \]
        Thus every term of orders \(0\) and \(1\) vanishes, and
        the expansion starts at \(k=2\), as asserted.
    \end{proof}

    \subsubsection{Real-time propagation by a reversible transport-plan lift}
    Define the nonnegative symmetric matrix
    \begin{equation} \label{eq:M-def}
    M_{i,j}=
    \begin{cases}
    w_{i,j},&i\ne j,\\
    d_i :=\sum_{k \ne i} w_{i,k},&i=j.
    \end{cases}
    \end{equation}
    For a fixed column \(j\),
    \[
    \sum_{i=1}^n M_{i,j} = M_{j,j} + \sum_{i\ne j}M_{i,j} = d_j+\sum_{i \ne j} w_{i,j} = 2d_j.
    \]
    Consequently,
    \begin{equation} \label{eq:M-norm}
    \| M \|_{1\to 1} = \| M \|_{\infty\to\infty} =2\Lambda.
    \end{equation}
    For every \(\xi\in\mathbb R\), we define the linear map
    \begin{equation} \label{eq:tau-definition}
    \tau_\xi:\mathcal B(\cH_n) \longrightarrow\mathcal B(\cH_n), \qquad \tau_\xi(K):=e^{\ii H\xi}Ke^{-\ii H\xi},
    \end{equation}
    which defines the Heisenberg evolution for real time \(\xi\). Because \(e^{\ii  H\xi}\) is unitary, \(\tau_\xi\) is a \(*\)-automorphism. In particular,
    \[
    \tau_\xi(KL)=\tau_\xi(K)\tau_\xi(L),\qquad
    \tau_\xi(K^\dagger)=\tau_\xi(K)^\dagger,\qquad
    \|\tau_\xi(K) \|=\| K \|.
    \]
    The map need not preserve the spatial support of \(K\); controlling the spreading of that support is precisely the real-time transport problem that we want to study. We construct the transport plan itself. The construction begins by defining the following finite-dimensional space.
    \begin{definition}[Plan space, summation map, and block majorant]
    \label{def:plan-lift-space}
    Let \(\mathcal B_{\mathrm h}(\cH_n)\) be the real vector space of
    Hermitian operators and define
    \[
    \mathfrak X:=\bigoplus_{u=1}^n\mathcal B_{\mathrm h}(\cH_n),
    \qquad
    \mathfrak P_i:=\{X=X^\dagger:\tr_i X=0\},
    \qquad
    \mathfrak P:=\bigoplus_{i=1}^n\mathfrak P_i.
    \]
    An element
    \[
    \boldsymbol X=(X_1,\ldots,X_n)\in\mathfrak P
    \]
     is called a plan tuple. Its summation and cost maps are
    \[
    \Sigma\boldsymbol X:=\sum_{i=1}^n X_i,
    \qquad
    \cost(\boldsymbol X)_i:=\frac12 \|X_i \|_{\mathrm{tr}}.
    \]
    Thus \(\boldsymbol X\) is exactly a transport plan for \(\Sigma\boldsymbol X\), written as one vector.

    \noindent
    Let
    \[
    \mathfrak A:\mathfrak X\to\mathfrak X
    \]
     be a real-linear, Hermiticity-preserving map, and write
    \[
    \mathfrak A_{i,j} :  \mathcal B_{\mathrm h}(\cH_n) \to  \mathcal B_{\mathrm h}(\cH_n)
    \]
     for its block from input component \(j\) to output component \(i\). Its block trace-norm majorant is the nonnegative matrix
    \begin{equation} \label{eq:block-majorant-def}
    \left|\mathfrak A\right|_{\mathrm{bl},i,j} :=  \sup_{0\ne X=X^\dagger}
    \frac{\|\mathfrak A_{i,j}(X)\|_{\mathrm{tr}}}{\| X \|_{\mathrm{tr}}}.
    \end{equation}
    \end{definition}
    The block majorant is only a bookkeeping device. The triangle inequality gives, entry by entry,
    \begin{equation} \label{eq:block-majorant-rules}
    |\mathfrak A+\mathfrak B|_{\mathrm{bl}} \le|\mathfrak A|_{\mathrm{bl}}+|\mathfrak B|_{\mathrm{bl}}, \qquad |\mathfrak A\mathfrak B|_{\mathrm{bl}} \le|\mathfrak A|_{\mathrm{bl}}|\mathfrak B|_{\mathrm{bl}}.
    \end{equation}
    The second inequality follows from
    \[
    (\mathfrak A\mathfrak B)_{i,j}=\sum_{r=1}^n\mathfrak A_{i,r}\mathfrak B_{r,j}
    \]
     and
    \[
    \|(\mathfrak A\mathfrak B)_{i,j}(X)\|_{\mathrm{tr}} \le \sum_r|\mathfrak A|_{\mathrm{bl},i,r} |\mathfrak B|_{\mathrm{bl},r,j} \| X \|_{\mathrm{tr}}.
    \]
    In particular, if
    \[
    \boldsymbol x=\cost(\boldsymbol X),
    \]

    then
    \begin{equation}  \label{eq:block-majorant-cost}
    \cost(\mathfrak A\boldsymbol X) \leq_v|\mathfrak A|_{\mathrm{bl}}\boldsymbol x.
    \end{equation}
    For a site \(i\), define the normalized conditional expectation
    \begin{equation}
    \label{eq:conditional-expectation}
    \mathbb E_i(O):=\frac{I_i}{2}\otimes\tr_iO.
    \end{equation}
    It is completely positive, trace preserving, unital, idempotent, and
    self-adjoint for the Hilbert--Schmidt inner product.  Consequently it
    is contractive in trace norm on Hermitian operators and in operator
    norm on all operators. The following are true
    \begin{align} \label{eq:Pauli-twirl-Eu}
    \mathbb E_i(O) = \frac14 \left( O + \sum_{P \in \cP_i} P O P \right), \qquad
    \|\mathbb E_i(O) \|  \le\| O \|.
    \end{align}

    \begin{lemma} \label{lem:two-site-splitting}
    Let \(i,j \in [n], i \ne j\), and let \(Y=Y^\dagger\) satisfy \(\mathbb E_i\mathbb E_j(Y)=0\), or equivalently \(\tr_{\{i,j\}}Y=0\). Define
    \begin{align}
    D_i^{\{i,j\}}(Y)&:=\frac12\left(Y-\mathbb E_iY+\mathbb E_j Y\right).
    \label{eq:Da-def}
    \end{align}
    Then
    \begin{equation}
    \label{eq:two-site-splitting-properties}
    D_i^{\{i,j\}}(Y)+D_j^{\{i,j\}}(Y)=Y,
    \qquad
    \tr_iD_i^{\{i,j\}}(Y)=0,
    \end{equation}
    and
    \begin{equation} \label{eq:two-site-splitting-norm}
    \| D_i^{\{i,j\}}(Y) \|_{\mathrm{tr}} \leq \frac32 \| Y \|_{\mathrm{tr}}.
    \end{equation}
    \end{lemma}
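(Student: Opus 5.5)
The plan is to verify the three claims directly from the algebraic properties of the conditional expectations $\mathbb E_i,\mathbb E_j$: they commute, are idempotent, are trace-norm contractive on Hermitian operators, and satisfy $\tr_i\mathbb E_i = \tr_i$. First, I will check the equivalence in the hypothesis. Since $\mathbb E_i\mathbb E_j(Y)=\frac{I_{\{i,j\}}}{4}\otimes\tr_{\{i,j\}}Y$, the tensor factor $I_{\{i,j\}}/4$ is nonzero, so $\mathbb E_i\mathbb E_j(Y)=0$ holds exactly when $\tr_{\{i,j\}}Y=0$.

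For the sum identity I will simply add the two definitions:
\[
D_i^{\{i,j\}}(Y)+D_j^{\{i,j\}}(Y)=\tfrac12\left(Y-\mathbb E_iY+\mathbb E_jY\right)+\tfrac12\left(Y-\mathbb E_jY+\mathbb E_iY\right)=Y .
\]
This identity holds for every $Y$, with no use of the hypothesis.

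For the partial-trace constraint I will apply $\mathbb E_i$, since $\tr_iZ=0$ if and only if $\mathbb E_iZ=0$. Using idempotence and commutation,
\[
\mathbb E_i D_i^{\{i,j\}}(Y)=\tfrac12\left(\mathbb E_iY-\mathbb E_iY+\mathbb E_i\mathbb E_jY\right)=\tfrac12\,\mathbb E_i\mathbb E_j Y=0 .
\]
This is the only place the hypothesis $\tr_{\{i,j\}}Y=0$ enters. Hermiticity of $D_i^{\{i,j\}}(Y)$ is immediate because each $\mathbb E$ preserves Hermiticity.

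For the norm bound, the naive triangle inequality gives $\frac12(1+1+1)\|Y\|_{\mathrm{tr}}=\frac32\|Y\|_{\mathrm{tr}}$, using trace-norm contractivity of $\mathbb E_i$ and $\mathbb E_j$ on Hermitian operators, which was noted after \eqref{eq:conditional-expectation}. This is exactly the claimed constant, so no sharper argument such as grouping $Y-\mathbb E_iY$ is needed. There is no real obstacle in this proof. The only step requiring care is recalling that $\mathbb E_i$ and $\mathbb E_j$ commute for $i\ne j$, which holds because they act on different tensor factors, and that $\mathbb E_i\mathbb E_i=\mathbb E_i$.
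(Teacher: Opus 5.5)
Your proposal is correct and follows the paper's proof essentially line by line. You add the two definitions to get the sum identity, apply $\mathbb E_i$ with idempotence and commutation to obtain $\tfrac12\mathbb E_i\mathbb E_jY=0$, and bound the trace norm by $\tfrac32\|Y\|_{\mathrm{tr}}$ via the triangle inequality and trace-norm contractivity of the conditional expectations (your explicit check that $\mathbb E_i\mathbb E_j(Y)=0 \iff \tr_{\{i,j\}}Y=0$ is a harmless addition the paper leaves implicit).
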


    \begin{proof}
        Adding \eqref{eq:Da-def} with both indices $i$ and $j$ cancels both conditional expectation terms and gives \(Y\). Since conditional expectations on
        different sites commute and are idempotent,
        \begin{align*}
        \mathbb E_i D_i^{\{i,j\}}(Y)  =\frac12\left(\mathbb E_i Y - \mathbb E_iY +\mathbb E_i\mathbb E_j Y\right)=0.
        \end{align*}
        By \eqref{eq:conditional-expectation}, the identity \(\mathbb E_i Z=0\) is equivalent to \(\tr_i Z=0\). This proves the partial-trace identity. Finally, the triangle inequality and \eqref{eq:Pauli-twirl-Eu} give
        \[
        \|D_i^{\{i,j\}}(Y) \|_{\mathrm{tr}} \le\frac12\left(\| Y \|_{\mathrm{tr}} + \| \mathbb E_iY \|_{\mathrm{tr}} + \| \mathbb E_jY \|_{\mathrm{tr}} \right) \le\frac32 \| Y \|_{\mathrm{tr}}.
        \]
    \end{proof}
    \begin{definition}
        For a Hermitian operator \(L\), introduce the two commutator
        superoperators
        \begin{equation}  \label{eq:A-C-commutators}
        \mathcal A_L(X):=\ii [L,X], \qquad \cC_L(X):=-\ii [L,X].
        \end{equation}
    \end{definition}
    The first generates the Heisenberg evolution \(\tau_\xi\); the second
    is the coherent generator acting on a transport-plan component.  Both
    obey
    \begin{equation}
    \label{eq:commutator-trace-norm}
    \|\mathcal A_L(X) \|_{\mathrm{tr}},\ \| \cC_L(X) \|_{\mathrm{tr}} \le2 \| L \| \| X \|_{\mathrm{tr}}.
    \end{equation}
    \begin{lemma}[Exact reversible lift of real-time evolution] \label{lem:reversible-plan-lift}
        There exists a linear generator

        \[
        \mathfrak H:\mathfrak X \to\mathfrak X
        \]
         such that
        \[\mathfrak H\mathfrak P\subseteq\mathfrak P, \qquad \Sigma\mathfrak H=\mathcal A_H\Sigma\] and for
        \[
        \mathfrak U_\xi:=e^{\xi\mathfrak H},
        \]

        \begin{equation} \label{eq:lift-intertwining}
            \mathfrak U_\xi\mathfrak P\subseteq\mathfrak P, \qquad \Sigma\mathfrak U_\xi=\tau_\xi\Sigma,  \qquad \mathfrak U_\xi^{-1}=\mathfrak U_{-\xi};
        \end{equation}
        Moreover, taking $T_t:=e^{3tM}$ with $t\geq0$, one has
        \begin{equation} \label{eq:lift-block-bound}
            |\mathfrak U_\xi|_{\mathrm{bl}}\le T_{|\xi|} \quad\text{entrywise}, \qquad \|{T_t} \|_{1\to1} =\|{T_t}\|_{\infty\to\infty} \le e^{6\Lambda t}.
        \end{equation}
    \end{lemma}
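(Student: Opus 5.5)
The plan is to build \(\mathfrak H\) edge by edge, so that it lifts \(\mathcal A_H=\ii[H,\cdot]\) term by term. Then we split it into a "unitary" part, whose exponential is isometric on each component, and a "transfer" part, whose block majorant is dominated by \(3M\).

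\textbf{Construction of \(\mathfrak H\).} For an input in component \(u\), write \(H=H^{(\hat u)}+\sum_{b\ne u}H_{u,b}\), where \(H^{(\hat u)}\) collects the terms whose support avoids \(u\).

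Define \(\mathfrak H_0\) to be block diagonal, with \((\mathfrak H_0\boldsymbol X)_u:=\mathcal A_{H^{(\hat u)}}(X_u)\). Since \(H^{(\hat u)}\) acts trivially on \(u\), we have \(\tr_u\mathcal A_{H^{(\hat u)}}(X_u)=\ii[H^{(\hat u)},\tr_uX_u]=0\), so \(\mathfrak H_0\mathfrak P\subseteq\mathfrak P\).

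For the edges through \(u\), set \(Y_{u,b}:=\mathcal A_{H_{u,b}}(X_u)\). For \(X_u\in\mathfrak P_u\), \(\tr_{\{u,b\}}Y_{u,b}=\ii[\,\cdot\,]\)-trace \(=0\), because a commutator with an operator supported in \(\{u,b\}\) has vanishing partial trace over \(\{u,b\}\). Using \cref{lem:two-site-splitting}, define \(\mathfrak H_1\) by sending \(D_u^{\{u,b\}}(Y_{u,b})\) to component \(u\) and \(D_b^{\{u,b\}}(Y_{u,b})\) to component \(b\). The formulas for \(D\) make sense for any Hermitian input, so this extends linearly to all of \(\mathfrak X\).

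Put \(\mathfrak H:=\mathfrak H_0+\mathfrak H_1\). Then \eqref{eq:two-site-splitting-properties} gives \(\mathfrak H\mathfrak P\subseteq\mathfrak P\) and \(\Sigma\mathfrak H=\mathcal A_H\Sigma\), because the two pieces of each \(Y_{u,b}\) add back to \(Y_{u,b}\).

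\textbf{Intertwining and reversibility.} Finite dimensionality and invariance of \(\mathfrak P\) under \(\mathfrak H\) give \(e^{\xi\mathfrak H}\mathfrak P\subseteq\mathfrak P\). The relation \(\Sigma\mathfrak H^m=\mathcal A_H^m\Sigma\) follows by induction, so \(\Sigma e^{\xi\mathfrak H}=e^{\xi\mathcal A_H}\Sigma\). Finally \(e^{\xi\mathcal A_H}=\tau_\xi\), since \(\frac{d}{d\xi}\tau_\xi(K)=\ii[H,\tau_\xi(K)]\), as in the proof of \cref{Expansion with nested}. The relation \(\mathfrak U_\xi^{-1}=\mathfrak U_{-\xi}\) is immediate for a matrix exponential.

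\textbf{Block bound.} This is the main point: a naive estimate of \(|\mathfrak H|_{\mathrm{bl}}\) would involve \(\|H^{(\hat u)}\|\), which is not controlled by \(\Lambda\).

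First, \(e^{s\mathfrak H_0}\) acts on component \(u\) as conjugation by the unitary \(e^{\ii sH^{(\hat u)}}\). It is therefore block diagonal and trace-norm isometric, so \(|e^{s\mathfrak H_0}|_{\mathrm{bl}}\le I\).

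Second, \eqref{eq:two-site-splitting-norm} and \eqref{eq:commutator-trace-norm} give \(\|D^{\{u,b\}}_{\cdot}(Y_{u,b})\|_{\mathrm{tr}}\le\frac32\cdot2w_{u,b}\|X_u\|_{\mathrm{tr}}\). Hence \(|\mathfrak H_1|_{\mathrm{bl},b,u}\le3w_{u,b}\) for \(b\ne u\) and \(|\mathfrak H_1|_{\mathrm{bl},u,u}\le3d_u\), i.e. \(|\mathfrak H_1|_{\mathrm{bl}}\le3M\).

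Now expand \(e^{\xi(\mathfrak H_0+\mathfrak H_1)}\) by the Dyson (interaction-picture) series. The \(m\)-th term is an integral over \(0\le s_1\le\dots\le s_m\le|\xi|\) of alternating products of \(e^{\pm s\mathfrak H_0}\) factors and \(\pm\mathfrak H_1\) factors; the case \(\xi<0\) is handled by replacing \(\mathfrak H\) with \(-\mathfrak H\), which has the same majorants. The series converges absolutely in finite dimension. Applying the rules \eqref{eq:block-majorant-rules} term by term gives
\[
|\mathfrak U_\xi|_{\mathrm{bl}}\le\sum_m\frac{|\xi|^m}{m!}(3M)^m=T_{|\xi|}.
\]

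Since \(M\) is nonnegative and symmetric, the \(1\to1\) and \(\infty\to\infty\) norms of \(T_t\) coincide. Using \eqref{eq:M-norm},
\[
\|T_t\|_{1\to1}\le e^{3t\|M\|_{1\to1}}=e^{6\Lambda t}.
\]
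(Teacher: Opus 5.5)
Your proposal is correct and is essentially the paper's proof. Your $\mathfrak H_0$ and $\mathfrak H_1$ are exactly the paper's block-diagonal $\mathfrak D$ and incident-edge part $\mathfrak B$ of the same edge-by-edge lift, and you obtain the same $3M$ block bound. The only difference is that you sum the Dyson series term by term, while the paper writes the interaction-picture integral equation for $e^{-t\mathfrak D}\mathfrak U_t$ and applies a componentwise Gr\"onwall inequality; Picard iteration of that equation is your series, so the two arguments are equivalent.
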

    \begin{proof}
        We fix \(1 \leq i < j \leq n\) and define a linear map
        \[
        \mathfrak h_{i,j}: \mathfrak X \to \mathfrak X
        \]
         by describing what it does to one input component \(X_r\). If \(r\notin \{i,j\}\), put \(\mathcal A_{H_{i,j}}(X_r)\) in output component \(r\) and put zero in every other component. Otherwise, put \(D_i^{\{i,j\}}(\mathcal A_{H_{i,j}}(X_r))\) in output component \(i\) and \(D_j^{\{i,j\}}(\mathcal A_{H_{i,j}}(X_r))\) in output component \(j\). Since $H_{i,j}$ is supported on $\{i,j\}$, cyclicity of the partial trace over these sites gives, for every input component $X_r$,
        \begin{equation*}
            \tr_{\{i,j\}}(\mathcal{A}_{H_{i,j}}(X_r)) = \ii \tr_{\{i,j\}}(H_{i,j}X_r  - X_rH_{i,j}) = 0.
        \end{equation*}
        Thus, \cref{lem:two-site-splitting} applies to $\mathcal{A} _{H_{i,j}}(X_r)$. Moreover, if \(X_r\in\mathfrak P_r\) and \(r\notin \{i,j\}\), then \(H_{i,j}\) acts away from \(r\) and
        \[
        \tr_r\mathcal A_{H_{i,j}}(X_r) =0.
        \]
         If \(r\in \{i,j\}\), the two outputs belong to \(\mathfrak P_i\) and \(\mathfrak P_j\) by \eqref{eq:two-site-splitting-properties}. Hence

        \[
        \mathfrak h_{i,j}\mathfrak P\subseteq\mathfrak P.
        \]
         Moreover, the sum of the output components equals \(\mathcal A_{H_{i,j}}(X_r)\) in both cases. By linearity in all input components,
        \begin{equation} \label{eq:edge-lift-intertwining}
        \Sigma\mathfrak h_{i,j}=\mathcal A_{H_{i,j}}\Sigma.
        \end{equation}
        Define
        \begin{equation} \label{eq:H-lift-def}
            \mathfrak H:=\sum_{1\leq i < j \leq n}\mathfrak h_{i,j}.
        \end{equation}
        The sum is finite. The invariance just proved for each summand gives
        \[
        \mathfrak H\mathfrak P\subseteq\mathfrak P.
        \]
         Summing \eqref{eq:edge-lift-intertwining} and using the definition of \(H\) gives
        \[
        \Sigma\mathfrak H=\mathcal A_H\Sigma.
        \]

        Induction yields
        \[
        \Sigma\mathfrak H^t=\mathcal A_H^t\Sigma
        \]
         for every integer \(t\ge0\). Using the norm-convergent exponential series in the finite-dimensional space then gives
        \[
        \Sigma e^{\xi\mathfrak H} =e^{\xi\mathcal A_H}\Sigma =\tau_\xi\Sigma.
        \]
         The same exponential series shows that \(e^{\xi\mathfrak H}\) leaves the invariant subspace \(\mathfrak P\) invariant. Finally, it also implies
        \[
        e^{\xi\mathfrak H}e^{-\xi\mathfrak H}=I,
        \]
         which proves all three
        claims in \eqref{eq:lift-intertwining}. This exact inverse is the reversibility that will be retained later.

        Now, for a site \(r\), we set
        \[
        H^{\hat r}:=\sum_{i<j:\,r\notin \{i,j\}}H_{i,j}.
        \]
         Define the block-diagonal generator \(\mathfrak D \) by

        \[
        (\mathfrak D\boldsymbol X)_r := \mathcal A_{H^{\hat r}}(X_r),
        \]
         and put
        \[
        \mathfrak B:=\mathfrak H-\mathfrak D.
        \]
         Thus \(\mathfrak D\) contains exactly the edges that do not touch the current plan mark, while \(\mathfrak B\) contains the incident edges that can move that mark. We observe that every diagonal block of \(e^{t\mathfrak D}\) is unitary conjugation by \(e^{\ii  tH^{\hat r}}\).  It is therefore a trace-norm isometry.

        We now verify the block inequality explicitly. Fix an input column \(r\) and one incident edge \(\{r,i\}\), where without loss of generality $r<i$. On that column, \(\mathfrak B\) contains the edge lift \(\mathfrak h_{r,i}\). By definition, it has exactly two possibly nonzero output blocks: the block from \(r\) to \(r\) is \( D_r^{\{r,i\}}\!\left(\mathcal A_{H_{r,i}}(X_r)\right) \), and the block from \(r\) to \(i\) is
        \( D_i^{\{r,i\}}\!\left(\mathcal A_{H_{r,i}}(X_r)\right) \). For either output \(j\in\{r,i\}\), \eqref{eq:commutator-trace-norm} and \eqref{eq:two-site-splitting-norm} give
        \[
        \trnorm{D_j^{\{r,i\}}(\mathcal A_{H_{r,i}}(X_r))} \leq 3w_{r,i}\trnorm{X_r}.
        \]
        No edge not incident on \(r\) contributes to column \(r\), because all such edges were placed in \(\mathfrak D\). Therefore, on the diagonal block of that column,
        \[
        |\mathfrak B|_{\mathrm{bl},r,r} \leq 3\sum_{i <j :\{i,j\}\ni r} w_{i,j} = 3M_{r,r}.
        \]
        If \(i\ne r\), the only possible contribution to block \((i,r)\) is
        the unique pair edge \(\{i,r\}\), and analogously
        \[
        |\mathfrak B|_{\mathrm{bl},i,r} \le3w_{i,r}=3M_{i,r}.
        \]
         These two cases cover every matrix entry and prove
        \begin{equation} \label{eq:B-block-bound}
        |\mathfrak B|_{\mathrm{bl}} \le3\sum_{1 \leq i < j \leq n} w_{i,j}E_{\{i,j\}} = 3M.
        \end{equation}
        Finally, we prove the propagation bound. Assume that \(\xi=t\ge0\) and put
        \[
        \mathfrak V_t:=e^{-t\mathfrak D}e^{t(\mathfrak D+\mathfrak B)} = e^{-t\mathfrak D}\mathfrak U_t.
        \]
         Differentiation gives
        \[
        \frac{d}{dt}\mathfrak V_t = \mathfrak B(t)\mathfrak V_t, \qquad \mathfrak B(t):=e^{-t\mathfrak D}\mathfrak B e^{t\mathfrak D}, \qquad \mathfrak V_0=I.
        \]
        Pre- and post-composition by the block-diagonal trace-norm isometries
        does not change any block bound, so \( |\mathfrak B(t)|_{\mathrm{bl}}\le3M \). Integrating the differential equation gives
        \[
        \mathfrak V_t = I + \int_0^t\mathfrak B(s)\mathfrak V_s\,ds.
        \]
        The block-majorant product rule in \eqref{eq:block-majorant-rules} and \(|\mathfrak B(s)|_{\mathrm{bl}}\le3M\) imply the entrywise inequality
        \[
        |\mathfrak V_t|_{\mathrm{bl}}\le I+3\int_0^tM |\mathfrak V_s|_{\mathrm{bl}}\,ds.
        \]
        Apply the componentwise matrix form of Grönwall's inequality to each column of $|\mathfrak V_t|_{\mathrm{bl}}$. Since \(M\) is nonnegative, it gives
        \[
        |\mathfrak V_t|_{\mathrm{bl}}\le e^{3tM}=T_t
        \]
         entrywise. Multiplication on the left by \(e^{t\mathfrak D}\) again changes no block trace norm, so the same bound holds for \(\mathfrak U_t\). For \(\xi<0\), repeat the argument with generators \(-\mathfrak D,-\mathfrak B\); their isometry and block bounds are unchanged.  This proves the entrywise inequality in \eqref{eq:lift-block-bound}.

        Finally, \(M\) is nonnegative and symmetric. Its exponential is therefore nonnegative and symmetric, and \eqref{eq:M-norm} gives
        \[
        \opnorm{T_t}_{1\to1} \le e^{3t\opnorm M_{1\to1}} = e^{6\Lambda t}.
        \]
         Symmetry gives the same \(\infty\to\infty\) bound.
    \end{proof}

    \begin{lemma}[Lift of a local coherent kick] \label{lem:local-kick-lift}
        Let \(K=K^\dagger\) be supported on a nonempty ordered set
        \(S=\{s_1,\ldots,s_m\}\). Then there exists a linear map
        \[
        \mathfrak k_K:\mathfrak X\to\mathfrak X
        \]
         such that
        \begin{equation}
        \label{eq:kick-lift-properties}
        \mathfrak k_K\mathfrak P\subseteq\mathfrak P, \qquad \Sigma\mathfrak k_K=\cC_K\Sigma.
        \end{equation}
        Let \(\overline{\cC }_K\) denote the block-diagonal map that
        applies \(\cC_K\) to every component, and set
        \begin{equation}
        \label{eq:qK-def}
        q_K:=\mathfrak k_K-\overline{\cC }_K.
        \end{equation}
        Then
        \begin{equation}
        \label{eq:qK-bound}
        |q_K|_{\mathrm{bl}}\le6\opnorm K E_S.
        \end{equation}
    \end{lemma}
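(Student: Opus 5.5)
The plan is to mimic the construction of the edge lifts $\mathfrak h_{i,j}$ in \cref{lem:reversible-plan-lift}: treat components whose mark lies outside $S$ as harmless, and redistribute the commutator outputs for components whose mark lies in $S$. Fix an input component $X_r\in\mathfrak P_r$.

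\emph{Case $r\notin S$.} Here $K$ acts away from $r$, so $\tr_r\cC_K(X_r)=\cC_K(\tr_r X_r)=0$. We therefore place $\cC_K(X_r)$ in output component $r$ and zero elsewhere. On this column $\mathfrak k_K$ coincides with $\overline{\cC}_K$, so $q_K$ vanishes on every column $r\notin S$.

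\emph{Case $r\in S$.} Set $Y:=\cC_K(X_r)$. Since $K$ is supported on $S$, cyclicity of the partial trace gives $\tr_S Y=0$. We need a splitting $Y=\sum_{\ell=1}^m D_{s_\ell}(Y)$ with $\tr_{s_\ell}D_{s_\ell}(Y)=0$ and $\|D_{s_\ell}(Y)\|_{\mathrm{tr}}\le c\|Y\|_{\mathrm{tr}}$ for a constant $c$ independent of $m$. A telescoping choice along the fixed ordering is natural:
\[
D_{s_\ell}(Y):=\mathbb E_{s_1}\cdots\mathbb E_{s_{\ell-1}}(Y)-\mathbb E_{s_1}\cdots\mathbb E_{s_\ell}(Y).
\]
The sum telescopes to $Y-\mathbb E_S Y=Y$, because $\mathbb E_SY=0$ when $\tr_SY=0$. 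Each term satisfies $\mathbb E_{s_\ell}D_{s_\ell}(Y)=0$ by idempotence and commutativity of the conditional expectations. Each term also has trace norm at most $2\|Y\|_{\mathrm{tr}}\le 4\|K\|\|X_r\|_{\mathrm{tr}}$, using contractivity and \eqref{eq:commutator-trace-norm}. We place $D_{s_\ell}(Y)$ in output $s_\ell$. This gives $\mathfrak k_K\mathfrak P\subseteq\mathfrak P$, and since the outputs sum to $Y$, also $\Sigma\mathfrak k_K=\cC_K\Sigma$ by linearity.

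For the bound \eqref{eq:qK-bound}, note that on column $r\in S$ the map $q_K$ has blocks only in rows in $S$. An off-diagonal block $(s_\ell,r)$ has norm at most $4\|K\|$. The diagonal block is $D_r(Y)-Y$, of norm at most $(2+2)\|K\|$ after subtracting $\cC_K(X_r)$. Both are at most $6\|K\|$, which gives $|q_K|_{\mathrm{bl}}\le 6\|K\|E_S$ entrywise.

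The main obstacle is getting the diagonal entry down to $6\|K\|$ rather than something depending on $m$. The telescoping structure handles this: $D_r(Y)-Y$ is a difference of at most three contractions of $Y$, so its norm is at most $3\|Y\|_{\mathrm{tr}}\le 6\|K\|\|X_r\|_{\mathrm{tr}}$ whichever estimate one uses. I would also double-check that the conditional expectations preserve Hermiticity, which is needed for the outputs to lie in $\mathcal B_{\mathrm h}$; this is immediate since they are completely positive.
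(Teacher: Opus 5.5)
Your proposal is correct and is essentially the paper's proof. It uses the same telescoping differences $\mathbb E_{s_1}\cdots\mathbb E_{s_{\ell-1}}-\mathbb E_{s_1}\cdots\mathbb E_{s_\ell}$ along the fixed ordering of $S$, the same placement of outputs, and the same count of $4\|K\|+2\|K\|$ for the diagonal block. The only slip is the intermediate bound $(2+2)\|K\|$ for the diagonal block, which should read $2\cdot 2\|K\|+2\|K\|=6\|K\|$, as your final paragraph correctly obtains via $3\|Y\|_{\mathrm{tr}}$.
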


    \begin{proof}
    Put
    \[
    F_0:=\cI, \qquad F_r:=\mathbb E_{s_1}\cdots\mathbb E_{s_r},
    \qquad \Delta_r:=F_{r-1}-F_r \quad(1\le r\le m).
    \]
    The differences telescope:
    \begin{equation}
    \label{eq:local-telescope-sum}
    \sum_{r=1}^m\Delta_r =\cI-\mathbb E_{s_1}\cdots\mathbb E_{s_m}.
    \end{equation}
    Moreover,
    \begin{equation} \label{properties of delta}
    \mathbb E_{s_r}\Delta_r=0, \qquad \trnorm{\Delta_r(Y)}\le2\trnorm Y.
    \end{equation}
    We define $\mathfrak k_K$ in the following way by linearity. For one input component \(X_i\), if \(i\notin S\), we put \(\cC_K(X_i)\) in output component \(i\). If \(i\in S\), we put \(\Delta_r(\cC_K(X_i))\) in output component \(s_r\), for each \(1\le r\le m\).

    Because \(K\) is supported on \(S\), by partial-trace cyclicity
    \[
    \mathbb E_{s_1}\cdots\mathbb E_{s_m} \cC_K(X_i)=0.
    \]
     Therefore \eqref{eq:local-telescope-sum} shows that the output components sum to \(\cC_K(X_i)\) when \(i\in S\). When \(i\notin S\), this is true by definition. Thus \(\Sigma\mathfrak k_K=\cC_K\Sigma\). If \(i\in S\), every output at \(s_r\) has zero partial trace there, because of \eqref{properties of delta}. If \(i\notin S\) and \(X_i\in\mathfrak P_i\), then \(K\) acts away from \(i\), so \( \tr_i\cC_K(X_i) = 0 \). This proves \eqref{eq:kick-lift-properties}.

    For \(i\notin S\), \(\mathfrak k_K\) and \(\overline{\cC }_K\) agree on input component \(i\), so the corresponding column of \(q_K\) is zero. For \(i\in S\), \eqref{eq:commutator-trace-norm} and \eqref{properties of delta} give \( \trnorm{\Delta_r\cC_K(X_i)} \le4\opnorm K\trnorm{X_i} \), for $1\le r\le m$. For the diagonal block \(s_r=i\), subtracting \(\cC_K(X_i)\) adds at most \(2\|K\|\trnorm{X_i}\).  Thus every block in \(S\times S\) is bounded by \(6\|K\|\), and every other block vanishes. This is precisely \eqref{eq:qK-bound}.
    \end{proof}
    We will also need the defect of the real-time lift from the uniform componentwise Heisenberg generator. Let \(\overline{\mathcal A}_L\) apply \(\mathcal A_L\) on every diagonal block, and define for $1 \leq i<j \leq n$
    \begin{equation} \label{eq:qedge-def}
    q_{i,j}:=\mathfrak h_{i,j}-\overline{\mathcal A}_{H_{i,j}}, \qquad
    q_H:=\sum_{i < j }q_{i,j} =\mathfrak H-\overline{\mathcal A}_H.
    \end{equation}

    \subsubsection{Comparison bounds for evolved coherent updates.}

    \begin{lemma} \label{lem:qedge-source}
        For every edge \(1\leq i < j \leq n\),
        \begin{equation} \label{eq:qedge-block-bound}
        |q_{i,j}|_{\mathrm{bl}}\le 5w_{i,j}E_{\{i,j\}}.
        \end{equation}
        For a Hermitian observable \(L\), we define
        \begin{equation} \label{eq:eta-def}
        \eta_i(L):=\opnorm{L-\mathbb E_iL}.
        \end{equation}
        Then
        \begin{equation} \label{eq:qedge-commutator-bound}
        \left|[q_{i,j},\overline{\cC }_L]\right|_{\mathrm{bl}} \le 20w_{i,j} \left(\eta_i(L)+\eta_j(L)\right)E_{\{i,j\}}.
        \end{equation}
    \end{lemma}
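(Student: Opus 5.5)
The plan is to read off the block structure of $q_{i,j}$ directly from the construction of $\mathfrak h_{i,j}$ in the proof of \cref{lem:reversible-plan-lift}, and then to commute $\overline{\cC}_L$ through each nonzero block.

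\textbf{Step 1: block structure of $q_{i,j}$.} Fix an input column $r$.

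If $r\notin\{i,j\}$, then $\mathfrak h_{i,j}$ places $\mathcal A_{H_{i,j}}(X_r)$ in component $r$. This is exactly what $\overline{\mathcal A}_{H_{i,j}}$ does, so column $r$ of $q_{i,j}$ vanishes.

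If $r=i$ (the case $r=j$ is symmetric), set $Y:=\mathcal A_{H_{i,j}}(X_i)$. Then $\mathfrak h_{i,j}$ outputs $D_i^{\{i,j\}}(Y)$ at $i$ and $D_j^{\{i,j\}}(Y)$ at $j$, whereas $\overline{\mathcal A}_{H_{i,j}}$ outputs $Y$ at $i$. Using $D_i+D_j=Y$ from \cref{lem:two-site-splitting}, the blocks of $q_{i,j}$ in column $i$ are
\[
(q_{i,j})_{i,i}=-D_j^{\{i,j\}}\mathcal A_{H_{i,j}},\qquad (q_{i,j})_{j,i}=D_j^{\{i,j\}}\mathcal A_{H_{i,j}},
\]
and every other block in this column is zero. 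By \eqref{eq:commutator-trace-norm} and \eqref{eq:two-site-splitting-norm}, each of these blocks is bounded by $\tfrac32\cdot 2w_{i,j}=3w_{i,j}\le 5w_{i,j}$. This proves \eqref{eq:qedge-block-bound}, with room to spare.

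\textbf{Step 2: reduction of the commutator to four blocks.} Since $\overline{\cC}_L$ is block diagonal, the $(a,r)$ block of $[q_{i,j},\overline{\cC}_L]$ equals $(q_{i,j})_{a,r}\cC_L-\cC_L(q_{i,j})_{a,r}$. It therefore vanishes outside $\{i,j\}\times\{i,j\}$. Each surviving block is $\pm[D_c\mathcal A_{H_{i,j}},\cC_L]$ for some $c\in\{i,j\}$. Expanding,
\[
[D_c\mathcal A_{H_{i,j}},\cC_L]=D_c[\mathcal A_{H_{i,j}},\cC_L]+[D_c,\cC_L]\mathcal A_{H_{i,j}}.
\]

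\textbf{Step 3: the two commutator estimates.}

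(a) By the Jacobi identity, $[\mathcal A_{H_{i,j}},\cC_L]$ is a commutator superoperator generated by $\pm[H_{i,j},L]$. To bound this, subtract the part $\mathbb E_i\mathbb E_jL$, which acts trivially on $\{i,j\}$ and hence commutes with $H_{i,j}$. Writing
\[
L-\mathbb E_i\mathbb E_jL=(L-\mathbb E_iL)+\mathbb E_i(L-\mathbb E_jL)
\]
and using contractivity of $\mathbb E_i$ gives $\|[H_{i,j},L]\|\le 2w_{i,j}(\eta_i(L)+\eta_j(L))$. Hence $\|[\mathcal A_{H_{i,j}},\cC_L](X)\|_{\mathrm{tr}}\le 4w_{i,j}(\eta_i+\eta_j)\|X\|_{\mathrm{tr}}$. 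After applying $D_c$, which costs a factor $3/2$, this contributes at most $6w_{i,j}(\eta_i+\eta_j)$.

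(b) For $[\mathbb E_i,\cC_L]$, use the bimodule property: $\mathbb E_i$ commutes with $\cC_{\mathbb E_iL}$, because $\mathbb E_iL$ acts trivially on $i$. Therefore
\[
[\mathbb E_i,\cC_L]=\mathbb E_i\cC_{L-\mathbb E_iL}-\cC_{L-\mathbb E_iL}\mathbb E_i,
\]
whose trace-norm bound is at most $4\eta_i(L)$. Since $D_c$ is $\tfrac12$ times a signed sum of $\cI$, $\mathbb E_i$ and $\mathbb E_j$, and $\cI$ commutes with $\cC_L$, we get $\|[D_c,\cC_L]\|\le 2(\eta_i+\eta_j)$. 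Combined with $\|\mathcal A_{H_{i,j}}\|\le 2w_{i,j}$, this contributes at most $4w_{i,j}(\eta_i+\eta_j)$.

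\textbf{Step 4: conclusion and main obstacle.} Adding (a) and (b) gives at most $10w_{i,j}(\eta_i+\eta_j)$ per block, which is below the claimed constant $20$. This yields \eqref{eq:qedge-commutator-bound}.

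The main point requiring care is the cancellation of the parts of $L$ that are trivial on $i$ or $j$. The naive bound would use $\|L\|$, whereas the claim is in terms of $\eta_i(L)$ and $\eta_j(L)$. Both Step 3(a) and Step 3(b) rest on the fact that $\mathbb E_i$, $\mathbb E_j$ and $H_{i,j}$ commute with operators acting trivially on the relevant sites.
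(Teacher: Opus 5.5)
Your proof is correct and gives sharper constants, $3w_{i,j}$ and $10w_{i,j}(\eta_i(L)+\eta_j(L))$, but the commutator estimate is organized differently from the paper.

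\textbf{First bound.} The paper bounds the diagonal block $D_i^{\{i,j\}}\mathcal A_{H_{i,j}}-\mathcal A_{H_{i,j}}$ by the triangle inequality, $3w_{i,j}+2w_{i,j}$. You instead use the identity $D_i^{\{i,j\}}+D_j^{\{i,j\}}=\cI$, which holds for every input, to rewrite this block exactly as $-D_j^{\{i,j\}}\mathcal A_{H_{i,j}}$.

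\textbf{Commutator bound.} The paper makes one global subtraction. Since $L_0:=\mathbb E_i\mathbb E_jL$ acts trivially on both $i$ and $j$, $\overline{\cC}_{L_0}$ commutes with $\overline{\mathcal A}_{H_{i,j}}$, with $\mathbb E_i$ and $\mathbb E_j$, and with the reassignment of components. Hence it commutes with all of $q_{i,j}$. With $R:=L-L_0$, the paper then bounds $[q_{i,j},\overline{\cC}_R]$ crudely by the block-majorant product rule, $|q_{i,j}|_{\mathrm{bl}}(2\|R\|I)+(2\|R\|I)|q_{i,j}|_{\mathrm{bl}}$, reusing the first estimate together with $\|R\|\le\eta_i(L)+\eta_j(L)$.

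You instead expand each nonzero block by the Leibniz rule and make factor-specific subtractions. Inside $[H_{i,j},L]$ you remove $\mathbb E_i\mathbb E_jL$ via the Jacobi identity. Inside $[\mathbb E_i,\cC_L]$ you remove $\mathbb E_iL$ via the bimodule property, and similarly for $j$.

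\textbf{Comparison.} The paper's route is shorter: one commutation observation plus the block bound already proved. Yours needs more bookkeeping but captures finer cancellations, since each conditional-expectation commutator costs only $\eta_i$ or $\eta_j$ rather than $\|R\|$. Your improvement comes from both steps: your exact first step fed into the paper's global argument would already give $12$, and the finer split brings this down to $10$. Downstream only the stated constant $20$ is used, so either version suffices.
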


    \begin{proof}
        For an input component \( r \notin \{i,j\}\), \(\mathfrak h_{i,j}\) equals the uniform block action \(\mathcal A_{H_{i,j}}\), so \(q_{i,j}\) vanishes. For an input \( r \in \{i,j\}\), each endpoint output of \(\mathfrak h_{i,j}\) has block norm at most \(3w_{i,j}\), as proved in \eqref{eq:B-block-bound}. In the diagonal output block we additionally subtract \(\mathcal A_{H_{i,j}}\), whose norm is at most \(2w_{i,j}\). Thus, \eqref{eq:qedge-block-bound} follows.

        Put \(L_0:=\mathbb E_i\mathbb E_jL\). This observable is proportional to the identity on \(i,j\), hence is supported away from \(\{i,j\}\). The uniform map \(\overline{\cC }_{L_0}\) commutes with \(\mathfrak h_{i,j}\): it commutes with \(\overline{\mathcal A}_{H_{i,j}}\), with \(\mathbb E_i,\mathbb E_j\), and with the mere reassignment of an operator from one tuple component to another. Consequently,
        \[
        [q_{i,j},\overline{\cC }_{L_0}]=0.
        \]
         Let \(R:=L-L_0\). Contractivity of \(\mathbb E_i\) in operator norm and commutation of the two conditional expectations give
        \begin{align}   \label{eq:R-eta-bound}
        \opnorm R =\opnorm{L-\mathbb E_i\mathbb E_jL} \le\opnorm{L-\mathbb E_iL} +\opnorm{\mathbb E_i(L-\mathbb E_jL)} \le\eta_i(L)+\eta_j(L).
        \end{align}
        The block-diagonal map \(\overline{\cC }_R\) has block majorant at most \(2\|R\|I\), by \eqref{eq:commutator-trace-norm}. Combining this fact, \eqref{eq:block-majorant-rules}, and \eqref{eq:qedge-block-bound} gives
        \begin{align*}
        |[q_{i,j},\overline{\cC }_L]|_{\mathrm{bl}} =|[q_{i,j},\overline{\cC }_R]|_{\mathrm{bl}} \le |q_{i,j}|_{\mathrm{bl}}\left(2\opnorm R I\right) + \left(2\opnorm R I\right)|q_{i,j}|_{\mathrm{bl}} \le20 w_{i,j}\opnorm R E_{\{i,j\}}.
        \end{align*}
        Substitution of \eqref{eq:R-eta-bound} proves \eqref{eq:qedge-commutator-bound}.
    \end{proof}

    The following finite-system estimate is a weighted Lieb--Robinson bound using local oscillations in the spirit of \cite{RW96}. The use of an integral comparison for local oscillations also parallels \cite[proof of Lemma~4.7]{Bergamaschi26}; the estimate here concerns operator-norm propagation under Hamiltonian evolution.

    \begin{lemma} \label{lem:local-oscillation-LR}
        Let \(K=K^\dagger\) be supported on \(S\), and put \(K_\xi:=\tau_\xi(K)\). Then, entrywise,
        \begin{equation} \label{eq:eta-LR-bound}
        \left(\eta_i(K_\xi)\right)_{i=1}^n \le 2\opnorm K e^{4|\xi|M} e_S.
        \end{equation}
        In particular,
        \begin{equation} \label{eq:eta-LR-infinity}
        \max_i \eta_i(K_\xi) \le2\opnorm K e^{8\Lambda|\xi|}.
        \end{equation}
    \end{lemma}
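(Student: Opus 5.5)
We will derive a closed system of differential inequalities for the vector $\eta(\xi):=(\eta_i(K_\xi))_{i=1}^n$ and integrate it with Grönwall's inequality, as in the proof of \cref{lem:reversible-plan-lift}. By the symmetry $\xi\mapsto-\xi$ (replace $H$ by $-H$, which leaves $M$ unchanged), it suffices to take $\xi=t\ge0$.

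\emph{Initial data.} At $t=0$: if $i\notin S$, then $K$ acts trivially on $i$, so $\mathbb E_iK=K$ and $\eta_i(K)=0$. If $i\in S$, then $\eta_i(K)\le\|K\|+\|\mathbb E_iK\|\le2\|K\|$. Hence $\eta(0)\le_v 2\|K\|e_S$.

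\emph{Interaction picture.} Fix a site $i$ and split $H=H^{\hat i}+H^{(i)}$, where $H^{(i)}:=\sum_{j\ne i}H_{i,j}$ collects the edges incident on $i$. Set
\[
W_t:=e^{-\ii tH^{\hat i}}\,K_t\,e^{\ii tH^{\hat i}}.
\]
Since $H^{\hat i}$ acts trivially on $i$, conjugation by $e^{\ii tH^{\hat i}}$ commutes with $\mathbb E_i$ and preserves operator norms, so $\eta_i(K_t)=\eta_i(W_t)$. Differentiating,
\[
\frac{d}{dt}W_t=\ii\,e^{-\ii tH^{\hat i}}[H^{(i)},K_t]\,e^{\ii tH^{\hat i}},
\]
and therefore
\[
\eta_i(K_t)\le\eta_i(K)+\int_0^t\big\|(\cI-\mathbb E_i)[H^{(i)},K_s]\big\|\,ds\le\eta_i(K)+2\int_0^t\|[H^{(i)},K_s]\|\,ds.
\]

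\emph{Key commutator estimate.} For each edge $\{i,j\}$, write $K_s=\mathbb E_i\mathbb E_jK_s+(K_s-\mathbb E_i\mathbb E_jK_s)$. The first piece acts trivially on $\{i,j\}$ and so commutes with $H_{i,j}$. By the argument of \eqref{eq:R-eta-bound},
\[
\|[H_{i,j},K_s]\|\le2w_{i,j}\big(\eta_i(K_s)+\eta_j(K_s)\big).
\]
Summing over $j$ gives
\[
\|[H^{(i)},K_s]\|\le2\sum_jw_{i,j}\big(\eta_i+\eta_j\big)=2(M\eta(s))_i,
\]
using $M_{i,i}=d_i$. Combining with the previous step yields the vector integral inequality
\[
\eta(t)\le_v\eta(0)+4\int_0^tM\eta(s)\,ds.
\]
The only subtlety I anticipate is that $\mathbb E_i$ fails to commute with $H^{(i)}$ inside the derivative. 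This is exactly why I pass to the interaction picture and bound $(\cI-\mathbb E_i)$ crudely by $2$ in operator norm; the factor $2\cdot2=4$ then matches the exponent in \eqref{eq:eta-LR-bound}.

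\emph{Conclusion.} Since $M$ is entrywise nonnegative and $\eta(s)$ is continuous, the componentwise Grönwall inequality (iterating the integral inequality as a Picard series) gives
\[
\eta(t)\le_v e^{4tM}\eta(0)\le_v2\|K\|e^{4tM}e_S,
\]
which is \eqref{eq:eta-LR-bound}. For \eqref{eq:eta-LR-infinity}, bound each entry by the $\infty\to\infty$ norm:
\[
\|e^{4tM}e_S\|_\infty\le\|e^{4tM}\|_{\infty\to\infty}\le e^{4t\|M\|_{\infty\to\infty}}=e^{8\Lambda t},
\]
by \eqref{eq:M-norm}. This uses only that $e_S\le_v\mathbf 1$ and $e^{4tM}\ge0$ entrywise.
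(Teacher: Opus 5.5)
Your proof is correct. It reaches the same vector integral inequality and the same Gr\"onwall step as the paper, but it tracks a different quantity and removes the non-incident edges by a different mechanism.

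\emph{The paper's route.} The paper does not propagate $\eta_i$ directly. It introduces the local oscillation $\chi_i(O)=\sup_U\|[O,U]\|$ over unitaries $U$ supported on site $i$, and shows $\eta_i\le\chi_i$ via the Pauli twirl. It then uses the Jacobi identity $\frac{d}{dt}[O_t,U]=\ii[H,[O_t,U]]-\ii[O_t,[H,U]]$. The first term is a norm-preserving rotation under the full evolution $\tau$, so Duhamel's formula leaves only $[H,U]$, which involves just the edges incident on $i$. The factor $4$ then comes from the two-site oscillation bound applied to $B=[H_{j,k},U]$ with $\|B\|\le 2w_{j,k}$.

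\emph{Your route.} You propagate $\eta_i$ itself and discard the non-incident part $H^{\hat i}$ through a site-dependent interaction picture. This works because $\mathbb E_i$ commutes with conjugation by $e^{\ii tH^{\hat i}}$. Your factor $4$ arises as $2$ (from $\|(\cI-\mathbb E_i)X\|\le 2\|X\|$) times $2$ (from the commutator with $H_{i,j}$).

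\emph{What each buys.} The paper's argument uses one global evolution for all sites. It pays with the auxiliary supremum over unitaries, the twirl comparison, and a compactness remark to get continuity of $\boldsymbol\chi$. Your argument avoids $\chi$ entirely. Continuity of $s\mapsto\eta_i(K_s)$ is immediate because $\eta_i$ is $2$-Lipschitz in operator norm. You also obtain the marginally sharper intermediate bound $\boldsymbol\eta(t)\le e^{4tM}\boldsymbol\eta(0)$, where $\boldsymbol\eta(0)\le\boldsymbol\chi(0)$ entrywise. Both routes end with the same initial data $2\opnorm{K}e_S$ and the same exponent $4tM$.
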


    \begin{proof}
        For a Hermitian observable \(O\), define
        \begin{equation} \label{eq:chi-def}
        \chi_i(O):= \sup\left\{\opnorm{[O,U]}:U\text{ is unitary and supported on } \{i\}\right\}.
        \end{equation}
        The exact Pauli-twirl identity in \eqref{eq:Pauli-twirl-Eu} gives
        \begin{align} \label{eta leq chi}
        \eta_i(O) =\left\|\frac14\sum_{P\in \cP_i} (O-POP)\right\| = \left\| \frac14 \sum_{P \in \cP_i} [O,P]P\right\| \le \frac14 \sum_{P \in \cP_i}\opnorm{[O,P]} \le\chi_i(O).
        \end{align}
        If \(B\) is supported on \(\{i,j\}\), then \(\mathbb E_i\mathbb E_j(O)\) is proportional to the identity on those two sites and hence commutes with \(B\). Thus
        \begin{align}
        \opnorm{[O,B]} =\opnorm{[O-\mathbb E_i\mathbb E_j(O),B]} &\le 2\opnorm B\opnorm{O-\mathbb E_i\mathbb E_jO} \notag \\
        &\le 2\opnorm B\left(\eta_i(O)+\eta_j(O)\right) \le 2\opnorm B\left(\chi_i(O)+\chi_j(O)\right).  \label{eq:two-site-oscillation}
        \end{align}
        Now we assume that \(\xi=t\ge0\) and we set \(O_t:=\tau_t(K)\). We fix a
        unitary \(U\) supported on \(\{i\}\). Since \(\dot O_t=\ii[H,O_t]\), the Jacobi identity gives
        \begin{align}
        \frac d{dt}[O_t,U] =\ii[[H,O_t],U] =\ii[H,[O_t,U]]-\ii[O_t,[H,U]].
        \label{eq:commutator-evolution}
        \end{align}
        Thus, using Duhamel's formula
        \[
        [O_t,U]  = \tau_t([K,U]) -\ii \int_0^t \tau_{t-s}\left([O_s,[H,U]]\right)\,ds,
        \]
        which implies
        \begin{equation} \label{eq:chi-integral-pre}
        \opnorm{[O_t,U]} \le\opnorm{[K,U]} +\int_0^t\opnorm{[O_s,[H,U]]}\,ds.
        \end{equation}
        Inside the integral, only edges incident on \(i\) occur in
        \[
        [H,U]=\sum_{j<k: i \in \{j,k\}}[H_{j,k},U].
        \]
         Applying \eqref{eq:two-site-oscillation} to \(B=[H_{j,k},U]\), \eqref{eq:chi-integral-pre} becomes
        \[
        \opnorm{[O_t,U]} \le\opnorm{[K,U]} +4\int_0^t\sum_{j < k: i \in \{j,k\}}w_{j,k} \left(\chi_j(O_s)+\chi_k(O_s)\right)\,ds.
        \]
        Taking the supremum over \(U\) and writing
        \[
        \boldsymbol\chi(t)=(\chi_i(O_t))_i
        \]
         yields the entrywise vector inequality
        \begin{equation} \label{eq:chi-vector-integral}
        \boldsymbol\chi(t) \le\boldsymbol\chi(0)+4\int_0^tM\boldsymbol\chi(s)\,ds,
        \end{equation}
        where we used the fact that, for each $i\in[n]$,
        \[
            (M\boldsymbol\chi)_i = d_i\chi_i+\sum_{\ell\ne i}w_{i,\ell}\chi_\ell =\sum_{j<k : i\in\{j,k\}} w_{j,k}(\chi_j+\chi_k).
        \]
        The vector \(\boldsymbol\chi(t)\) is continuous: \(O_t\) depends continuously on \(t\), and the supremum in \eqref{eq:chi-def} is over the compact one-qubit unitary group. Since \(M\) is nonnegative, the componentwise matrix form of Grönwall's inequality, applied directly to \eqref{eq:chi-vector-integral}, gives
        \begin{equation} \label{eq:chi-matrix-gronwall}
        \boldsymbol\chi(t)\le e^{4tM}\boldsymbol\chi(0).
        \end{equation}
        If \(i\notin S\), then \([K,U]=0\); if \(i\in S\), then \(\|[K,U]\|\le 2\|K\|\). Therefore $\boldsymbol\chi(0)\le2\opnorm K e_S$. Together with \eqref{eta leq chi}, this proves \eqref{eq:eta-LR-bound} for positive time. For negative time, replace \(H\) by \(-H\); all weights \(w_{i,j}\), and therefore \(M\), are unchanged. Finally,
        \[
        \opnorm{e^{4|\xi|M}}_{\infty\to\infty} \le e^{4|\xi|\opnorm M_{\infty\to\infty}} = e^{8\Lambda|\xi|},
        \]
        which proves \eqref{eq:eta-LR-infinity}.
    \end{proof}

    For a nonnegative vector \(p\), we define the nonnegative matrix
    \begin{equation} \label{eq:S-map-def}
    \mathcal S(p) :=\sum_{1\le i<j\le n}w_{i,j}(p_i+p_j)E_{\{i,j\}}.
    \end{equation}
    The sum of a column \(i\) is
    \[
    \mathbf{1}^{\mathsf T}\mathcal S(p)e_i =2\sum_{j\ne i}w_{i,j}(p_i+p_j)
    =2(Mp)_i.
    \]
     Consequently,
    \begin{equation} \label{eq:S-map-norm}
    \opnorm{\mathcal S(p)}_{1\to1} \le 2\opnorm M_{\infty\to\infty}\opnorm p_\infty
    =4\Lambda\opnorm p_\infty.
    \end{equation}

    We now combine the reversible lift with the weighted propagation
    estimates to control the coherent updates generated by a family
    of time-evolved observables. The bound depends on the initial
    supports and operator norms through the matrix $A_0$, and holds
    for every real time. We will subsequently apply it to the local
    terms in the imaginary-time expansion of $B^P$.
    \begin{teo}[Update bounds for families of time-evolved observables] \label{lem:two-mark-input}
        Let \((K_\alpha)_\alpha\) be a finite or countable family of Hermitian operators. Suppose \(K_\alpha\) is supported on a nonempty set \(S_\alpha\), and choose \(a_\alpha\ge\opnorm{K_\alpha}\). Define
        \begin{equation} \label{eq:all-time-A0}
        A_0:=\sum_\alpha a_\alpha E_{S_\alpha},
        \end{equation}
        and assume that this nonnegative matrix series converges in \(1\to1\) norm.
        Then, for every $\xi\in\mathbb R$ and every $\alpha$, there exists an entrywise nonnegative matrix $R_{\xi,\alpha}$ such that  \(  I+\delta R_{\xi,\alpha}+o(\delta) \) is an update matrix for \( \cI -\ii\delta[\tau_\xi(K_\alpha),\,\cdot\,] \) as $\delta$ goes to $0$. These matrices can be chosen so that
        \begin{equation} \label{eq:all-time-family-bound}
        \left\|\sum_\alpha R_{\xi,\alpha}\right\|_{1\to1} \le 50e^{12\Lambda|\xi|}\opnorm{A_0}_{1\to1}.
        \end{equation}
    \end{teo}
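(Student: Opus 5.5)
We follow the strategy sketched before \cref{coherent evolution}. Fix $\xi\ge0$ (negative $\xi$ is symmetric) and $\alpha$, and write $K=K_\alpha$, $S=S_\alpha$. Since $\tau_\xi$ is a $*$-automorphism, $\cC_{\tau_\xi(K)}=\tau_\xi\cC_K\tau_{-\xi}$. On plan tuples we therefore consider the lifted map
\[
\mathfrak G_\delta:=\mathfrak U_\xi(I+\delta\mathfrak k_K)\mathfrak U_{-\xi}=I+\delta\,\mathfrak U_\xi\mathfrak k_K\mathfrak U_{-\xi}.
\]
By \cref{lem:reversible-plan-lift} and \cref{lem:local-kick-lift}, $\mathfrak G_\delta$ preserves $\mathfrak P$ and satisfies $\Sigma\mathfrak G_\delta=(\cI+\delta\cC_{\tau_\xi(K)})\Sigma$. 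By \eqref{eq:block-majorant-cost} it suffices to bound $|\mathfrak U_\xi\mathfrak k_K\mathfrak U_{-\xi}|_{\mathrm{bl}}$, except that its diagonal part must be treated more carefully than a naive majorant allows. The reason is that $\mathfrak k_K=\overline{\cC}_K+q_K$, and the uniform part $\overline{\cC}_K$ is a componentwise commutator. Its conjugate $\mathfrak U_\xi\overline{\cC}_K\mathfrak U_{-\xi}$ differs from the uniform generator $\overline{\cC}_{\tau_\xi(K)}$ only through the defect $q_H$ of the lift. A uniform commutator $\overline{\cC}_L$ with Hermitian $L$ does not preserve the constraints $\mathfrak P_r$, however. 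So instead of $\overline{\cC}_{\tau_\xi K}$ we compare to the block-diagonal map $X_r\mapsto\cC_{L-\mathbb E_r L}(X_r)$, or more simply to a componentwise unitary generated by an operator acting trivially on $r$. Such a map is a trace-norm isometry generator, so $I+\delta(\cdot)$ has cost bound $1+O(\delta^2)$ on the diagonal. This is exactly the ``componentwise unitary evolution'' mentioned before the theorem.

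The estimate proceeds in three steps. \textbf{Step 1:} bound the $q_K$ contribution. Using \eqref{eq:qK-bound} and \eqref{eq:lift-block-bound},
\[
|\mathfrak U_\xi q_K\mathfrak U_{-\xi}|_{\mathrm{bl}}\le 6\|K\|\,T_\xi E_S T_\xi.
\]
\textbf{Step 2:} handle $\mathfrak U_\xi\overline{\cC}_K\mathfrak U_{-\xi}$ by Duhamel. Differentiating $s\mapsto\mathfrak U_s\overline{\cC}_{\tau_{\xi-s}K}\mathfrak U_{-s}$, and using that $\overline{\mathcal A}_H$ commutes with the uniform structure, the derivative is $\mathfrak U_s[q_H,\overline{\cC}_{\tau_{\xi-s}K}]\mathfrak U_{-s}$ up to sign. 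Hence
\[
\mathfrak U_\xi\overline{\cC}_K\mathfrak U_{-\xi}=\overline{\cC}_{\tau_\xi K}\pm\int_0^\xi\mathfrak U_s[q_H,\overline{\cC}_{\tau_{\xi-s}K}]\mathfrak U_{-s}\,ds.
\]
By \eqref{eq:qedge-commutator-bound} and \cref{lem:local-oscillation-LR}, the integrand's block majorant is at most $T_s\,20\,\mathcal S(p_{\xi-s})\,T_s$ with $p_u=2\|K\|e^{4uM}e_S$. Then $\overline{\cC}_{\tau_\xi K}$ is split as the constraint-preserving part $X_r\mapsto\cC_{L-\mathbb E_rL}$, with $L=\tau_\xi K$, plus a remainder whose blocks are bounded by $2\eta_r(L)$ on the diagonal. \textbf{Step 3:} define $R_{\xi,\alpha}$ as the sum of the Step~1 majorant, the Step~2 integral majorant, and $\Diag(2\eta(\tau_\xi K))$.

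For the summation over $\alpha$, the key point is to write everything with the column-sum identity $\mathbf 1^{\mathsf T}\mathcal S(p)e_i=2(Mp)_i$ and column sums of $T$, rather than $\|p\|_\infty$. The column sums of $T_\xi E_S T_\xi$ are $\mathbf 1^{\mathsf T}T_\xi E_S T_\xi\le e^{6\Lambda\xi}\,|S|\cdot(\mathbf 1^{\mathsf T}_S T_\xi)$. Summing over $\alpha$ with weights $a_\alpha$ gives $\mathbf 1^{\mathsf T}T_\xi A_0T_\xi$, bounded in $1\to1$ by $e^{12\Lambda\xi}\|A_0\|_{1\to1}$. The $\mathcal S$ terms are linear in $p$, with $\sum_\alpha p^\alpha\le 2e^{4uM}\sum_\alpha a_\alpha e_{S_\alpha}$ and $\sum_\alpha a_\alpha e_{S_\alpha}\le A_0\mathbf 1$ entrywise. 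Thus $\sum_\alpha\mathcal S(p^\alpha)=\mathcal S(\sum p^\alpha)$, with column sums controlled by $2M\cdot 2e^{4uM}$ applied to the diagonal of $A_0$. The $\xi$-integral $\int_0^\xi 4Me^{4uM}du$ telescopes to $e^{4\xi M}-I$, so no factor $\xi$ appears. Collecting constants, $6+20\cdot(\text{telescoped})+4$, should fit under $50e^{12\Lambda|\xi|}$.

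\textbf{Main obstacle.} Matching $\mathcal S$ (a sum over edge matrices $E_{\{i,j\}}$ weighted by $p$) with $\|A_0\|_{1\to1}$ is the delicate part. Row versus column sums must be tracked so that the $\mathbf 1^{\mathsf T}\cdots e_j$ bounds use $\|\cdot\|_{1\to1}$ of $T$ on one side and $\|\cdot\|_{\infty\to\infty}$ on the other. Both norms are valid by the symmetry in \eqref{eq:lift-block-bound}, and this is where the exponent $12\Lambda=6\Lambda+6\Lambda$ arises. The $\eta$-vector term $e^{4uM}e_S$ has growth $8\Lambda$ and must be absorbed below $12\Lambda$ using $4\xi M\le 6\xi M$.
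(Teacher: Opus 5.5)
Your architecture is the paper's: you conjugate the kick lift by the exact reversible lift, write $\mathfrak U_\xi\mathfrak k_K\mathfrak U_{-\xi}=\overline{\cC}_{K_\xi}+\mathfrak e_{K,\xi}+\mathfrak U_\xi q_K\mathfrak U_{-\xi}$ with a Duhamel formula for $\mathfrak e_{K,\xi}$, peel off a componentwise unitary, and sum the same three majorants using $b_0$. The gap is in the one non-routine quantitative step: bounding the summed integral term $20\int_0^\xi T_s\,\mathcal S(P_{\xi-s})\,T_s\,ds$, where $P_u:=\sum_\alpha\boldsymbol\eta(\tau_uK_\alpha)\le 2e^{4uM}b_0$.

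Your telescoping identity $\int_0^\xi 4Me^{4uM}du=e^{4\xi M}-I$ can only be used once the $s$-dependence of the outer factors is removed, e.g.\ via $T_s\le T_\xi$ entrywise. That gives
\[
\Bigl\|20\,T_\xi\,\mathcal S\Bigl(\int_0^\xi P_u\,du\Bigr)T_\xi\Bigr\|_{1\to1}\le 20\,e^{12\Lambda\xi}\bigl\|(e^{4\xi M}-I)b_0\bigr\|_\infty\le 20\,e^{12\Lambda\xi}\bigl(e^{8\Lambda\xi}-1\bigr)\|A_0\|_{1\to1},
\]
which grows like $e^{20\Lambda|\xi|}$, so no bound of the form $C e^{12\Lambda|\xi|}$ follows. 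Downstream this would also change the admissible kernel exponent in \cref{lem:coherent-kernel}. If instead you keep $T_s$ inside, the natural estimate is $\int_0^\xi e^{12\Lambda s}\,2\|MP_{\xi-s}\|_\infty\,ds$, and telescoping does not apply to an integral of sup norms with a varying weight. Your remark $4\xi M\le 6\xi M$ disposes of the diagonal term, but applied to the integral it leaves a factor $\Lambda\xi$. So ``should fit under $50e^{12\Lambda|\xi|}$'' is not established by the mechanism you propose.

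The missing idea is that no telescoping is needed. The factor $\xi$ is removed by the gap between the growth rate $12\Lambda$ of the two lift factors and the rate $8\Lambda$ of the local oscillations. Using \eqref{eq:S-map-norm} and \eqref{eq:eta-LR-bound} directly,
\[
\Bigl\|20\int_0^\xi T_s\,\mathcal S(P_{\xi-s})\,T_s\,ds\Bigr\|_{1\to1}\le 20\int_0^\xi e^{12\Lambda s}\cdot 4\Lambda\cdot 2e^{8\Lambda(\xi-s)}\|b_0\|_\infty\,ds\le 40\,e^{12\Lambda\xi}\|A_0\|_{1\to1}.
\]
Here $\int_0^\xi e^{4\Lambda s}ds\le e^{4\Lambda\xi}/(4\Lambda)$, and this $1/(4\Lambda)$ cancels the $4\Lambda$ coming from $\|\mathcal S(p)\|_{1\to1}\le 4\Lambda\|p\|_\infty$; the case $\Lambda=0$ is trivial. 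Adding $4e^{8\Lambda\xi}\le 4e^{12\Lambda\xi}$ for the diagonal term and $6e^{12\Lambda\xi}$ for the $q_K$ term gives exactly $50$. The budget is exhausted, so there is no room for a looser estimate of the integral.

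A smaller point: your split in Step~2 is reversed. The $\mathfrak P$-preserving, trace-norm-isometric piece is $X_r\mapsto\cC_{\mathbb E_rL}(X_r)$, the operator acting trivially on $r$, as your parenthetical says. The map $\cC_{L-\mathbb E_rL}$ does not preserve $\mathfrak P_r$, since $\tr_r[L-\mathbb E_rL,X_r]=\tr_r[L,X_r]$; it is this piece whose block norm is at most $2\eta_r(L)$. You should also state that the remainder $\mathfrak U_\xi\mathfrak k_K\mathfrak U_{-\xi}-\mathfrak A^{\mathrm{away}}_\xi$ preserves $\mathfrak P$ because it is a difference of two $\mathfrak P$-preserving maps. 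Its individual pieces ($\mathfrak e_{K,\xi}$, $\mathfrak U_\xi q_K\mathfrak U_{-\xi}$, $\overline{\cC}_{K_\xi}-\mathfrak A^{\mathrm{away}}_\xi$) need not preserve $\mathfrak P$, and the Euler decomposition requires that term to be a plan tuple.
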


    \begin{proof}
        We first construct the update matrix for one \(K=K_\alpha\), and then sum the construction. It is enough to treat \(\xi=t\ge0\).  The case \(\xi<0\) follows by replacing \(H\) by \(-H\), which leaves \(w_{i,j},M,\Lambda\) unchanged. Let \(\mathfrak U_t\) be the exact group from \cref{lem:reversible-plan-lift}, and let \(\mathfrak k_K\) be the lift
        from \cref{lem:local-kick-lift}. Define
        \begin{equation} \label{eq:conjugated-kick-lift}
        \mathfrak k_{K,t} :=\mathfrak U_t\mathfrak k_K\mathfrak U_{-t}.
        \end{equation}
        Each factor preserves \(\mathfrak P\), so
        \[
        \mathfrak k_{K,t}\mathfrak P\subseteq\mathfrak P.
        \]
         Moreover, using the two intertwining identities and \(\tau_t\cC_K\tau_{-t}=\cC_{K_t}\) with $K_t:=\tau_t(K)$, we obtain
        \[
        \Sigma\mathfrak k_{K,t} = \cC_{K_t}\Sigma.
        \]
         Thus \(I+\delta\mathfrak k_{K,t}\), applied to an input plan tuple,
        gives an exact output plan tuple for \(\cI+\delta\cC_{K_t}\). We observe that by \eqref{eq:qK-def} and \eqref{eq:qedge-def},
        \[
        \mathfrak H=\overline{\mathcal A}_H+q_H
        \]
         and
        \[
        \mathfrak k_K=\overline{\cC }_K+q_K,
        \]
         and we define
        \begin{equation} \label{eq:eKt-def}
        \mathfrak e_{K,t} :=\mathfrak U_t\overline{\cC }_K\mathfrak U_{-t}
         -\overline{\cC }_{K_t}.
        \end{equation}
        We derive the evolution equation one term at a time. Since \(\mathfrak U_t=e^{t\mathfrak H}\),
        \begin{align*}
        \frac d{dt} \left( \mathfrak U_t\overline{\cC }_K\mathfrak U_{-t} \right) = \mathfrak H\mathfrak U_t \overline{\cC }_K\mathfrak U_{-t} - \mathfrak U_t\overline{\cC }_K \mathfrak U_{-t}\mathfrak H = \left[  \mathfrak H,\, \mathfrak U_t\overline{\cC }_K\mathfrak U_{-t} \right].
        \end{align*}
        Moreover, \(K_t\) satisfies $\dot K_t=\mathcal A_H(K_t)$ and from the linearity of \(K\mapsto\overline{\cC }_K\),
        \[
        \frac d{dt}\overline{\cC }_{K_t} =\overline{\cC }_{\dot K_t}.
        \]

        Thus, for any operator \(X\), the Jacobi identity gives
        \begin{align*}
        \left[ \mathcal A_H,\cC_{K_t} \right](X) = [H,[K_t,X]]-[K_t,[H,X]] =[[H,K_t],X] = \cC_{\ii[H,K_t]}(X) = \cC_{\dot K_t}(X).
        \end{align*}
        Applying this identity on every diagonal block yields

        \[
        \frac d{dt}\overline{\cC }_{K_t} =[\overline{\mathcal A}_H, \overline{\cC }_{K_t}].
        \]

        Now we use \eqref{eq:qedge-def} and \eqref{eq:eKt-def}. Subtracting the two derivative formulas and expanding the commutator gives
        \begin{equation} \label{eq:eKt-ODE}
        \frac d{dt}\mathfrak e_{K,t} = \left[  \mathfrak H,\, \mathfrak U_t\overline{\cC }_K\mathfrak U_{-t} \right] - [\overline{\mathcal A}_H, \overline{\cC }_{K_t}] = [\mathfrak H,\mathfrak e_{K,t}]
         +[q_H,\overline{\cC }_{K_t}], \qquad \mathfrak e_{K,0}=0.
        \end{equation}
        We also prove the integral formula explicitly. Fix the final time \(t\) and, for \(0\le s\le t\), define
        \[
        F_t(s):= \mathfrak U_{t-s}\mathfrak e_{K,s}\mathfrak U_{-(t-s)}.
        \]
         Differentiating all three factors and using \eqref{eq:eKt-ODE} gives
        \begin{align*}
        \frac d{ds}F_t(s) &= -\mathfrak H\mathfrak U_{t-s} \mathfrak e_{K,s}\mathfrak U_{-(t-s)} + \mathfrak U_{t-s} \frac d{ds}\mathfrak e_{K,s} \mathfrak U_{-(t-s)} + \mathfrak U_{t-s}\mathfrak e_{K,s} \mathfrak U_{-(t-s)} \mathfrak H \\
        &= \mathfrak U_{t-s} \left( \frac d{ds}\mathfrak e_{K,s} -[\mathfrak H,\mathfrak e_{K,s}] \right) \mathfrak U_{-(t-s)}\\
        &= \mathfrak U_{t-s} [q_H,\overline{\cC }_{K_s}] \mathfrak U_{-(t-s)}.
        \end{align*}
        At the two endpoints we have \( F_t(t)=\mathfrak e_{K,t} \) and
        \[
        F_t(0)=\mathfrak U_t\mathfrak e_{K,0}\mathfrak U_{-t}=0.
        \]
         Integrating \(F_t'(s)\) from \(0\) to \(t\) proves
        \begin{equation} \label{eq:eKt-Duhamel}
        \mathfrak e_{K,t}
        =\int_0^t
        \mathfrak U_{t-s}
        [q_H,\overline{\cC }_{K_s}]
        \mathfrak U_{-(t-s)}\,ds.
        \end{equation}
        By \cref{lem:qedge-source} and the definition in \eqref{eq:S-map-def},
        \begin{equation} \label{eq:qH-source-matrix}
        |[q_H,\overline{\cC }_{K_s}]|_{\mathrm{bl}} \le20\mathcal S(\boldsymbol\eta(K_s)), \qquad \boldsymbol\eta(K_s):=(\eta_i(K_s))_i.
        \end{equation}
        Using \eqref{eq:block-majorant-rules}, the two bounds
        \[
        |\mathfrak U_{\pm(t-s)}|_{\mathrm{bl}}\le T_{t-s},
        \]
         and \eqref{eq:eKt-Duhamel}, we get the explicit entrywise estimate
        \begin{equation} \label{eq:eKt-block-bound}
        |\mathfrak e_{K,t}|_{\mathrm{bl}} \le 20\int_0^t T_{t-s}\mathcal S(\boldsymbol\eta(K_s))T_{t-s}\,ds.
        \end{equation}
        Finally, substitute

        \[
        \mathfrak k_K=\overline{\cC }_K+q_K
        \]
         into
        \eqref{eq:conjugated-kick-lift} and use
        \eqref{eq:eKt-def}:
        \begin{equation}
        \label{eq:kKt-three-parts}
        \mathfrak k_{K,t}
        =\overline{\cC }_{K_t}
         +\mathfrak e_{K,t}
         +\mathfrak U_tq_K\mathfrak U_{-t}.
        \end{equation}
        This identity is where the exact inverse \(\mathfrak U_{-t}\) is
        essential.

        Now, for each plan mark \(i\), the operator \(\mathbb E_i K_t\) is proportional to the identity on site \(i\). Hence the block-diagonal map
        \[
        (\mathfrak A_t^{\mathrm{away}}\boldsymbol X)_i :=\cC_{\mathbb E_i K_t}(X_i)
        \]
         preserves \(\mathfrak P\), and \(e^{\delta\mathfrak A_t^{\mathrm{away}}} \) preserves the trace norm of every component.  The difference between the uniform first term in \eqref{eq:kKt-three-parts} and this away-site generator is block diagonal, with block \(i\) equal to \(\cC_{K_t-\mathbb E_i K_t}\). By \eqref{eq:commutator-trace-norm}, its block norm is at most \(2\eta_i(K_t)\). Let
        \[
        \mathfrak r_{K,t}:= \mathfrak k_{K,t}-\mathfrak A_t^{\mathrm{away}}.
        \]
         Both terms on the right preserve \(\mathfrak P\), so
        \[
        \mathfrak r_{K,t}\mathfrak P\subseteq\mathfrak P.
        \]
         \eqref{eq:qK-bound}, \eqref{eq:lift-block-bound}, \eqref{eq:eKt-block-bound}, and \eqref{eq:kKt-three-parts} give
        \begin{equation} \label{eq:single-all-time-R-def}
        |\mathfrak r_{K,t}|_{\mathrm{bl}} \leq {}2\Diag(\boldsymbol\eta(K_t)) + 6\opnorm K\,T_tE_ST_t + 20\int_0^t T_{t-s}\mathcal S(\boldsymbol\eta(K_s))T_{t-s}\,ds =: R_{t,K},
        \end{equation}
        where $R_{t,K}$ is a nonnegative matrix. For an input plan tuple \(\boldsymbol X\), we consider the exact decomposition
        \begin{align} \label{eq:Euler-plan-decomposition}
        (I+\delta\mathfrak k_{K,t})\boldsymbol X ={} e^{\delta\mathfrak A_t^{\mathrm{away}}}\boldsymbol X +\delta\mathfrak r_{K,t}\boldsymbol X + \left(I+\delta\mathfrak A_t^{\mathrm{away}} - e^{\delta\mathfrak A_t^{\mathrm{away}}}\right) \boldsymbol X.
        \end{align}
        The first term is a plan tuple with exactly the original cost vector.
        The second is a plan tuple and, by \eqref{eq:block-majorant-cost}, \eqref{eq:single-all-time-R-def}, has cost at most \(\delta R_{t,K}\cost(\boldsymbol X)\). The last term is also a plan tuple. On the block indexed by \(i\), apply the exact integral remainder
        \[
        e^{\delta\mathcal A}-\cI-\delta\mathcal A = \int_0^\delta(\delta-s)\mathcal A^2e^{s\mathcal A}\,ds \] with
        \[
        \mathcal A := \cC_{\mathbb E_i K_t}.
        \]
         Since \(\|\mathcal A\|_{1\to1}\le2\|K\|\), this identity gives
        \[
        \left\|\cI+\delta\cC_{\mathbb E_i K_t} - e^{\delta\cC_{\mathbb E_i K_t}}\right\|_{1\to1} \le \int_0^\delta(\delta-s)\left(2\opnorm K\right)^2 e^{2s\opnorm K}\,ds \le2\delta^2\opnorm K^2e^{2\delta\opnorm K}.
        \]
        Thus, taking componentwise costs in \eqref{eq:Euler-plan-decomposition} gives the update matrix
        \[
        I+\delta R_{t,K}
        +2\delta^2\opnorm{K}^2 e^{2\delta\opnorm{K}}I
        \]
        for the coherent Euler update. For negative real time, the same construction with \(-H\) gives \eqref{eq:single-all-time-R-def} with the evolution \(K_s=\tau_{-s}(K)\).

        We now explicitly choose the matrices in the statement. For every real \(\xi\), put \(t=|\xi|\) and define
        \[
        \begin{aligned}
        R_{\xi,\alpha}:={}&
        2\Diag\!\left(\boldsymbol\eta\!\left(\tau_\xi(K_\alpha)\right)\right)
        +6\opnorm{K_\alpha}\,T_tE_{S_\alpha}T_t\\
        &+20\int_0^t T_{t-s}
        \mathcal S\!\left(
            \boldsymbol\eta\!\left(
                \tau_{\operatorname{sgn}(\xi)s}(K_\alpha)
            \right)
        \right)T_{t-s}\,ds.
        \end{aligned}
        \]
        The integral vanishes at zero real time. Every term is entrywise nonnegative, and the preceding argument shows that
        \[
        I+\delta R_{\xi,\alpha}
        +2\delta^2\opnorm{K_\alpha}^2
            e^{2\delta\opnorm{K_\alpha}}I
        \]
        is an update matrix for
        \[
        \cI-\ii\delta\left[\tau_\xi(K_\alpha),\,\cdot\,\right].
        \]
        This choice also gives the zero matrix when \(K_\alpha=0\). All subsequent estimates depend only on the absolute real time \(t\).

        Finally, we define
        \[
        b_0:=\sum_\alpha a_\alpha e_{S_\alpha}.
        \]
         Since the diagonal of \(E_S\) equals \( e_S\), \(b_0\) is
        the diagonal of \(A_0\). Thus,
        \begin{equation} \label{eq:b0-A0-bound}
        \opnorm{b_0}_\infty\le\opnorm{A_0}_{1\to1}.
        \end{equation}
        The local-oscillation estimate in \eqref{eq:eta-LR-bound} gives, entrywise,
        \begin{equation} \label{eq:summed-eta-bound}
        \sum_\alpha\boldsymbol\eta((K_\alpha)_{\pm s}) \le2e^{4sM}b_0.
        \end{equation}
        We now sum the three lines of \eqref{eq:single-all-time-R-def} separately. First,
        \begin{align} \label{eq:all-time-diag-sum}
        \left\|2\Diag\!\left( \sum_\alpha\boldsymbol\eta((K_\alpha)_{\pm t})\right) \right\|_{1\to1} \le 4\opnorm{e^{4tM}b_0}_\infty \le4e^{8\Lambda t} \opnorm{A_0}_{1\to1}.
        \end{align}
        Second, using \(a_\alpha\ge\|K_\alpha\|\),
        \begin{align} \label{eq:all-time-qK-sum}
        \left\|6T_t \left(\sum_\alpha\opnorm{K_\alpha}E_{S_\alpha}\right)
        T_t\right\|_{1\to1} \le 6\opnorm{T_t}_{1\to1}^2 \opnorm{A_0}_{1\to1} \le 6e^{12\Lambda t} \opnorm{A_0}_{1\to1}.
        \end{align}
        For the integral term, linearity and monotonicity of \(\mathcal S\),
        followed by \eqref{eq:summed-eta-bound}, give
        \[
        20\mathcal S\!\left( \sum_\alpha\boldsymbol\eta((K_\alpha)_{\pm s})\right) \le40\mathcal S(e^{4sM}b_0).
        \]
        \eqref{eq:S-map-norm}, \eqref{eq:b0-A0-bound} and the inequality
        \[
        \|e^{4sM}\|_{\infty\to\infty} \le e^{8\Lambda s}
        \]
         therefore imply
        \begin{align} \label{eq:all-time-e-integral-pre}
        \left\| 20\int_0^tT_{t-s} \mathcal S\!\left( \sum_\alpha\boldsymbol\eta((K_\alpha)_{\pm s})\right) T_{t-s}\,ds \right\|_{1\to1} \le 160\Lambda\opnorm{A_0}_{1\to1} \int_0^t e^{12\Lambda(t-s)}e^{8\Lambda s}\,ds.
        \end{align}
        If \(\Lambda>0\), the scalar integral is
        \[
        \int_0^t e^{12\Lambda(t-s)}e^{8\Lambda s}\,ds  \le\frac{e^{12\Lambda t}}{4\Lambda}.
        \]
         Substitution into \eqref{eq:all-time-e-integral-pre} gives
        \begin{equation} \label{eq:all-time-e-sum}
        \text{integral contribution} \le 40e^{12\Lambda t}\opnorm{A_0}_{1\to1}.
        \end{equation}
        If \(\Lambda=0\), every \(w_{i,j}=0\), so \(\mathcal S=0\) and the same
        inequality is immediate. Since
        \[
        e^{8\Lambda t}\le e^{12\Lambda t},
        \]
         adding \eqref{eq:all-time-diag-sum}, \eqref{eq:all-time-qK-sum}, and
        \eqref{eq:all-time-e-sum} gives
        \[
        \left\|\sum_\alpha R_{\xi,\alpha}\right\|_{1\to1} \le (4+6+40)e^{12\Lambda|\xi|} \opnorm{A_0}_{1\to1},
        \]
        which is \eqref{eq:all-time-family-bound}.

        For a countable family, apply the preceding estimates to finite partial sums of $\sum_\alpha R_{\xi,\alpha}$. These partial sums increase entrywise and their $1\to1$ norms are uniformly bounded by the right-hand side of \eqref{eq:all-time-family-bound}. Since $n$ is finite, the series converges in $1\to1$ norm, and the same bound holds for its sum. This completes the proof.
    \end{proof}

    \begin{remark}
        The comparison matrix in \eqref{eq:single-all-time-R-def}
        remains well defined when $\Lambda=0$. In this case,
        $M=0$, $T_t=I$, $\mathcal S=0$, and $\mathfrak U_t=I$,
        so the construction reduces to the comparison matrix
        for the original local kick.
    \end{remark}

    \subsubsection{From imaginary-time clusters to one nonnegative matrix}

    \begin{lemma} \label{lem:A0}
        Let
        \[
        \alpha=(i,P,k, (\bar i, \bar j)),
        \]
         where \(i\in[n]\), \(P\in\cP_i\), $k\geq 2$ and \((\bar i, \bar j)\) (with $\bar i <_v \bar j$) is a cluster of length \(k\) from \(i\). Define
        \begin{equation} \label{eq:alpha-data}
        a_\alpha:= \frac{\beta^k}{k!}\prod_{\ell=1}^kw_{i_\ell,j_\ell}, \qquad
        S_\alpha:=S_{\bar i, \bar j}, \qquad
        A_0:=\sum_\alpha a_\alpha E_{S_\alpha}.
        \end{equation}
        If \(\beta\Lambda<1\), then the series for \(A_0\) converges in \(1\to1\) norm and
        \begin{equation} \label{eq:A0-bound}
        \opnorm{A_0}_{1\to1} \leq 3\sum_{k\ge2}(k+1)^2 (\beta\Lambda)^k.
        \end{equation}
    \end{lemma}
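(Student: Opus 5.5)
Since every $E_{S_\alpha}$ is entrywise nonnegative, $A_0$ is a nonnegative matrix series. Its $1\to1$ norm is therefore its largest column sum, and convergence in $1\to1$ norm follows once every column sum is shown to be finite. So I would fix a column $j$ and compute
\[
\mathbf 1^{\mathsf T}A_0e_j=\sum_\alpha a_\alpha\,\mathbf 1^{\mathsf T}E_{S_\alpha}e_j=\sum_\alpha a_\alpha\,|S_\alpha|\,\ind_{\{j\in S_\alpha\}},
\]
using that the $j$-th column of $E_S$ is $e_S$ when $j\in S$ and zero otherwise.

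Next I would expand the index $\alpha=(i,P,k,(\bar i,\bar j))$ and sum over each component in turn. The sum over $P\in\cP_i$ gives a factor $3$, because $a_\alpha$ and $S_\alpha$ do not depend on $P$. For fixed $k\ge2$, what remains is exactly the marked-cluster quantity:
\[
\frac{\beta^k}{k!}\sum_{i=1}^n\sum_{\substack{\bar i<_v\bar j\in[n]^k\\(\bar i,\bar j)\ \text{clust. from } i,\ j\in S_{\bar i,\bar j}}}|S_{\bar i,\bar j}|\prod_{\ell=1}^k w_{i_\ell,j_\ell}.
\]
By \eqref{eq:marked-cluster} in \cref{lem:weighted-clusters}, this is at most $\beta^k(k+1)^2\Lambda^k=(k+1)^2(\beta\Lambda)^k$. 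Summing over $k\ge2$ then gives column sums bounded uniformly in $j$ by $3\sum_{k\ge2}(k+1)^2(\beta\Lambda)^k$, which is \eqref{eq:A0-bound}. This series is finite precisely because $\beta\Lambda<1$.

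For convergence, I would note that the partial sums over finite subfamilies of $\alpha$ increase entrywise and are uniformly bounded in $1\to1$ norm by the same quantity. Since $n$ is finite, they converge entrywise, hence in norm, exactly as at the end of the proof of \cref{lem:two-mark-input}. I do not anticipate any real obstacle here. The only care needed is to interchange the nonnegative sums via Tonelli, and to check that the ordering of sums in \eqref{eq:marked-cluster} (supremum over $j$ outside, sum over the root $i$ inside) matches the column-sum structure. It does, since the column index $j$ plays the role of the marked site.
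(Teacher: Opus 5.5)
Your proposal is correct and follows the paper's proof: you write the $j$-th column sum as $\sum_\alpha a_\alpha|S_\alpha|\ind_{\{j\in S_\alpha\}}$, extract the factor $3$ from the sum over $P\in\cP_i$, apply the marked-cluster estimate \eqref{eq:marked-cluster}, and use that the $1\to1$ norm of a nonnegative matrix is its largest column sum. Your monotone partial-sum argument for convergence only spells out more explicitly the paper's remark that every column series converges and $n$ is finite.
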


    \begin{proof}
        All entries of \(A_0\) are nonnegative. The \(j\)-th column is
        \[
        A_0e_j = \sum_\alpha a_\alpha  e_{S_\alpha} \ind_{\{j\in S_\alpha\}}.
        \]
         Taking the sum of the entries of this column and applying \eqref{eq:marked-cluster} from \cref{lem:weighted-clusters} gives
        \begin{equation*} 
        \mathbf{1}^{\mathsf T}A_0e_j = \sum_\alpha a_\alpha
        |S_\alpha|\ind_{\{j\in S_\alpha\}} = \sum_{k\geq 2}  \sum_{i=1}^n  \sum_{P\in \cP_i} \sum_{\substack{ \bar i, \bar j \in [n]^k:  j \in S_{\bar i, \bar j}\\ (\bar i, \bar j) {\rm clust. \, from}\ i} } \frac{\beta^k}{k!} |S_{\bar i, \bar j}| \prod_{\ell=1}^kw_{i_\ell,j_\ell} \leq 3 \sum_{k\geq 2} (k+1)^2 (\beta\Lambda)^k.
        \end{equation*}
        Since $\beta\Lambda<1$, every column series converges. For a nonnegative matrix, the induced \(1\to1\) norm is the largest column sum. Hence, since the previous inequality holds for each $j \in [n]$ and $n$ is finite, $A_0$ converges in $1\to1$ norm with
        \[
        \opnorm{A_0}_{1\to1} =\max_{j\in [n]}\mathbf{1}^{\mathsf T}A_0e_j \leq 3 \sum_{k\geq 2} (k+1)^2 (\beta\Lambda)^k,
        \]
        which proves \eqref{eq:A0-bound}.
    \end{proof}

    \medskip
    \noindent
    \subsubsection{Adding first-order updates and integrating real time}

    \begin{lemma} \label{lem:first-order-sum}
        Let \(\mathcal K_q\) be finitely many trace-annihilating, Hermiticity-preserving generators and $R_q$ finitely many nonnegative matrices. If \(I+\delta R_q+o(\delta)\) is an update matrix for \(\cI+\delta\mathcal K_q\), then
        \[
        I+\delta\sum_qR_q+o(\delta)
        \]
        is an update matrix for \(\cI+\delta\sum_q\mathcal K_q\). The same conclusion holds for a countable family, provided that both series $\sum_q\mathcal K_q$ and $\sum_qR_q$ converge in their respective $1\to1$ norms.

        Furthermore, let $(\mathcal K(\omega),R(\omega))$ be a Bochner-measurable family on a finite measure space $(\Omega,\mu)$. Assume that, for almost every $\omega$, the map $\mathcal K(\omega)$ is Hermiticity-preserving and trace-annihilating, $R(\omega)$ is entrywise  nonnegative, and $I+\delta R(\omega)+o(\delta)$ is an update matrix for $\cI +\delta\mathcal K(\omega)$. If
        \[
        \int_\Omega \left( \opnorm{\mathcal K(\omega)}_{1\to1} +\opnorm{R(\omega)}_{1\to1} \right)d\mu(\omega)<\infty,
        \]
        then
        \[
            I+\delta\int_\Omega R(\omega)\,d\mu(\omega)+o(\delta)
        \]
        is an update matrix for
        \[
            \cI+\delta\int_\Omega\mathcal K(\omega)\,d\mu(\omega).
        \]
    \end{lemma}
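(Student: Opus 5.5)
The plan is to unpack the definition of update matrix and use linearity of transport plans (\cref{prop on lin}) together with the fact that $o(\delta)$ remainders are controlled in $1\to1$ norm. Fix a traceless Hermitian $X$ with a transport plan $\boldsymbol X=(X_j)_j$ and cost vector $x$. For each $q$, the hypothesis gives a plan for $X+\delta\mathcal K_q(X)$ with cost at most $(I+\delta R_q+E_{q,\delta})x$, where $\|E_{q,\delta}\|_{1\to1}=o(\delta)$. I will first convert this into a plan for the increment alone: subtract the input plan $\boldsymbol X$ itself (which is a plan for $X$) to get a plan $\boldsymbol Y_q$ for $\delta\mathcal K_q(X)$. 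Its cost, however, is only bounded by $(2I+\delta R_q+E_{q,\delta})x$, which is useless. So I will not subtract. Instead, I write
\[
X+\delta\sum_q\mathcal K_q(X)=\sum_q \lambda_q\Bigl(X+\tfrac{\delta}{\lambda_q}\mathcal K_q(X)\Bigr)
\]
with convex weights $\lambda_q>0$, $\sum_q\lambda_q=1$, exactly as in the averaging trick used for $Z_0$ in \cref{dissipative evolution}. Applying the hypothesis with step $\delta/\lambda_q$ and combining the plans by \cref{prop on lin} gives the cost bound
\[
\sum_q\lambda_q\Bigl(I+\tfrac{\delta}{\lambda_q}R_q+E_{q,\delta/\lambda_q}\Bigr)x=\Bigl(I+\delta\sum_qR_q+\sum_q\lambda_qE_{q,\delta/\lambda_q}\Bigr)x .
\]
For finitely many $q$ with fixed weights, the last matrix is $o(\delta)$, which proves the finite case.

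For a countable family, I choose summable weights, say $\lambda_q\propto 2^{-q}$. The remainder $\sum_q\lambda_qE_{q,\delta/\lambda_q}$ is no longer automatically $o(\delta)$, because the remainders need not be uniform in $q$. This is the main obstacle. The approach I would try first is to truncate. Split off a finite head $q\le N$, and treat the tail $\mathcal T_N:=\sum_{q>N}\mathcal K_q$ as a single trace-annihilating generator. The step $X\mapsto \delta\mathcal T_N(X)$ has a crude plan: $\tr\mathcal T_N(X)=0$, so \cref{Lemma W1 rel tr} gives cost at most $\tfrac n2\delta\|\mathcal T_N\|_{1\to1}\|X\|_{\mathrm{tr}}\le n\delta\|\mathcal T_N\|_{1\to1}\|x\|_1$. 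This is $\delta$ times a matrix whose $1\to1$ norm tends to $0$ as $N\to\infty$. Likewise $\sum_{q>N}R_q\to0$. Given $\epsilon>0$, pick $N$ so that both tails are below $\epsilon$. Then apply the finite case to the head. The total update matrix is $I+\delta\sum_qR_q+\delta F_N+o_N(\delta)$ with $\|F_N\|_{1\to1}\le C\epsilon$. Since $\epsilon$ is arbitrary, this shows that for each $\epsilon$ there are update matrices $I+\delta\sum_q R_q+E_\delta$ with $\limsup\|E_\delta\|_{1\to1}/\delta\le C\epsilon$. A diagonal choice $N=N(\delta)\to\infty$ slowly then yields a single $o(\delta)$ remainder. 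The caveat is that the tail increment $\delta\mathcal T_N(X)$ has to be attached to the convex splitting above; I will include it as one extra averaged summand.

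The integral statement follows the same scheme. Approximate $\int\mathcal K\,d\mu$ and $\int R\,d\mu$ by Riemann/simple-function sums over a finite partition, using Bochner integrability. Apply the finite case on the partition cells, where the normalized measures $\mu(\Omega_q)/\mu(\Omega)$ serve as convex weights. Absorb the approximation errors in both $\mathcal K$ and $R$ through the same crude $n\|\cdot\|_{1\to1}$ transport bound and the nonnegativity of the $R$ error. The integrability hypothesis is exactly what makes these errors arbitrarily small in $1\to1$ norm. The diagonal argument in $\delta$ then finishes the proof. The delicate point throughout is uniformity: the remainder must be a single matrix independent of the input plan, which the construction provides because every bound above is linear in $x$ with $\delta$-dependent but input-independent matrices.
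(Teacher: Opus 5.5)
Your proposal is correct and follows essentially the same route as the paper: convex averaging with rescaled steps for finitely many generators, then truncation with the crude $\frac n2\|\cdot\|_{\mathrm{tr}}$ Wasserstein bound for the generator tail plus a diagonal choice of the truncation index, and finally simple-function approximation of the Bochner integral handled in the same way. Two details in the integral case need tightening. First, the simple approximants must take their values in the range of $(\mathcal K,R)$ outside the exceptional null set, so that the hypothesis applies on each cell. Second, the discrepancy between $\int_\Omega R\,d\mu$ and its simple approximation need not be entrywise nonnegative, so you should absorb its entrywise absolute value, as the paper does.
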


    \begin{proof}
        \proofstep{Finite sums.}
        Suppose first that there are \(m\) generators.  The exact identity
        \begin{equation} \label{eq:first-order-average-identity}
        \cI+\delta\sum_{q=1}^m\mathcal K_q = \frac1m\sum_{q=1}^m \left(\cI+m\delta\mathcal K_q\right)
        \end{equation}
        expresses the desired map as an average. For fixed \(m\), replacing \(\delta\) by \(m\delta\) in the assumed comparison gives the update matrix \( I+m\delta R_q+o(\delta) \) for the \(q\)-th map on the right of \eqref{eq:first-order-average-identity}. Their average is
        \[
        \frac1m\sum_{q=1}^m \left(I+m\delta R_q+o(\delta)\right) = I+\delta\sum_{q=1}^mR_q+o(\delta).
        \]
        By \cref{prop on lin}, it is an update matrix for \eqref{eq:first-order-average-identity}.

        \proofstep{Countable sums.}
        Let
        \[
        \mathcal K^{(m)}:=\sum_{q\le m}\mathcal K_q, \qquad R^{(m)}:=\sum_{q\le m}R_q,
        \]
        and let \(\mathcal K\) and \(R\) be their norm limits. The finite case gives, for each fixed \(m\), an update matrix
        \( I+\delta R^{(m)}+o_m(\delta) \) for \(\cI+\delta\mathcal K^{(m)}\). Set
        \[
        \varepsilon_m:=\|\mathcal K-\mathcal K^{(m)}\|_{1\to1}.
        \]
         If an input plan component \(X_j\) has cost
        \(x_j=\trnorm{X_j}/2\), then
        \[
        \trnorm{ \delta(\mathcal K-\mathcal K^{(m)})(X_j)}  \le 2\delta\varepsilon_mx_j.
        \]
        The second inequality in \cref{Lemma W1 rel tr} gives a transport plan for this tail with total cost at most \(n\delta\varepsilon_mx_j\). Hence the same tail is bounded by the nonnegative update matrix
        \[
        \delta U_m, \qquad U_m := n\varepsilon_m \mathbf{1}\mathbf{1}^{\mathsf T}, \qquad \|U_m\|_{1\to1}= n^2\varepsilon_m\xrightarrow{m\to\infty}0.
        \]
        Thus the finite comparison and the tail correction give an update matrix for the full generator. Since the matrices \(R_q\) are nonnegative, we may enlarge the partial-sum matrix to \(R\). We may also replace each remainder by its entrywise absolute value. Write
        \[
        E_m(\delta):=\left|o_m(\delta)\right|,
        \qquad
        \lim_{\delta\downarrow0}
            \frac{\opnorm{E_m(\delta)}_{1\to1}}{\delta}=0.
        \]
        For every fixed \(m\) and all sufficiently small positive \(\delta\), an update matrix for the full Euler map is therefore
        \[
        I+\delta R+E_m(\delta)+\delta U_m.
        \]

        The diagonal choice can be made as follows. Initialize
        \[
        m_0=h_0=0,
        \]
        and, for each positive integer \(k\), choose
        \[
        m_k:=\min\left\{
            m>m_{k-1}:\opnorm{U_m}_{1\to1}\le2^{-k}
        \right\}.
        \]
        This minimum exists because the tail matrices tend to zero in norm. Let \(h_k\) be the least integer strictly larger than \(h_{k-1}\) for which the finite comparison at index \(m_k\) is valid and
        \[
        \opnorm{E_{m_k}(\delta)}_{1\to1}
            \le2^{-k}\delta
        \qquad\text{for every }0<\delta\le2^{-h_k}.
        \]
        Such an integer exists by the fixed-index remainder estimate. Set
        \[
        \delta_k:=2^{-h_k},
        \qquad
        k(\delta):=\max\left\{k\ge1:\delta\le\delta_k\right\},
        \qquad
        m(\delta):=m_{k(\delta)}
        \]
        for positive \(\delta\) at most \(\delta_1\). The maximum is finite because the thresholds decrease strictly to zero, and its value tends to infinity as \(\delta\) tends to zero. The nonnegative remainder
        \[
        F_\delta:=E_{m(\delta)}(\delta)+\delta U_{m(\delta)}
        \]
        satisfies
        \[
        \frac{\opnorm{F_\delta}_{1\to1}}{\delta}
        \le2^{1-k(\delta)}
        \xrightarrow[\delta\downarrow0]{}0.
        \]
        Consequently, the update matrices just constructed have the required form with a remainder independent of the input operator and its transport plan. This proves the countable statement.

        \proofstep{Bochner integrals.}
        If $\mu(\Omega)=0$, both integrals vanish and the claim is immediate. Assume henceforth that $\mu(\Omega)>0$.
        Since $\mu$ is finite and the pair $(\mathcal K(\omega),R(\omega))$ is Bochner integrable, we can approximate it in $L^1(\mu)$ by simple (i.e., finite-valued measurable) pairs
        \[
        \left(\mathcal K_m(\omega),R_m(\omega)\right)_{m \in \mathbb{N}}
        \]
         whose values are chosen from the range of $(\mathcal K,R)$ outside the exceptional null set. This ensures that each sampled $R_m$ is nonnegative and that $I +\delta R_m + o_m(\delta)$ is an update matrix for $\cI +\delta\mathcal K_m$.

        The integral of each simple pair is a finite positive linear combination. Positive coefficients are absorbed by rescaling \(\delta\), and the finite-sum argument gives update matrix \(I +\delta\int_\Omega R_m\,d\mu + o(\delta)\) for \(\cI + \delta\int_\Omega\mathcal K_m\,d\mu\). The generator error is controlled exactly as in the countable case, with
        \[
        \varepsilon_m := \left\| \int_\Omega(\mathcal K-\mathcal K_m)\,d\mu \right\|_{1\to1} \le \int_\Omega\|\mathcal K-\mathcal K_m\|_{1\to1}\,d\mu \xrightarrow{m\to\infty}0.
        \]
        Moreover,
        \[
            \left\|\int_\Omega(R-R_m)\,d\mu\right\|_{1\to1}\longrightarrow0.
        \]
        Since enlarging an update matrix entrywise preserves its validity, we obtain the update matrix
        \[
        I+\delta\int_\Omega R\,d\mu + o_m(\delta)+\delta U_m + \delta\left|\int_\Omega(R_m-R)\,d\mu\right|,
        \]
        where $\delta U_m$ controls the generator approximation error and $\|U_m\|_{1\to1}\to0$, as in the countable case.
        For numerical matrices, entrywise absolute values preserve the induced column-sum norm. Define
        \[
        W_m:=U_m+
            \left|\int_\Omega\left(R_m-R\right)\,d\mu\right|.
        \]
        These matrices are nonnegative and tend to zero in norm. Let \(E_m(\delta)\) now denote the entrywise absolute value of the remainder for the integrated simple pair. Explicitly, repeat the preceding construction with
        \[
        m_0=h_0=0,
        \qquad
        m_k:=\min\left\{
            m>m_{k-1}:\opnorm{W_m}_{1\to1}\le2^{-k}
        \right\}.
        \]
        At each step, let \(h_k\) be the least integer larger than \(h_{k-1}\) for which this integrated finite comparison is valid and
        \[
        \opnorm{E_{m_k}(\delta)}_{1\to1}
            \le2^{-k}\delta
        \qquad\text{for every }0<\delta\le2^{-h_k}.
        \]
        Define the thresholds and the indices \(k(\delta)\) and \(m(\delta)\) by the same displayed formulas as in the countable case. Then
        \[
        I+\delta\int_\Omega R\,d\mu
          +E_{m(\delta)}(\delta)+\delta W_{m(\delta)}
        \]
        is an update matrix for the desired integrated Euler map, and
        \[
        \frac{\opnorm{E_{m(\delta)}(\delta)
                      +\delta W_{m(\delta)}}_{1\to1}}{\delta}
        \le2^{1-k(\delta)}
        \xrightarrow[\delta\downarrow0]{}0.
        \]
        This supplies the claimed nonnegative, input-independent remainder and proves the Bochner-integral statement.

    \end{proof}

    \begin{lemma}[The Fourier-kernel integral] \label{lem:coherent-kernel}
        If \(a\ge0\), \(\beta>0\), and \(a\beta<2\pi\), then
        \begin{equation} \label{eq:kernel-integral}
        \int_{\mathbb R} \frac{e^{a|\xi|}} {\beta\cosh(2\pi\xi/\beta)}\,d\xi \le \frac4{2\pi-a\beta}.
        \end{equation}
    \end{lemma}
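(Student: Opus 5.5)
We need to bound $\int_{\mathbb R} e^{a|\xi|}/(\beta\cosh(2\pi\xi/\beta))\,d\xi$. The plan is to reduce to a one-sided elementary exponential integral.

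First, by evenness of the integrand, the integral equals $2\int_0^\infty e^{a\xi}/(\beta\cosh(2\pi\xi/\beta))\,d\xi$. For $\xi\ge0$ we use the lower bound $\cosh(x)\ge \tfrac12 e^{x}$, which gives
\[
\frac{1}{\beta\cosh(2\pi\xi/\beta)}\le \frac{2}{\beta}e^{-2\pi\xi/\beta}.
\]
The integral is therefore at most
\[
\frac{4}{\beta}\int_0^\infty e^{-(2\pi/\beta-a)\xi}\,d\xi.
\]

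Next, the hypothesis $a\beta<2\pi$ means exactly that $2\pi/\beta-a>0$, so this last integral converges and equals $1/(2\pi/\beta-a)=\beta/(2\pi-a\beta)$. Multiplying by $4/\beta$ gives $4/(2\pi-a\beta)$, which is \eqref{eq:kernel-integral}.

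The only point needing care is integrability, and it is settled by the same comparison. The integrand is continuous and positive, and on $[0,\infty)$ it is dominated by the integrable exponential above, so the integral over $\mathbb R$ is finite and the manipulation is valid. I do not expect any real obstacle. A sharper constant, for instance via the exact value $\int_{\mathbb R} e^{a\xi}/(\beta\cosh(2\pi\xi/\beta))\,d\xi=\tfrac12\sec(a\beta/4)$, is not needed for the stated bound.
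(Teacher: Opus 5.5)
Your proof is correct and matches the paper's argument: evenness of the integrand, the bound $\cosh y\ge e^{y}/2$ for $y\ge0$, and explicit evaluation of the resulting exponential integral, which converges because $2\pi/\beta-a>0$. Your closing aside on the exact value concerns the integrand with $e^{a\xi}$ rather than $e^{a|\xi|}$, but it plays no role in the bound.
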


    \begin{proof}
        By definition,
        \[
        \cosh (y) \ge\frac{e^y}{2},
        \]
         for every  \(y\ge0\). Thus, since the integrand is even,
        \begin{align*}
        \int_{\mathbb R} \frac{e^{a|\xi|}} {\beta\cosh(2\pi\xi/\beta)}\,d\xi = \frac2\beta\int_0^\infty \frac{e^{a\xi}}{\cosh(2\pi\xi/\beta)}\,d\xi \le  \frac4\beta\int_0^\infty e^{-\left(2 \frac{\pi}{\beta}-a\right)\xi}\,d\xi = \frac4{2\pi-a\beta}.
        \end{align*}
        The integral converges because the hypothesis implies \(2\pi/\beta-a>0\).
    \end{proof}

    Integration of real-time estimates against a hyperbolic-secant kernel has a parallel in \cite[Lemma~5.4]{Bergamaschi26}. Here the Fourier representation comes from \cite[Lemma~4.4]{BLMT25}, and we integrate transport-plan update matrices.

    \begin{proof}[Proof of \cref{coherent evolution}]
    For each $i\in[n]$, let
        \[
            \mathscr A_i :=
            \left\{
                \alpha=(i,P,k,(\bar i,\bar j)):
                \begin{array}{l}
                    P\in\cP_i,\quad k\ge2,\\
                    \bar i,\bar j\in[n]^k,\quad \bar i<_v\bar j,\\
                    (\bar i,\bar j)\text{ is a cluster from }i
                \end{array}
            \right\}.
        \]
        For
        \[
        \alpha=(i,P, k,(\bar i,\bar j))\in\mathscr A_i,
        \]
         we use the abbreviation \( B_\alpha:=B_{\bar i,\bar j}^P \) and, for $\xi\in\mathbb R$, we introduce the elementary coherent generator
        \[
        \cL_{\xi,\alpha} :=
            -\ii[\tau_\xi(B_\alpha),\,\cdot\,].
        \]
         Denote by
            \[
            \mathcal K_i^{\mathrm c} = \sum_{P\in\cP_i}-\ii[G^P,\,\cdot\,]
            \]
             the coherent generator, so that
            \[
            \Phi_{i,\mathrm c} = \cI+\delta\mathcal K_i^{\mathrm c}.
            \]
             Moreover \cref{lem:Fourier-transform-coherent-term} gives
        \begin{equation}  \label{eq:coherent-fourier-generator}
            -\ii[G^P,\,\cdot\,] = \int_{\mathbb R} f_\beta(\xi) \left(-\ii[\tau_\xi(B^P),\,\cdot\,]\right) \,d\xi, \qquad  f_\beta(\omega) = \frac{1}{\beta\cosh\left(\frac{2\pi}{\beta}\omega\right) }.
        \end{equation}
        On the other hand, \cref{lem:B-cluster-cancellation} gives the cluster expansion
        \[
        B^P = \sum_{\alpha\in\mathscr A_{i,P}}B_\alpha,
        \]
         where $\mathscr A_{i,P}\subset\mathscr A_i$ denotes the subset of indices with fixed $P$.

        We now justify substituting this expansion into \eqref{eq:coherent-fourier-generator}. The estimates on $\| B_\alpha\|$ obtained in \cref{lem:B-cluster-cancellation} and the rooted-cluster estimate in \eqref{eq:rooted-cluster} imply
        \begin{equation} \label{eq:cluster-B-absolute-convergence}
            \sum_{\alpha\in\mathscr A_{i,P}}\|B_\alpha\| \le \sum_{k\geq2}(\beta\Lambda)^k <\infty,
        \end{equation}
        where the last inequality follows from $\beta\Lambda<1$. Consequently, the cluster expansion of $B^P$ is absolutely convergent in operator norm. Moreover, real-time conjugation preserves the operator norm
        \[
        \|\tau_\xi(B_\alpha)\|=\|B_\alpha\|
        \]
         and thus
        \begin{equation} \label{eq:elementary-coherent-generator-bound}
            \|\cL_{\xi,\alpha}\|_{1\to1} \leq2\|B_\alpha\|.
        \end{equation}
        The Fourier kernel is nonnegative and satisfies
        \[
        \int_{\mathbb R}f_\beta(\xi)\,d\xi=\frac12.
        \]
         Therefore,
        \begin{align} \label{eq:generator-cluster-integrability}
            \int_{\mathbb R} f_\beta(\xi) \sum_{\alpha\in\mathscr A_{i,P}} \|\cL_{\xi,\alpha}\|_{1\to1} \,d\xi \leq 2 \int_{\mathbb R}f_\beta(\xi)\,d\xi \sum_{\alpha\in\mathscr A_{i,P}}\|B_\alpha\| = \sum_{\alpha\in\mathscr A_{i,P}}\|B_\alpha\| <\infty.
        \end{align}
        Thus the family \( (\xi,\alpha) \longmapsto f_\beta(\xi)\cL_{\xi,\alpha} \) is absolutely Bochner integrable. In particular, we may interchange the cluster sum and the Fourier integral. Hence
        \[
        -\ii[G^P,\,\cdot\,] = \int_{\mathbb R} f_\beta(\xi) \sum_{\alpha\in\mathscr A_{i,P}} \cL_{\xi,\alpha} \,d\xi.
        \]
         Summing over the three choices of $P$, we obtain the fundamental decomposition
        \begin{equation}
            \label{eq:coherent-elementary-decomposition}
            \mathcal K_i^{\mathrm c} = \int_{\mathbb R} f_\beta(\xi) \sum_{\alpha\in\mathscr A_i} \cL_{\xi,\alpha} \,d\xi.
        \end{equation}
        Now, fix $\xi\in\mathbb R$ and $\alpha\in\mathscr A_i$ such that $S_\alpha \ne \varnothing$. By \cref{lem:B-cluster-cancellation}, the operator $B_\alpha$ is Hermitian, is supported on $S_\alpha$, and satisfies \( \|B_\alpha\|\leq a_\alpha \), where $a_\alpha$ is defined in \eqref{eq:alpha-data}. Therefore, \cref{lem:two-mark-input} provides an entrywise nonnegative matrix $R_{\xi,\alpha}$ such that \( I+\delta R_{\xi,\alpha}+o_{\xi,\alpha}(\delta) \) is an update matrix for \( \cI+\delta\cL_{\xi,\alpha} \). If $S_\alpha = \varnothing$ then $B_\alpha = 0$ and we choose $R_{\xi,\alpha}=0$. For fixed $\xi$ and $i$, define
        \[
            \cL_{\xi,i} := \sum_{\alpha\in\mathscr A_i}\cL_{\xi,\alpha},
            \qquad R_{\xi,i} := \sum_{\alpha\in\mathscr A_i}R_{\xi,\alpha}.
        \]
        We also set
        \[
        R_\xi := \sum_{i=1}^n R_{\xi,i} = \sum_\alpha R_{\xi,\alpha}.
        \]
         From \eqref{eq:cluster-B-absolute-convergence} and \eqref{eq:elementary-coherent-generator-bound}, the sum defining $\cL_{\xi,i}$ is absolutely convergent in $1\to1$ operator norm. The construction in \cref{lem:two-mark-input}, with $A_0$ as in \cref{lem:A0}, gives
        \begin{equation} \label{eq:all-time-cluster-family}
            \|R_\xi\|_{1\to1} = \left\|\sum_\alpha R_{\xi,\alpha}\right\|_{1\to1} \leq 50e^{12\Lambda|\xi|} \|A_0\|_{1\to1}.
        \end{equation}
        Since the terms of the convergent series defining $R_\xi$ are entrywise nonnegative, the subseries defining $R_{\xi,i}$ also converges in $1\to1$ norm. Together with the norm convergence of $\cL_{\xi,i}$
        established above, this allows us to apply the countable-sum part of \cref{lem:first-order-sum}. Hence, \( I+\delta R_{\xi,i}+o_{\xi,i}(\delta) \) is an update matrix for $\cI +\delta\cL_{\xi,i}$.

        We now integrate the fixed-time generators and the corresponding matrices $R_{\xi,i}$. It is convenient to introduce the finite positive measure
        \[
        d\nu_\beta(\xi):=f_\beta(\xi)\,d\xi.
        \]
         In this notation,
        \[
        \nu_\beta(\mathbb R)=\frac12
        \]
         and \eqref{eq:coherent-elementary-decomposition} becomes
        \[
        \mathcal K_i^{\mathrm c} = \int_{\mathbb R}\cL_{\xi,i}\,d\nu_\beta(\xi).
        \]
         To apply the Bochner-integral part of \cref{lem:first-order-sum}, we verify measurability and integrability.

        First, $\xi\mapsto\cL_{\xi,\alpha}$ is continuous in $1\to1$ operator norm, because \( \xi\longmapsto\tau_\xi(B_\alpha) \) is continuous in operator norm. Since $\cL_{\xi,i}$ is the norm limit of its partial sums, it is Borel measurable.

        Next, the explicit choice of $R_{\xi,\alpha}$ in the proof of \cref{lem:two-mark-input}, based on \eqref{eq:single-all-time-R-def}, is built out of matrix exponentials, $\tau_\xi(B_\alpha)$, and operator norms. All these quantities depend continuously on $\xi$. The positive-time and
        negative-time formulas agree at $\xi=0$, so \( \xi\longmapsto R_{\xi,\alpha} \) is Borel measurable on all of $\mathbb R$. Since $R_{\xi,i}$ is the norm limit of finite sums of such matrices, it is also Borel measurable.

        The generator family is integrable because
        \begin{align*}
            \int_{\mathbb R} \|\cL_{\xi,i}\|_{1\to1}\,d\nu_\beta(\xi) \leq \int_{\mathbb R} \sum_{\alpha\in\mathscr A_i}
            \|\cL_{\xi,\alpha}\|_{1\to1} \,d\nu_\beta(\xi) \leq 2\nu_\beta(\mathbb R) \sum_{\alpha\in\mathscr A_i}\|B_\alpha\| = \sum_{\alpha\in\mathscr A_i}\|B_\alpha\| <\infty.
        \end{align*}
        For the matrix family,
        \[
        \|R_{\xi,i}\|_{1\to1} \leq \|R_\xi\|_{1\to1}
        \]
         and hence, using \eqref{eq:all-time-cluster-family}, we obtain
        \begin{align} \label{eq:R-xi-integrability}
            \int_{\mathbb R} \|R_{\xi,i}\|_{1\to1}\,d\nu_\beta(\xi) \leq 50\|A_0\|_{1\to1} \int_{\mathbb R} f_\beta(\xi)e^{12\Lambda|\xi|} \,d\xi.
        \end{align}
        The right-hand side is finite. Indeed, $\|A_0\|_{1\to1}<\infty$ follows from $\beta\Lambda<1$, whereas \cref{lem:coherent-kernel} gives
        \[
        \int_{\mathbb R} f_\beta(\xi)e^{12\Lambda|\xi|} \,d\xi<\infty
        \]
         under the assumption \( 12\beta\Lambda<2\pi \). All the assumptions of the Bochner-integral part of \cref{lem:first-order-sum} are therefore satisfied. Define
        \begin{equation} \label{eq:Ri-coherent-definition}
            R_{\mathrm c}^{(i)} := \int_{\mathbb R} R_{\xi,i}\,d\nu_\beta(\xi) = \int_{\mathbb R} f_\beta(\xi)R_{\xi,i}\,d\xi.
        \end{equation}
        Since each $R_{\xi,i}$ is entrywise nonnegative and $f_\beta(\xi)\geq0$, the matrix $R_{\mathrm c}^{(i)}$ is entrywise
        nonnegative. Furthermore, \cref{lem:first-order-sum} shows that \( I+\delta R_{\mathrm c}^{(i)}+o(\delta) \) is an update matrix for
        \[
        \cI+\delta\mathcal K_i^{\mathrm c} = \Phi_{i,\mathrm c}.
        \]
         Since the sum over $i$ is finite, we may exchange it with the integral in \eqref{eq:Ri-coherent-definition}, obtaining
        \begin{equation} \label{eq:sum-Ri-coherent}
            \sum_{i=1}^n R_{\mathrm c}^{(i)} = \int_{\mathbb R} f_\beta(\xi) \sum_{i=1}^n R_{\xi,i}
            \,d\xi = \int_{\mathbb R} f_\beta(\xi)R_\xi\,d\xi.
        \end{equation}

        Taking the $1\to1$ norm in \eqref{eq:sum-Ri-coherent}, and then applying \eqref{eq:all-time-cluster-family}, we find
        \begin{align} \label{eq:coherent-before-final-bounds}
            \left\|\sum_{i=1}^n R_{\mathrm c}^{(i)}\right\|_{1\to1} \leq \int_{\mathbb R} f_\beta(\xi)\|R_\xi\|_{1\to1}\,d\xi \leq 50\|A_0\|_{1\to1} \int_{\mathbb R} f_\beta(\xi)e^{12\Lambda|\xi|} \,d\xi.
        \end{align}
        By \cref{lem:coherent-kernel}, with $a=12\Lambda$, and  \eqref{eq:A0-bound}, we obtain
        \begin{align*}
            \left\|\sum_{i=1}^n R_{\mathrm c}^{(i)}\right\|_{1\to1} \leq 50 \left( 3\sum_{k\geq2}(k+1)^2(\beta\Lambda)^k
            \right) \frac{4}{2\pi-12\beta\Lambda} = \frac{600}{2\pi-12\beta\Lambda} \sum_{k\geq2}(k+1)^2(\beta\Lambda)^k.
        \end{align*}

        This proves \eqref{eq:coherent-column-bound-all-time}.
    \end{proof}

    \section{Rapid mixing for continuous dynamics}

    Combining the estimates established in \cref{Evolution of Wasserstein transport plans}, namely \cref{dissipative evolution} and \cref{coherent evolution}, with \cref{Dobrushin condition implies rapid mixing}, we establish rapid mixing for the continuous-time Lindblad evolution described by the quantum Markov semigroup in \eqref{semigroup}. This holds in a high-temperature regime that depends on the local interaction strength $\Lambda$. To establish this result, we show that above a certain temperature, the first-order approximation associated with the Lindbladian satisfies the quantum Dobrushin condition we introduced in \cref{Q Dobrushin Condition}.

    First, we show that the Dobrushin condition holds for the Euler map
    \[
        \cI+\frac{\delta}{n}\cL.
    \]

    \begin{prop}[Dobrushin condition at high temperature]
    \label{lem:high-temperature-dobrushin}
    Let $H$ be a Hamiltonian as in \cref{Hamiltonian} and let
    $\sigma$ be its Gibbs state at inverse temperature $\beta$ satisfying
    \begin{equation}
    \beta \Lambda <  \frac{1}{100}.
    \end{equation}
    Then, for the Lindbladian $\cL $ defined in \cref{Lindbladian} and for a sufficiently small $\delta>0$, the map
    \begin{equation}
    \Phi = \cI  + \frac{\delta}{n} \cL
    \end{equation}
    satisfies the quantum Dobrushin condition in \eqref{Dobrushin condition} with parameter $\gamma=3\delta$.
    \end{prop}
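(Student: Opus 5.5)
The plan is to combine the two update-matrix theorems through \cref{relating dobrushin matrix to update matrix}. Write $\Phi=\sum_i\Phi_i$ with $n\Phi_i=\Phi_{i,\mathrm c}+\Phi_{i,\mathrm d}-\cI$, as in \eqref{def of Phi}. The hypothesis $\beta\Lambda<1/100$ implies both $\beta\Lambda<1$ and $12\beta\Lambda<2\pi$. Hence \cref{dissipative evolution} and \cref{coherent evolution} both apply, and they give update matrices
\[
Q_{\mathrm d}^{(i)}=I+\delta(-4E_{\{i\}}+R_{\mathrm d}^{(i)})+\mathcal O(\delta^2)
\quad\text{and}\quad
I+\delta R_{\mathrm c}^{(i)}+o(\delta).
\]

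Next I would get an update matrix for $n\Phi_i$. The coherent and dissipative generators are Hermiticity preserving and trace annihilating, and $n\Phi_i=\cI+\delta(\mathcal K_i^{\mathrm c}+\mathcal K_i^{\mathrm d})$. The finite-sum part of \cref{lem:first-order-sum} then gives an update matrix $I+\delta(R_{\mathrm c}^{(i)}+R_{\mathrm d}^{(i)})+o(\delta)$ for this sum. There is one subtlety: the dissipative update matrix contains the negative term $-4\delta E_{\{i\}}$, whereas \cref{lem:first-order-sum} is stated for nonnegative $R_q$. The averaging identity \eqref{eq:first-order-average-identity} together with \cref{prop on lin} still works verbatim for signed first-order terms, as long as the full matrix $I-4\delta E_{\{i\}}+\dots$ stays entrywise nonnegative for small $\delta$. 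So I would rerun that two-term argument with $m=2$, rescaling $\delta\mapsto2\delta$ in each theorem. This gives the update matrix
\[
Q^{(i)}:=\tfrac1n\left(I+\delta(-4E_{\{i\}}+R_{\mathrm d}^{(i)}+R_{\mathrm c}^{(i)})+o(\delta)\right)
\]
for $\Phi_i$, using that scaling by $1/n$ scales costs linearly.

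Summing over $i$ and using $\sum_iE_{\{i\}}=I$, we get
\[
\sum_iQ^{(i)}=I-\frac{4\delta}{n}I+\frac{\delta}{n}\sum_i\left(R_{\mathrm d}^{(i)}+R_{\mathrm c}^{(i)}\right)+o(\delta/n).
\]
Every column sum of this nonnegative matrix is at most
\[
1-\frac{\delta}{n}\Bigl(4-\bigl\|\textstyle\sum_iR_{\mathrm d}^{(i)}\bigr\|_{1\to1}-\bigl\|\textstyle\sum_iR_{\mathrm c}^{(i)}\bigr\|_{1\to1}\Bigr)+o(\delta).
\]
It remains to check numerically, with $x=\beta\Lambda<1/100$, that
\[
12\sum_{t\ge1}(t+1)^2x^t+\frac{600}{2\pi-12x}\sum_{k\ge2}(k+1)^2x^k<1-\epsilon
\]
for some fixed $\epsilon>0$; any such $\epsilon$ suffices, since it absorbs the $o(\delta)$ term for small $\delta$. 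The dissipative sum is about $12\cdot4x\approx0.48$ plus small corrections, roughly $0.52$ in total. The coherent term is about $\frac{600}{6.16}\cdot9x^2\approx0.09$. The total is roughly $0.6<1$, so the column sums are at most $1-3\delta/n$ once $\delta$ is small enough that the $o(\delta)$ remainder is below $(1-0.6-\cdots)\delta/n$.

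Finally, \cref{relating dobrushin matrix to update matrix} gives $\|D^{(\Phi)}\|_{1\to1}\le\|\sum_iQ^{(i)}\|_{1\to1}\le1-3\delta/n$. The main obstacle is bookkeeping of the remainder. The $o(\delta)$ in $Q^{(i)}$ must be uniform after dividing by $n$ and summing over $n$ sites; this is fine because $n$ is fixed, so the remainder for each $i$ is $o(\delta)$ and there are finitely many. The other care point is the signed-matrix version of the combination step, and the explicit numerical verification that the series bounds stay below $1$.
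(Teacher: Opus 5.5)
Your proposal is correct and follows essentially the same route as the paper. The paper also writes $n\Phi_i=\tfrac12(\cI+2\delta\mathcal K_i^{\mathrm d})+\tfrac12(\cI+2\delta\mathcal K_i^{\mathrm c})$, applies the two update-matrix theorems at step $2\delta$ and averages them via \cref{prop on lin} (so the signed term $-4\delta E_{\{i\}}$ causes no trouble), sums over $i$ using $\sum_iE_{\{i\}}=I$, and concludes with \cref{relating dobrushin matrix to update matrix}; the only difference is that it makes your numerical check rigorous through closed-form series bounds, bounding the two error column sums by $66/125$ and $13/125$ (total $79/125<1$), which is consistent with your estimates.
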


    The proof of \cref{lem:high-temperature-dobrushin} combines the preceding bounds. Taking the local generator $\cL_i$, \cref{dissipative evolution} assigns to the dissipative part an exact local contraction \(-4E_{\{i\}}\) plus a nonnegative
    error \(R_{\mathrm d}^{(i)}\), while \cref{coherent evolution} assigns to the coherent part a nonnegative error \(R_{\mathrm c}^{(i)}\). The weighted cluster estimates bound the column sums of both error families. At \(\beta\Lambda<1/100\), their combined column sum is smaller than $1$, so at least $3$ units remain from the four-unit contractive core.

    \begin{proof}
        Let $i \in [n]$, $\delta >0$ sufficiently small, and recall the definition of $\Phi_i$ in \eqref{def of Phi}. We consider the following identity
        \begin{equation} \label{decomp of Phi i}
        n\Phi_i = \frac12(\cI+2\delta\mathcal K_i^{\mathrm d}) + \frac12(\cI+2\delta\mathcal K_i^{\mathrm c}),
        \end{equation}
        where
        \[
        \mathcal K_i^{\mathrm d} = \sum_{P\in\cP_i}\left(  A^P \cdot (A^P)^\dagger - \frac{1}{2}\{(A^P)^\dagger A^P,\cdot\}\right)
        \]
         and
        \[
        \mathcal K_i^{\mathrm c} = \sum_{P\in\cP_i}\left(  -\ii [G^P,\cdot] \right).
        \]

        Now we apply \cref{dissipative evolution} to the first map on the right-hand side of \eqref{decomp of Phi i} and \cref{coherent evolution} to the second. With step size \(2\delta\), their update matrices are
        \[
        I-8\delta E_{\{i\}}+2\delta R_{\mathrm d}^{(i)} +\mathcal O(\delta^2)
        \]
        and
        \[
        I+2\delta R_{\mathrm c}^{(i)}+o(\delta),
        \]
        respectively. By \cref{prop on lin}, averaging two maps with weights $1/2,1/2$ allows us to combine their corresponding output transport plans with the same weights. The resulting cost vector is bounded componentwise by the average of the two cost vectors. Hence we have obtained the following update matrix for $n\Phi_i$
        \begin{equation} \label{eq:main-local-Q}
            Q_i(\delta) = I-4\delta E_{\{i\}} + \delta R_{\mathrm d}^{(i)} + \delta R_{\mathrm c}^{(i)} + o(\delta).
        \end{equation}
        The remainder is nonnegative and has \(1\to1\) norm \(o(\delta)\); the \(\mathcal O(\delta^2)\) dissipative remainder is absorbed into it.

        By \eqref{def of Phi},
        \[
        \Phi = \cI + \frac{\delta}{n}\cL=\sum_i\Phi_i.
        \]
         Thus we average the \(n\) matrices in \eqref{eq:main-local-Q} and use the linearity of transport plans (\cref{prop on lin}). Since
        \[
            \sum_{i=1}^nE_{\{i\}}=I,
        \]
        the averaged update matrix is
        \begin{equation} \label{eq:main-averaged-Q}
        Q_\delta = I+\frac{\delta}{n} \left( -4I+\sum_{i=1}^n R_{\mathrm d}^{(i)}+\sum_{i=1}^n R_{\mathrm c}^{(i)} \right) + o\left(\frac \delta n \right).
        \end{equation}
        Here \(n\) is fixed while \(\delta\) goes to \(0\). A finite average of little-\(o(\delta)\) matrices is again \(o(\delta)\), equivalently \( o\left(\frac \delta n \right)\) for fixed \(n\). We recall that \cref{dissipative evolution} and \cref{coherent evolution} give
        \begin{align}
        \left\|\sum_iR_{\mathrm d}^{(i)}\right\|_{1\to1} &\le 12\sum_{t\ge1}(t+1)^2(\beta \Lambda)^t,
        \label{eq:main-d-error}\\
        \left\|\sum_iR_{\mathrm c}^{(i)}\right\|_{1\to1} &\le \frac{600} {2\pi-12\beta \Lambda} \sum_{t\ge2}(t+1)^2(\beta \Lambda)^t.
        \label{eq:main-c-error}
        \end{align}
        We set $x:=\beta\Lambda$. For $0\le x<1$, differentiating

        \[
        \sum_{t\ge0}(t+1)x^{t+1}=\frac{x}{(1-x)^2}
        \]
         gives

        \[
        \sum_{t\ge0}(t+1)^2x^t=\frac{1+x}{(1-x)^3}.
        \]
         Subtracting the term $t=0$
        (equal to $1$) and then the term $t=1$ (equal to $4x$) yields
        \[
        \sum_{t\ge1}(t+1)^2x^t=\frac{x(4-3x+x^2)}{(1-x)^3},
        \qquad
        \sum_{t\ge2}(t+1)^2x^t=\frac{x^2(9-11x+4x^2)}{(1-x)^3}.
        \]
        Suppose \(0\le x<1/100\); then both update-matrix theorems are applicable. Moreover, for $x$ in this range
        \[
        \frac1{(1-x)^3} < \frac{11}{10}
        \]

        and \(4-3x+x^2\le4\). Therefore the dissipative error satisfies
        \begin{align} \label{eq:main-d-error-explicit}
        \left\|\sum_iR_{\mathrm d}^{(i)}\right\|_{1\to1} \le 12\frac{x(4-3x+x^2)}{(1-x)^3}  \le \frac{66}{125}.
        \end{align}
        For the coherent denominator,
        \[
        2\pi-12x > 6-\frac3{25} = \frac{147}{25} \Rightarrow \frac{600}{2\pi-12x} < 103.
        \]
        Similarly, \(9-11x+4x^2\le9\) in this interval. Therefore the coherent error satisfies
        \begin{align}
        \left\|\sum_iR_{\mathrm c}^{(i)}\right\|_{1\to1} \le \frac{600}{2\pi-12x} \frac{x^2(9-11x+4x^2)}{(1-x)^3}  < \frac{13}{125}.
        \label{eq:main-c-error-explicit}
        \end{align}
        The triangle inequality and \eqref{eq:main-d-error-explicit}, \eqref{eq:main-c-error-explicit} give
        \begin{equation} \label{eq:main-total-error}
        \left\| \sum_iR_{\mathrm d}^{(i)}+\sum_iR_{\mathrm c}^{(i)} \right\|_{1\to1} < \frac{66}{125}+\frac{13}{125} = \frac{79}{125} =: c < 1.
        \end{equation}
        Finally, for a nonnegative matrix, its \(1\to1\) norm is its largest column sum. Using \eqref{eq:main-averaged-Q}, the \(j\)-th column sum is at most
        \[
        1-\frac{4\delta}{n} + \frac{\delta}{n} \mathbf{1}^{\mathsf T} \left(\sum_iR_{\mathrm d}^{(i)}+\sum_iR_{\mathrm c}^{(i)}\right)e_j + o\left(\frac \delta n \right).
        \]
        By \eqref{eq:main-total-error}, the middle column sum is at most \(c\). Taking the maximum over \(j\) gives the following bound for the norm of the update matrix
        \[
        \opnorm{Q_\delta}_{1\to1} \le 1-\frac{(4-c)\delta}{n}+ o\left(\frac \delta n \right).
        \]
        Since $Q_i(\delta)$ is an update matrix for $n\Phi_i$, the matrix $Q_i(\delta)/n$ is an update matrix for $\Phi_i$. Applying \cref{relating dobrushin matrix to update matrix} to the decomposition
        \[
        \Phi=\sum_i\Phi_i
        \]
         therefore gives
        \[
        \|D^{(\Phi)}\|_{1\to1} \leq \left\|\frac1n\sum_{i=1}^n Q_i(\delta)\right\|_{1\to1} = \|Q_\delta\|_{1\to1}.
        \]
        Since $c<1$, the preceding estimate implies, for sufficiently small $\delta>0$,
        \[
        \|D^{(\Phi)}\|_{1\to1}
        \leq 1-\frac{3\delta}{n},
        \]
        which is the required Dobrushin condition.
    \end{proof}

    With this result, we can now prove rapid mixing for continuous dynamics.
    \begin{teo}[Rapid mixing for high temperature Gibbs states] \label{satisfaction of Dobrushin condition}
        Let $H$ be a Hamiltonian as in \cref{Hamiltonian} and let $\sigma$ be its Gibbs state at inverse temperature $\beta$ such that
        \[
        \beta \Lambda < \frac{1}{100}.
        \]
        Then, for every initial state $\rho_0$ and every
        $0<\varepsilon<1$,  the Lindbladian evolution in \cref{Lindbladian} satisfies
        \[
        \|e^{t \cL}\rho_0-\sigma\|_{\mathrm{tr}}\leq\varepsilon, \qquad \mathrm{for \ all \ } t \geq \frac{1}{3}\log \left( \frac{2n}{\varepsilon}\right).
        \]
    \end{teo}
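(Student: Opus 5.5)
We would derive the theorem from \cref{lem:high-temperature-dobrushin} through the Euler product formula \eqref{semigroup}. Fix $t>0$ and, for large $m$, set $\delta:=nt/m$, so that $\Phi_\delta=\cI+\frac{\delta}{n}\cL=\cI+\frac{t}{m}\cL$ and $\Phi_\delta^m\to e^{t\cL}$ as $m\to\infty$. For $m$ large, $\delta$ is small enough that \cref{lem:high-temperature-dobrushin} applies, giving $\|D^{(\Phi_\delta)}\|_{1\to1}\le 1-3\delta/n$. Then \cref{Lemma contraction to influence} yields, for any two states $\rho_0,\varrho$,
\[
\|\Phi_\delta^m(\rho_0)-\Phi_\delta^m(\varrho)\|_{W_1}\le\Bigl(1-\frac{3t}{m}\Bigr)^{m}\|\rho_0-\varrho\|_{W_1}.
\]
We would take $\varrho=\sigma$. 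By \cref{KMS detailed balance}, $\cL(\sigma)=0$, hence $\Phi_\delta(\sigma)=\sigma$. Letting $m\to\infty$ and using continuity of the $W_1$ norm (it is equivalent to the trace norm in finite dimension), together with $(1-3t/m)^m\to e^{-3t}$, we get
\[
\|e^{t\cL}\rho_0-\sigma\|_{W_1}\le e^{-3t}\|\rho_0-\sigma\|_{W_1}.
\]

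To convert to trace norm we use \cref{Lemma W1 rel tr}: $\frac12\|X\|_{\mathrm{tr}}\le\|X\|_{W_1}\le\frac n2\|X\|_{\mathrm{tr}}$, and $\|\rho_0-\sigma\|_{\mathrm{tr}}\le2$. This gives $\|e^{t\cL}\rho_0-\sigma\|_{\mathrm{tr}}\le 2\|e^{t\cL}\rho_0-\sigma\|_{W_1}\le 2e^{-3t}\,n$. The right-hand side is at most $\varepsilon$ exactly when $t\ge\frac13\log\frac{2n}{\varepsilon}$, which is the claimed bound. For $\varepsilon<1$ this threshold is positive, so $t>0$ as assumed.

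The main obstacle is the uniformity in $\delta$. \cref{lem:high-temperature-dobrushin} only asserts the bound ``for sufficiently small $\delta$,'' with an $o(\delta)$ remainder coming from \cref{lem:first-order-sum}. We must check that there is a threshold $\delta_0>0$ (depending on $n$, $H$, $\beta$, but not on the input) such that $\|D^{(\Phi_\delta)}\|_{1\to1}\le1-3\delta/n$ for every $\delta\le\delta_0$; this is precisely what the input-independent remainders in the proof provide, because $c<1$ leaves slack $(1-c)\delta/n$ to absorb them. With that in hand, choosing $m\ge nt/\delta_0$ suffices. The only other point needing care is the limit interchange: $\Phi_\delta^m\to e^{t\cL}$ in norm on a finite-dimensional space, so the $W_1$ bound passes to the limit.
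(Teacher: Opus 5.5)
Your proposal is correct and follows the paper's proof. Both arguments use the Euler discretization with $\delta_m = nt/m$, the Dobrushin condition with $\gamma = 3\delta_m$ for all large $m$, stationarity of $\sigma$, and the limit $\Phi_{\delta_m}^m \to e^{t\cL}$. The only difference is cosmetic: you unpack the Dobrushin-to-mixing lemma into the contraction lemma plus the $W_1$/trace-norm comparison, obtaining the bound $2ne^{-3t}$ directly, whereas the paper invokes that lemma with error parameter $\varepsilon/2$ and passes to the limit in trace norm, which gives the same constant.
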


    \begin{proof}
        By \cref{lem:high-temperature-dobrushin}, for all sufficiently small $\delta>0$, the map
        \[
        \Phi_\delta := \cI + \frac{\delta}{n} \cL
        \]
         satisfies the Dobrushin condition with $\gamma = 3 \delta$. Moreover, \cref{KMS detailed balance} gives $\Phi_\delta(\sigma)=\sigma$. Hence, by \cref{Dobrushin condition implies rapid mixing}, $\sigma$ is its unique fixed point.

        We fix an initial state $\rho_0$, an error parameter $0 <\varepsilon < 1$ and a time $t$ satisfying
        \begin{equation} \label{bound for time}
        t \geq \frac{1}{3}\log\left(\frac{2n}{\varepsilon} \right).
        \end{equation}
        For each integer $m\geq 1$, we define
        \[
        \delta_m = \frac{nt}{m}
        \]
         and correspondingly $\gamma_m = 3\delta_m$. Since $\delta_m\to0$ as $m \to \infty$, the preceding conclusions apply to $\Phi_{\delta_m}$ for all sufficiently large $m$. Furthermore, \eqref{bound for time} implies
        \[m \geq \frac{m}{3t} \log\left(\frac{2n}{\varepsilon} \right) = \frac{n}{\gamma_m }  \log\left(\frac{2n}{\varepsilon} \right).\]
        Thus, the bound in \eqref{minimum time rapid mixing} holds with error parameter $\varepsilon/2$. Applying \cref{Dobrushin condition implies rapid mixing}, we obtain
        \[\|\Phi_{\delta_m}^m(\rho_0)-\sigma\|_{\mathrm{tr}} \leq \frac\varepsilon 2 \|\rho_0-\sigma\|_{\mathrm{tr}}.\]
        Finally, the exponential limit formula gives
        \begin{equation*}
            \Phi_{\delta_m}^m = \left( \cI + \frac{t}{m} \cL \right)^m \longrightarrow e^{t \cL}, \quad\text{as }  m \to \infty
        \end{equation*}
        Passing to the limit as $m\to\infty$ and using continuity of the trace norm yields
        \[\| e^{t \cL}(\rho_0)-\sigma\|_{\mathrm{tr}} \leq \frac\varepsilon 2 \|\rho_0-\sigma\|_{\mathrm{tr}} \leq \varepsilon,\]
        which concludes the proof.
    \end{proof}

    \section{Application of the main result}
    In this section, we illustrate the main result for square-root and mean-field normalizations of the interactions.

    Let $n\geq2$ and $J>0$. We assume that each site interacts with at most $d$ other sites, where $1\leq d\leq n-1$, and consider the Hamiltonian
    \begin{equation}\label{new Hamiltonian}
        H=\frac1{\sqrt d}\sum_{1\leq i<j\leq n}h_{i,j},
    \end{equation}
    where each $h_{i,j}$ is Hermitian and supported on $\{i,j\}$, at most $d$ terms incident to each site are nonzero, and $\|h_{i,j}\|\leq J$.  Thus, the interaction terms in our earlier notation are $H_{i,j}=h_{i,j}/\sqrt d$, and
    \[
        \Lambda
        =\max_{i\in[n]}\sum_{j\neq i}\frac{\|h_{i,j}\|}{\sqrt d}
        \leq J\sqrt d,
    \]
    where $h_{j,i}=h_{i,j}$ for $i<j$.
    With this bound on $\Lambda$, we can now reformulate \cref{satisfaction of Dobrushin condition} in terms of the number of interactions of the system. In particular,
    \begin{cor} \label{cor:degree-normalized-mixing}
        Let $H$ be a Hamiltonian as in \eqref{new Hamiltonian} and let
        $\sigma$ be its Gibbs state at inverse temperature $\beta$ such that
        \[
        \beta < \frac{1}{100 J \sqrt{d}}.
        \]
        Then, for every initial state \(\rho\) and every \(0<\varepsilon<1\), the evolution determined by \cref{Lindbladian} satisfies
        \[ \| e^{t \cL}(\rho) -\sigma \|_{\mathrm{tr}} \leq \varepsilon,\]
        for a time
        \[
        t = \mathcal O(\log(n/\varepsilon)).
        \]

    \end{cor}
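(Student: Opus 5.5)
The corollary should follow directly from \cref{satisfaction of Dobrushin condition} once we check that its hypothesis holds. I would proceed in three short steps.

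First, identify the interaction terms. In the notation of \cref{Hamiltonian}, set $H_{i,j}:=h_{i,j}/\sqrt d$. Each $H_{i,j}$ is Hermitian and supported on $\{i,j\}$, so $H$ is a Hamiltonian of the required form, and its weights are $w_{i,j}=\|h_{i,j}\|/\sqrt d\le J/\sqrt d$.

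Second, bound $\Lambda$. At most $d$ of the terms incident to a fixed site $i$ are nonzero, so
\[
\sum_{j\ne i}w_{i,j}\le d\cdot\frac{J}{\sqrt d}=J\sqrt d .
\]
Maximizing over $i$ gives $\Lambda\le J\sqrt d$, as already noted before the statement. Combined with the assumption $\beta<1/(100J\sqrt d)$, this yields
\[
\beta\Lambda\le\beta J\sqrt d<\frac1{100}.
\]
If $\Lambda=0$ the inequality holds trivially, and it is still covered by the theorem.

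Third, apply \cref{satisfaction of Dobrushin condition}. It gives, for every initial state $\rho$ and every $0<\varepsilon<1$,
\[
\|e^{t\cL}(\rho)-\sigma\|_{\mathrm{tr}}\le\varepsilon
\qquad\text{for all }t\ge\tfrac13\log\tfrac{2n}{\varepsilon}.
\]
In particular we may take $t=\tfrac13\log(2n/\varepsilon)$. Since $\log 2$ and $\log(n/\varepsilon)$ are both positive when $n\ge2$ and $\varepsilon<1$, we have $\log(2n/\varepsilon)=\log2+\log(n/\varepsilon)\le 2\log(n/\varepsilon)$, so $t=\mathcal O(\log(n/\varepsilon))$.

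The argument contains no real obstacle. The only step that needs care is the degree count in the second step: the bound must use that at most $d$ terms at each site are nonzero, not merely that $d\le n-1$. The complete-graph case of the introduction is then the special case $d=n-1$.
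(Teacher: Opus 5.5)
Your proposal is correct and follows the paper's argument exactly: bound $\Lambda\le J\sqrt d$ using that at most $d$ incident terms are nonzero, conclude $\beta\Lambda<1/100$, and invoke \cref{satisfaction of Dobrushin condition}. One small fix in the last step: $\log 2\le\log(n/\varepsilon)$ holds because $n/\varepsilon>n\ge 2$, not merely because both logarithms are positive.
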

    \begin{proof}
        Since
        \[
        \beta\Lambda \leq \beta J\sqrt{d} < 1/100,
        \]
         the claim follows from \cref{satisfaction of Dobrushin condition}.
    \end{proof}

    \begin{remark}
        We observe that when the Hamiltonian is fully connected, i.e. each site interacts with every other site ($d=n-1$) and $J =1$, then the temperature regime for which we have rapid-mixing for the Gibbs sampling, is given by
        \begin{equation*}
            \beta < \beta(n) := \frac{1}{100 (n-1)^{1/2}}.
        \end{equation*}
        In particular, this regime allows the norm of the exponent in the Gibbs state to grow linearly in $n$. Indeed,
        \begin{equation*}
            \beta(n) \| H \| \leq \sum_{1 \leq i < j \leq n} \frac{\| h_{i,j} \|}{100 (n-1)} \leq \frac{n}{200}.
        \end{equation*}
        For example, if we take
        \[H=\frac{1}{\sqrt{n-1}} \sum_{i<j}\sigma_Z^{(i)}\sigma_Z^{(j)}, \]
        we have
        \[
            \|H\|=\frac{n\sqrt{n-1}}2.
        \]
        Choosing
        \[
            \beta=\frac{c}{\sqrt{n-1}},\qquad 0<c<\frac1{100},
        \]
        satisfies the rapid-mixing condition and gives
        \[
            \beta\|H\|=\frac{cn}2.
        \]
    \end{remark}

    \begin{ex}[Mean-field normalization]
    For $n\geq2$, consider
    \[
        H=\frac{1}{n-1}\sum_{i<j}h_{i,j},
        \qquad \|h_{i,j}\|\leq J,
    \]
    where each $h_{i,j}$ is Hermitian and supported on $\{i,j\}$.
    Then $\Lambda\leq J$, so
    \cref{satisfaction of Dobrushin condition} gives rapid
    mixing whenever
    \[
    \beta <\frac{1}{100J},
    \]
     with a temperature threshold
    independent of the system size.
    \end{ex}

    \sectionnotoc{Use of AI-assisted tools}
    An AI-assisted language model was used during manuscript preparation
    for literature discovery, editorial restructuring, and consistency
    checks. Responsibility for the final text and for every mathematical
    claim remains with the authors.

    \sectionnotoc{Acknowledgements}
    NB and GDP have been supported by the UNA EUROPA SeedFunding project QUANTUMUnaE (CUP J37G25000380006).
    NB and GDP are members of the ``Gruppo Nazionale per la Fisica Matematica (GNFM)'' of the ``Istituto Nazionale di Alta Matematica ``Francesco Severi'' (INdAM)''.

    \bibliography{references} 

@inproceedings{BLMT25,
  author = {Bakshi, Ainesh and Liu, Allen and Moitra, Ankur and Tang, Ewin},
  title = {{A {Dobrushin} condition for quantum {Markov} chains: Rapid mixing and conditional mutual information at high temperature}},
  booktitle = {Proceedings of the 58th Annual ACM Symposium on Theory of Computing},
  publisher = {ACM},
  year = {2026},
  doi = {10.1145/3798129.3800859},
  eprint = {2510.08542},
  archivePrefix = {arXiv},
  note = {Full version: arXiv:2510.08542v1},
  url = {https://arxiv.org/abs/2510.08542v1},
}

@misc{BT26,
  author = {Bakshi, A. and Moitra, A. and Tan, X.},
  title = {{Algorithmic Aspects of the Fermi--Hubbard Model}},
  year = {2026},
  eprint = {2604.08408},
  archivePrefix = {arXiv},
  primaryClass = {quant-ph},
  note = {Version 2, revised 6 October 2026},
  url = {https://arxiv.org/abs/2604.08408v2},
}

@misc{Bergamaschi26,
  author = {Bergamaschi, T.},
  title = {{Fast mixing of all-to-all quantum systems at high temperatures}},
  year = {2026},
  eprint = {2606.26090},
  archivePrefix = {arXiv},
  primaryClass = {quant-ph},
  url = {https://arxiv.org/abs/2606.26090v1},
}

@inproceedings{BubleyDyer97,
  author = {Bubley, R. and Dyer, M.},
  title = {{Path coupling: A technique for proving rapid mixing in Markov chains}},
  booktitle = {Proceedings 38th Annual Symposium on Foundations of Computer Science},
  pages = {223--231},
  publisher = {IEEE},
  year = {1997},
  doi = {10.1109/SFCS.1997.646111},
}

@article{CGKR25,
  author = {Capel, {\'A}. and Gondolf, P. and Kochanowski, J. and Rouz{\'e}, C.},
  title = {{Quasi-optimal sampling from Gibbs states via non-commutative optimal transport metrics}},
  journal = {Annales Henri Poincar\'e},
  year = {2025},
  doi = {10.1007/s00023-025-01637-0},
}

@article{CL21,
  author = {Chen, C.-F. and Lucas, A.},
  title = {{Operator Growth Bounds from Graph Theory}},
  journal = {Communications in Mathematical Physics},
  volume = {385},
  number = {3},
  pages = {1273--1323},
  publisher = {Springer Science and Business Media LLC},
  year = {2021},
  month = {jul},
  doi = {10.1007/s00220-021-04151-6},
  issn = {1432-0916},
}

@misc{CKG23,
  author = {Chen, Chi-Fang and Kastoryano, Michael J. and Gily{\'e}n, Andr{\'a}s},
  title = {{An efficient and exact noncommutative quantum {Gibbs} sampler}},
  year = {2023},
  eprint = {2311.09207},
  archivePrefix = {arXiv},
  primaryClass = {quant-ph},
  note = {Version 2, revised 14 October 2025},
  url = {https://arxiv.org/abs/2311.09207v2},
}

@article{DTML21,
  author = {{De Palma}, G. and Marvian, M. and Trevisan, D. and Lloyd, S.},
  title = {{The Quantum Wasserstein Distance of Order 1}},
  journal = {IEEE Transactions on Information Theory},
  volume = {67},
  number = {10},
  pages = {6627--6643},
  publisher = {Institute of Electrical and Electronics Engineers (IEEE)},
  year = {2021},
  month = {Oct},
  doi = {10.1109/tit.2021.3076442},
  issn = {1557-9654},
}

@article{Dobrushin68,
  author = {Dobrushin, R. L.},
  title = {{The problem of uniqueness of a Gibbsian random field and the problem of phase transitions}},
  journal = {Functional Analysis and Its Applications},
  volume = {2},
  pages = {302--312},
  year = {1968},
  doi = {10.1007/BF01075682},
}

@article{Dobrushin70,
  author = {Dobrushin, R. L.},
  title = {{Prescribing a System of Random Variables by Conditional Distributions}},
  journal = {Theory of Probability and Its Applications},
  volume = {15},
  number = {3},
  pages = {458-486},
  year = {1970},
  doi = {10.1137/1115049},
}

@article{DGJ09,
  author = {Dyer, M. and Goldberg, L. A. and Jerrum, M.},
  title = {{Matrix norms and rapid mixing for spin systems}},
  journal = {The Annals of Applied Probability},
  volume = {19},
  number = {1},
  pages = {71--107},
  publisher = {Institute of Mathematical Statistics},
  year = {2009},
  month = {Feb},
  doi = {10.1214/08-aap532},
  issn = {2168-8737},
}

@misc{GKZ24,
  author = {Gamarnik, D. and Kiani, B. T. and Zlokapa, A.},
  title = {{Slow Mixing of Quantum Gibbs Samplers}},
  year = {2024},
  eprint = {2411.04300},
  archivePrefix = {arXiv},
  primaryClass = {quant-ph},
  note = {Version 2, revised 12 December 2024},
  url = {https://arxiv.org/abs/2411.04300v2},
}

@article{GKSL76,
  author = {Gorini, V. and Kossakowski, A. and Sudarshan, E. C. G.},
  title = {{Completely positive dynamical semigroups of \(N\)-level systems}},
  journal = {Journal of Mathematical Physics},
  volume = {17},
  pages = {821--825},
  year = {1976},
  doi = {10.1063/1.522979},
}

@article{KB16,
  author = {Kastoryano, M. J. and Brand{\~a}o, F. G. S. L.},
  title = {{Quantum Gibbs samplers: The commuting case}},
  journal = {Communications in Mathematical Physics},
  volume = {344},
  pages = {915--957},
  year = {2016},
  doi = {10.1007/s00220-016-2641-8},
}

@article{KACR25,
  author = {Kochanowski, J. and Alhambra, {\'A}. M. and Capel, {\'A}. and Rouz{\'e}, C.},
  title = {{Rapid Thermalization of Dissipative Many-Body Dynamics of Commuting Hamiltonians}},
  journal = {Communications in Mathematical Physics},
  volume = {406},
  number = {8},
  pages = {176},
  publisher = {Springer Science and Business Media LLC},
  year = {2025},
  month = {jul},
  doi = {10.1007/s00220-025-05353-y},
  issn = {1432-0916},
}

@article{LR72,
  author = {Lieb, E. H. and Robinson, D. W.},
  title = {{The finite group velocity of quantum spin systems}},
  journal = {Communications in Mathematical Physics},
  volume = {28},
  pages = {251--257},
  year = {1972},
  doi = {10.1007/BF01645779},
}

@article{Lindblad76,
  author = {Lindblad, G.},
  title = {{On the generators of quantum dynamical semigroups}},
  journal = {Communications in Mathematical Physics},
  volume = {48},
  pages = {119--130},
  year = {1976},
  doi = {10.1007/BF01608499},
}

@article{MZ95,
  author = {Majewski, A. W. and Zegarli{\'n}ski, B.},
  title = {{Quantum stochastic dynamics I: Spin systems on a lattice}},
  journal = {Mathematical Physics Electronic Journal},
  volume = {1},
  year = {1995},
  note = {Paper 2, 37 pp.},
  url = {https://web.ma.utexas.edu/mpej/Vol/1/2.pdf},
}

@article{MannHelmuth21,
  author = {Mann, R. L. and Helmuth, T.},
  title = {{Efficient algorithms for approximating quantum partition functions}},
  journal = {Journal of Mathematical Physics},
  volume = {62},
  number = {2},
  pages = {022201},
  publisher = {AIP Publishing},
  year = {2021},
  month = {Feb},
  doi = {10.1063/5.0013689},
  issn = {1089-7658},
}

@article{NR04,
  author = {Neto{\v c}n{\'y}, K. and Redig, F.},
  title = {{Large Deviations for Quantum Spin Systems}},
  journal = {Journal of Statistical Physics},
  volume = {117},
  number = {3-4},
  pages = {521--547},
  publisher = {Springer Science and Business Media LLC},
  year = {2004},
  month = {Nov},
  doi = {10.1007/s10955-004-3452-4},
  issn = {1572-9613},
}

@inproceedings{RS24,
  author = {Ramkumar, Akshar and Soleimanifar, Mehdi},
  title = {{Mixing Time of Quantum {Gibbs} Sampling for Random Sparse {Hamiltonians}}},
  booktitle = {20th Conference on the Theory of Quantum Computation, Communication and Cryptography (TQC 2025)},
  series = {Leibniz International Proceedings in Informatics (LIPIcs)},
  volume = {350},
  pages = {3:1--3:23},
  publisher = {Schloss Dagstuhl -- Leibniz-Zentrum f{\"u}r Informatik},
  year = {2025},
  doi = {10.4230/LIPIcs.TQC.2025.3},
  eprint = {2411.04454},
  archivePrefix = {arXiv},
}

@article{RW96,
  author = {Richter, S. and Werner, R. F.},
  title = {{Ergodicity of quantum cellular automata}},
  journal = {Journal of Statistical Physics},
  volume = {82},
  number = {3-4},
  pages = {963--998},
  publisher = {Springer Science and Business Media LLC},
  year = {1996},
  month = {Feb},
  doi = {10.1007/bf02179798},
  issn = {1572-9613},
}

@article{RFA25,
  author = {Rouz{\'e}, C. and {Stilck Fran{\c c}a}, D. and Alhambra, {\'A}. M.},
  title = {{Efficient thermalization and universal quantum computing with quantum Gibbs samplers}},
  journal = {Nature Physics},
  volume = {22},
  number = {7},
  pages = {1123--1127},
  publisher = {Springer Science and Business Media LLC},
  year = {2026},
  month = {Apr},
  doi = {10.1038/s41567-026-03246-y},
  issn = {1745-2481},
}

@article{RFA26,
  author = {Rouz{\'e}, C. and {Stilck Fran{\c c}a}, D. and Alhambra, {\'A}. M.},
  title = {{Optimal Quantum Algorithm for Gibbs State Preparation}},
  journal = {Physical Review Letters},
  volume = {136},
  number = {6},
  pages = {060601},
  publisher = {American Physical Society (APS)},
  year = {2026},
  month = {Feb},
  doi = {10.1103/lhht-svmn},
  issn = {1079-7114},
}

@misc{RSF26,
  author = {Rouz{\'e}, C. and {Stilck Fran{\c c}a}, D.},
  title = {{Rapid mixing of Gibbs samplers via quantum Dobrushin--Shlosman conditions}},
  year = {2026},
  eprint = {2610.03578},
  archivePrefix = {arXiv},
  primaryClass = {quant-ph},
  url = {https://arxiv.org/abs/2610.03578v1},
}

@book{Wat18,
  author = {Watrous, J.},
  title = {{The Theory of Quantum Information}},
  publisher = {Cambridge University Press},
  year = {2018},
  month = {apr},
  doi = {10.1017/9781316848142},
  isbn = {9781107180567},
}

@article{Weitz05,
  author = {Weitz, D.},
  title = {{Combinatorial Criteria for Uniqueness of Gibbs Measures}},
  journal = {Random Structures \& Algorithms},
  volume = {27},
  number = {4},
  pages = {445--475},
  year = {2005},
  doi = {10.1002/rsa.20073},
}

@misc{SMBB26,
  author = {{\v S}m{\'i}d, {\v S}. and Meister, R. and Berta, M. and Bondesan, R.},
  title = {{Rapid Mixing of Quantum Gibbs Samplers for Weakly-Interacting Quantum Systems}},
  year = {2025},
  eprint = {2510.04954},
  archivePrefix = {arXiv},
  primaryClass = {quant-ph},
  note = {Version 2, revised 19 April 2026},
  url = {https://arxiv.org/abs/2510.04954v2},
}

@article{DLL25,
  author = {Ding, Zhiyan and Li, Bowen and Lin, Lin},
  title = {{Efficient Quantum {Gibbs} Samplers with {Kubo--Martin--Schwinger} Detailed Balance Condition}},
  journal = {Communications in Mathematical Physics},
  volume = {406},
  number = {3},
  pages = {67},
  year = {2025},
  doi = {10.1007/s00220-025-05235-3},
  eprint = {2404.05998},
  archivePrefix = {arXiv},
}

@inproceedings{BLMT24,
  author = {Bakshi, Ainesh and Liu, Allen and Moitra, Ankur and Tang, Ewin},
  title = {{High-Temperature {Gibbs} States are Unentangled and Efficiently Preparable}},
  booktitle = {2024 IEEE 65th Annual Symposium on Foundations of Computer Science (FOCS)},
  pages = {1027--1036},
  publisher = {IEEE},
  year = {2024},
  doi = {10.1109/FOCS61266.2024.00068},
  eprint = {2403.16850},
  archivePrefix = {arXiv},
}

@inproceedings{BC26,
  author = {Bergamaschi, Thiago and Chen, Chi-Fang},
  title = {{Fast Mixing of Quantum Spin Chains at All Temperatures}},
  booktitle = {Proceedings of the 58th Annual ACM Symposium on Theory of Computing},
  publisher = {ACM},
  year = {2026},
  doi = {10.1145/3798129.3800798},
  eprint = {2510.08533},
  archivePrefix = {arXiv},
  note = {Full version: arXiv:2510.08533v2},
  url = {https://arxiv.org/abs/2510.08533v2},
}

@misc{Kim26,
  author = {Kim, Leeseok},
  title = {{Rapid mixing of quantum spin chains at any finite temperature}},
  year = {2026},
  eprint = {2610.01190},
  archivePrefix = {arXiv},
  primaryClass = {quant-ph},
  url = {https://arxiv.org/abs/2610.01190},
}

@misc{CD26,
  author = {Chen, Hongrui and Ding, Zhiyan},
  title = {{Spectral Analysis Beyond the Spectral Gap: Rapid Mixing of Quantum {Gibbs} Samplers in One Dimension}},
  year = {2026},
  eprint = {2610.05848},
  archivePrefix = {arXiv},
  primaryClass = {quant-ph},
  note = {Version 2, revised 21 October 2026},
  url = {https://arxiv.org/abs/2610.05848v2},
}

@article{CKBG25,
  author = {Chen, Chi-Fang and Kastoryano, Michael and Brand{\~a}o, Fernando G. S. L. and Gily{\'e}n, Andr{\'a}s},
  title = {{Efficient quantum thermal simulation}},
  journal = {Nature},
  volume = {646},
  number = {8085},
  pages = {561--566},
  year = {2025},
  doi = {10.1038/s41586-025-09583-x},
}

@misc{GCDK26,
  author = {Gily{\'e}n, Andr{\'a}s and Chen, Chi-Fang and Doriguello, Joao F. and Kastoryano, Michael J.},
  title = {{Quantum generalizations of {Glauber} and {Metropolis} dynamics}},
  year = {2024},
  eprint = {2405.20322},
  archivePrefix = {arXiv},
  primaryClass = {quant-ph},
  note = {Version 2, revised 23 January 2026},
  url = {https://arxiv.org/abs/2405.20322v2},
}

@article{DPR22,
  author = {{De Palma}, Giacomo and Rouz{\'e}, Cambyse},
  title = {{Quantum Concentration Inequalities}},
  journal = {Annales Henri Poincar{\'e}},
  volume = {23},
  number = {9},
  pages = {3391--3429},
  year = {2022},
  doi = {10.1007/s00023-022-01181-1},
  eprint = {2106.15819},
  archivePrefix = {arXiv},
  primaryClass = {quant-ph},
}

@article{CDS15,
  author = {Crawford, Nicholas and {De Roeck}, Wojciech and Sch{\"u}tz, Marius},
  title = {{Uniqueness regime for {Markov} dynamics on quantum lattice spin systems}},
  journal = {Journal of Physics A: Mathematical and Theoretical},
  volume = {48},
  number = {42},
  pages = {425203},
  year = {2015},
  doi = {10.1088/1751-8113/48/42/425203},
  eprint = {1506.00530},
  archivePrefix = {arXiv},
  primaryClass = {math-ph},
}

@article{KT13,
  author = {Kastoryano, Michael J. and Temme, Kristan},
  title = {{Quantum logarithmic {Sobolev} inequalities and rapid mixing}},
  journal = {Journal of Mathematical Physics},
  volume = {54},
  pages = {052202},
  year = {2013},
  doi = {10.1063/1.4804995},
  eprint = {1207.3261},
  archivePrefix = {arXiv},
  primaryClass = {quant-ph},
}

@misc{CRF21,
  author = {Capel, {\'A}ngela and Rouz{\'e}, Cambyse and {Stilck Fran{\c c}a}, Daniel},
  title = {{The modified logarithmic {Sobolev} inequality for quantum spin systems: classical and commuting nearest neighbour interactions}},
  year = {2020},
  eprint = {2009.11817},
  archivePrefix = {arXiv},
  primaryClass = {quant-ph},
  note = {Version 2, revised 3 June 2021},
  url = {https://arxiv.org/abs/2009.11817v2},
}

@article{BCGLPR24,
  author = {Bardet, Ivan and Capel, {\'A}ngela and Gao, Li and Lucia, Angelo and P{\'e}rez-Garc{\'i}a, David and Rouz{\'e}, Cambyse},
  title = {{Entropy Decay for {Davies} Semigroups of a One Dimensional Quantum Lattice}},
  journal = {Communications in Mathematical Physics},
  volume = {405},
  pages = {42},
  year = {2024},
  doi = {10.1007/s00220-023-04869-5},
  eprint = {2112.00601},
  archivePrefix = {arXiv},
  primaryClass = {quant-ph},
}

@article{FU10,
  author = {Fagnola, Franco and Umanit{\`a}, Veronica},
  title = {{Generators of {KMS} Symmetric {Markov} Semigroups on $\mathcal{B}(h)$ Symmetry and Quantum Detailed Balance}},
  journal = {Communications in Mathematical Physics},
  volume = {298},
  pages = {523--547},
  year = {2010},
  doi = {10.1007/s00220-010-1011-1},
  eprint = {0908.0967},
  archivePrefix = {arXiv},
  primaryClass = {quant-ph},
}

@misc{TK15,
  author = {Temme, Kristan and Kastoryano, Michael J.},
  title = {{How fast do stabilizer {Hamiltonians} thermalize?}},
  year = {2015},
  eprint = {1505.07811},
  archivePrefix = {arXiv},
  primaryClass = {quant-ph},
  url = {https://arxiv.org/abs/1505.07811},
}

@misc{LPP25,
  author = {Lucia, Angelo and P{\'e}rez-Garc{\'i}a, David and P{\'e}rez-Hern{\'a}ndez, Antonio},
  title = {{Spectral Gap Bounds for Quantum {Markov} Semigroups via Correlation Decay}},
  year = {2025},
  eprint = {2505.08991},
  archivePrefix = {arXiv},
  primaryClass = {quant-ph},
  url = {https://arxiv.org/abs/2505.08991},
}

@misc{CPS26,
  author = {Capel, {\'A}ngela and P{\'e}rez-Garc{\'i}a, David and Scandi, Matteo},
  title = {{Modified logarithmic Sobolev inequality for 1D non-commuting Hamiltonians}},
  year = {2026},
  eprint = {2610.03683},
  archivePrefix = {arXiv},
  primaryClass = {quant-ph},
  note = {Version 1, submitted 2 October 2026},
  url = {https://arxiv.org/abs/2610.03683v1},
}

@misc{GGW26,
  author = {Gao, Li and Guo, Jingyu and Wang, Sihao},
  title = {{Complete Modified Logarithmic Sobolev Inequalities for one-dimensional quantum Spin Chain}},
  year = {2026},
  eprint = {2610.28600},
  archivePrefix = {arXiv},
  primaryClass = {quant-ph},
  note = {Version 1, submitted 20 October 2026},
  url = {https://arxiv.org/abs/2610.28600v1},
}

@misc{CDL26,
  author = {Chen, Hongrui and Ding, Zhiyan and Lu, Jianfeng},
  title = {{Size-independent Logarithmic Sobolev Inequalities for Quantum Gibbs Samplers}},
  year = {2026},
  eprint = {2610.31571},
  archivePrefix = {arXiv},
  primaryClass = {quant-ph},
  note = {Version 1, submitted 22 October 2026},
  url = {https://arxiv.org/abs/2610.31571v1},
}

@misc{CZZ26,
  author = {Cai, Yiyi and Zhan, Yongtao and Zlokapa, Alexander},
  title = {{Fast mixing of the SYK model at high temperatures}},
  year = {2026},
  eprint = {2610.05953},
  archivePrefix = {arXiv},
  primaryClass = {quant-ph},
  note = {Version 1, submitted 5 October 2026},
  url = {https://arxiv.org/abs/2610.05953v1},
}

@article{AC21,
  author = {Amorim, {\'E}rik and Carlen, Eric A.},
  title = {{Complete positivity and self-adjointness}},
  journal = {Linear Algebra and its Applications},
  volume = {611},
  pages = {389--439},
  year = {2021},
  doi = {10.1016/j.laa.2020.10.038},
  eprint = {2009.05850},
  archivePrefix = {arXiv},
  url = {https://arxiv.org/abs/2009.05850v1},
}

@article{GR24,
  author = {Gao, Li and Rouz{\'e}, Cambyse},
  title = {{Coarse Ricci curvature of quantum channels}},
  journal = {Journal of Functional Analysis},
  volume = {286},
  number = {8},
  pages = {110336},
  year = {2024},
  doi = {10.1016/j.jfa.2024.110336},
  eprint = {2108.10609},
  archivePrefix = {arXiv},
  url = {https://arxiv.org/abs/2108.10609v1},
}

@article{BCLPR21,
  author = {Bardet, Ivan and Capel, {\'A}ngela and Lucia, Angelo and P{\'e}rez-Garc{\'i}a, David and Rouz{\'e}, Cambyse},
  title = {{On the modified logarithmic Sobolev inequality for the heat-bath dynamics for 1D systems}},
  journal = {Journal of Mathematical Physics},
  volume = {62},
  number = {6},
  pages = {061901},
  year = {2021},
  doi = {10.1063/1.5142186},
  eprint = {1908.09004},
  archivePrefix = {arXiv},
  url = {https://arxiv.org/abs/1908.09004v2},
}

@incollection{Martinelli99,
  author = {Martinelli, Fabio},
  title = {{Lectures on {Glauber} dynamics for discrete spin models}},
  booktitle = {Lectures on Probability Theory and Statistics},
  editor = {Bernard, Pierre},
  series = {Lecture Notes in Mathematics},
  volume = {1717},
  pages = {93--191},
  publisher = {Springer},
  address = {Berlin, Heidelberg},
  year = {1999},
  doi = {10.1007/978-3-540-48115-7_2},
}

@article{Davies74,
  author = {Davies, E. B.},
  title = {{{Markovian} master equations}},
  journal = {Communications in Mathematical Physics},
  volume = {39},
  pages = {91--110},
  year = {1974},
  doi = {10.1007/BF01608389},
}

@article{DP25,
  author = {De Palma, Giacomo and Pastorello, Davide},
  title = {{Quantum Concentration Inequalities and Equivalence of the Thermodynamical Ensembles: An Optimal Mass Transport Approach}},
  journal = {Journal of Statistical Physics},
  volume = {192},
  number = {6},
  pages = {87},
  year = {2025},
  doi = {10.1007/s10955-025-03464-3},
  eprint = {2403.18617},
  archivePrefix = {arXiv},
  primaryClass = {math-ph},
  url = {https://arxiv.org/abs/2403.18617v2},
}
    \bibliographystyle{abbrvurl}

\end{document}